\newcolumntype{L}[1]{>{\raggedright\arraybackslash}p{#1}}
\newcolumntype{C}[1]{>{\centering\arraybackslash}p{#1}}
\newcolumntype{R}[1]{>{\raggedleft\arraybackslash}p{#1}}
\long\def\comment#1{}
\newcommand{\nop}[1]{}
\newcommand{\figureCaptionMargin}{\vspace{-1ex}}
\newcommand{\figureBelowMargin}{\vspace{-2ex}}
\newtheorem{theorem}{\bf Theorem}[section]
\newtheorem{lemma}{\bf Lemma}[section]
\newtheorem{example}{\bf Example}
\theoremstyle{remark}
\theoremstyle{definition}
\newtheorem{definition}{\bf Definition}
\newcommand{\entity}[1]{\mathcal{#1}}
\newcommand{\algvar}[1]{\mathcal{#1}}
\newcommand{\constvar}[1]{\mathbb{#1}}
\newcommand{\vectorfont}[1]{\boldsymbol{#1}}
\newcommand{\basicProblem}{PA-TA}
\newcommand{\approximateBasicProblem}{PAA-TA}
\newcommand{\solutionA}{PUCE}
\newcommand{\solutionATotalName}{Private Utility Conflict-Elimination}
\newcommand{\solutionB}{PGT}
\newcommand{\solutionBTotalName}{Private Game Theoretic Approach}
\newcommand{\solutionCMP}{PDCE}
\newcommand{\solutionCMPTotalName}{Private Distance Conflict-Elimination}
\definecolor{EmphColor}{rgb}{0,0,0}
\definecolor{EmphColorB}{rgb}{0,0,1}
\newcommand{\solutionNPATotalName}{Utility Conflict-Elimination}
\newcommand{\solutionNPBTotalName}{Game Theory}
\newcommand{\solutionNPCMPTotalName}{Distance Conflict-Elimination}
\newcommand{\solutionNPGDTotalName}{Greedy}
\newcommand{\EXPSolutionA}{\solutionA{}}
\newcommand{\EXPSolutionB}{\solutionB{}}
\newcommand{\EXPSolutionCMP}{\solutionCMP{}}
\newcommand{\EXPSolutionNPA}{UCE}
\newcommand{\EXPSolutionNPB}{GT}
\newcommand{\EXPSolutionNPCMP}{DCE}
\newcommand{\EXPSolutionNPGD}{GRD}
\def\BibTeX{{\rm B\kern-.05em{\sc i\kern-.025em b}\kern-.08em
    T\kern-.1667em\lower.7ex\hbox{E}\kern-.125emX}}
\begin{document}

\title{Dynamic Private Task Assignment under Differential Privacy
}

\author{\IEEEauthorblockN{Leilei Du}
\IEEEauthorblockA{
\textit{East China Normal University}\\
Shanghai, China \\
leileidu@stu.ecnu.edu.cn}
\and
\IEEEauthorblockN{Peng Cheng}
\IEEEauthorblockA{
	\textit{East China Normal University}\\
	Shanghai, China \\
	pcheng@sei.ecnu.edu.cn}
\and
\IEEEauthorblockN{Libin Zheng}
\IEEEauthorblockA{
\textit{Sun Yat-sen University}\\
Guangzhou, China \\
zhenglb6@mail.sysu.edu.cn}
\and
\IEEEauthorblockN{Wei Xi}
\IEEEauthorblockA{
\textit{Xi'an Jiaotong University}\\
Shaanxi, China \\
xiwei@xjtu.edu.cn}
\and
\IEEEauthorblockN{Xuemin Lin}
\IEEEauthorblockA{
	\textit{Shanghai Jiao Tong University}\\
	Shanghai, China \\
	xuemin.lin@gmail.com}
\and
\IEEEauthorblockN{Wenjie Zhang}
\IEEEauthorblockA{
\textit{The University of New South Wales}\\
Sydney, Australia \\
wenjie.zhang@unsw.edu.au}
\and
\IEEEauthorblockN{Jing Fang}
\IEEEauthorblockA{
	\textit{East China Normal University}\\
	Shanghai, China \\
	jingfang@stu.ecnu.edu.cn}
}

\maketitle

\begin{abstract}
Data collection is indispensable for spatial crowdsourcing services, such as resource allocation, policymaking, and scientific explorations.
However, privacy issues make it challenging for users to share their information unless receiving sufficient compensation.
Differential Privacy (DP) is a promising mechanism to release helpful information while protecting individuals' privacy.
However, most DP mechanisms only consider a fixed compensation for each user's privacy loss.
In this paper, we design a task assignment scheme that allows workers to dynamically improve their utility with dynamic distance privacy leakage.
Specifically, we propose two solutions to improve the total utility of task assignment results, namely Private Utility Conflict-Elimination (PUCE) approach and Private Game Theory (PGT) approach, respectively.
We prove that PUCE achieves higher utility than the state-of-the-art works.
We demonstrate the efficiency and effectiveness of our PUCE and PGT approaches on both real and synthetic data sets compared with the recent distance-based approach, Private Distance Conflict-Elimination (PDCE).
PUCE is always better than PDCE slightly.
PGT is 50\% to 63\% faster than PDCE and can improve 16\% utility on average when worker range is large enough.
\end{abstract}

\begin{IEEEkeywords}
Spatial Crowdsourcing, Differential Privacy
\end{IEEEkeywords}

\newcommand{\mycolor}{\color{black}}
\section{Introduction}
With the popularity of mobile computing, spatial crowdsourcing has emerged as a new paradigm for spatial task solutions involving human participation.
Workers are {encouraged} to share their data with servers in exchange for benefits.
However, sometimes workers are reluctant to share due to vital privacy leakage (e.g., location) which can
lead to extensive attacks such as identity theft, physical surveillance and stalking and leakage of other sensitive information (e.g., individual health status, racial types, and religion views).
For example, in ride-sharing, if a taxi driver submits his locations to the platform for task requests over a period of time (i.e., a month), a malicious platform attacker is able to guess the driver's range of activity and surveil him {or her.}

Differential Privacy (DP) \cite{DBLP:conf/icalp/Dwork06} is often used to protect individual data.
It trades off utility and privacy by well designing the privacy budget ($\epsilon$).
However, different people have different demands for {both} utility and privacy.
For example, some confidential agencies pay great attention to privacy.
They would rather {gain} high-level privacy protection by sacrificing some utility.
In ride-sharing, some taxi drivers would like to sacrifice some personal location privacy for higher incomes {by} serving more passengers.
{Users need} to adjust their utility by altering the privacy protection level themselves.

\begin{figure}[t!]
	\centering\vspace{-2ex}
	\includegraphics[width=0.2\textwidth]{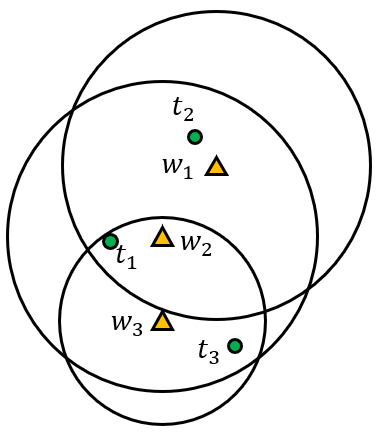}
	\caption{Workers' locations with service areas and tasks' locations.}\label{fig_Introduction_A}
\end{figure}

In this paper, we propose a dynamic private task assignment scheme such that workers can trade their location privacy for higher utilities. Consider the motivation example as follows:

\begin{example}\label{example_1}
As shown in Figure~\ref{fig_Introduction_A}, there are three workers: $w_1$, $w_2$ and $w_3$, and three tasks: $t_1$, $t_2$  and $t_3$.
Each worker $w_j$ competes for tasks with smaller distances.
However, in order to protect their locations, all workers employ a differential privacy mechanism to obfuscate their distances to tasks, and send the obfuscated distances to the server. Assume the server will assign workers to tasks based on their reported obfuscated distances to minimize the total distance as: \{$\langle t_1, w_3 \rangle$, $\langle t_2, w_1 \rangle$, $\langle t_3, w_2 \rangle$\}.
We also assume that the server is untrusted in this example, which means the obfuscated distances on the server can be accessed by workers if they want. Then, worker $w_3$ can sacrifice some {of} his location privacy to report a closer obfuscated distance with task $t_3$. The updated  assignment will be \{$\langle t_1, w_2 \rangle$, $\langle t_2, w_1 \rangle$, $\langle t_3, w_3 \rangle$\} with a smaller total distance.
\end{example}

In this paper, we study the privacy-aware task assignment (PA-TA) problem in spatial crowdsourcing, where workers can dynamically adjust their privacy protection levels for higher utilities. Specifically, we assume a privacy setting where spatial crowdsourcing workers are curious and want to protect their location privacy thus only report obfuscated distances to the server during the task assignment phase without relying on a trusted server. Similar to the existing location protection studies in spatial crowdsourcing \cite{DBLP:conf/icde/ToS018, DBLP:journals/tmc/WangHLWWYQ19, li2021privacy}, we assume the server is untrusted, and thus cannot guarantee the security of received obfuscated distances, which means that other entities (e.g., curious workers) {have} access to the obfuscated distances from the server. In this paper, we handle the task assignment in a multi-proposal enabled batch-based style, where in a given time window each worker can propose to an available task for {multiple} times with different obfuscated distances to improve their utilities until the end of the time window. We first formally define the PA-TA problem. To improve the accuracy of comparing obfuscated distances, we propose a new comparison method, \emph{Partial Probability Comparison Function} (PPCF), which can resolve the comparison between a real distance and an obfuscated distance. We prove our PPCF is better than the existing method, \emph{Probability Compare Function} (PCF) \cite{DBLP:journals/tmc/WangHLWWYQ19}, {both theoretically and practically.} 
To solve PA-TA, we propose two solutions, namely \solutionATotalName{} (\solutionA{}) and \solutionBTotalName{} (\solutionB{}). \solutionA{} is a greedy-based algorithm.
\solutionB{} is on a game-theoretic approach and can achieve higher accuracy than \solutionA{} when the worker range is larger than 1.4 on synthetic data sets.
The contributions of this paper are as follows.

{(1) We formally define the privacy-aware task assignment problem to {support} dynamic privacy budget adjustment for workers in spatial crowdsourcing in Section \ref{sec:problem_definition}.}

{(2) We propose a greedy-based algorithm, namely \solutionATotalName{} (\solutionA{}), in Section~\ref{SolutionA}, and a game-theoretic approach, namely \solutionBTotalName{} (\solutionB{}), in Section~\ref{SolutionB}.

(3) We test our approaches on both synthetic and real data sets to show their efficiency and effectiveness in Section~\ref{Experiment}.

\section{Related Work}\label{RelatedWork}

\noindent\textbf{Task Assignment in Spatial Crowdsourcing.}
Most task assignments in spatial crowdsourcing {focus} on maximizing total utility.
Deng et al. \cite{DBLP:conf/gis/DengSD13} define the total utility as the total number of performed tasks.
Zhang et al. \cite{DBLP:conf/kdd/ZhangHMWZFGY17} maximize the total acceptance ratio of workers.
Zhao et al. \cite{DBLP:conf/aaai/ZhaoXSTZZ19} propose algorithms to maximize the total rewards of the assigned tasks.
Tong et al. \cite{DBLP:conf/icde/TongSDWC16} and Wang et al. \cite{DBLP:conf/icde/WangTLXXL19} maximize the total expected rewards of the assigned tasks. The conventional methods to achieving optimized total utility are exact methods \cite{DBLP:conf/gis/KazemiS12,DBLP:journals/tsas/ToSK15} and greedy methods \cite{DBLP:journals/tkde/SheTCC16,DBLP:journals/tkde/ChengLCHZ16,DBLP:conf/stoc/KarpVV90}.
In order to achieve higher utilities, game-theoretic methods for task assignment are proposed recently.
Ni et al. \cite{DBLP:conf/icde/NiCC020} declare that the tasks may have some dependencies among them and give the definition of dependency-aware spatial crowdsourcing (DA-SC). They propose a game-theoretic approach to solve DA-SC, and {the experiment demonstrates that the game-theoretic approach is superior to the greedy algorithms.}
Zhao et al. \cite{DBLP:conf/icde/ZhaoZGYPJ21} focus on the problem of Fairness-aware Task Assignment, which is to minimize the payoff difference among workers and to maximize the average worker payoff.
They model the problem as a multiplayer game and propose two game-theoretic methods.

\noindent\textbf{Privacy Protection in Spatial Crowdsourcing.}
Differential Privacy \cite{DBLP:conf/icalp/Dwork06} is a golden tool for privacy protection and private data release.
According to the existing of the trusted entity, it can be classified into two categories:
1) Central Differential Privacy (CDP) \cite{DBLP:journals/fttcs/DworkR14}; 2) Geo-Indistinguishability (Geo-I) \cite{DBLP:conf/ccs/AndresBCP13} and Local Differential Privacy (LDP) \cite{DBLP:conf/focs/DuchiJW13}.

To et al. \cite{DBLP:journals/pvldb/ToGS14} adopt \emph{Private Spatial Decomposition} (PSD) \cite{DBLP:journals/sensors/KimCK18} to create obfuscated data releases of workers and devise a geocast mechanism for task request dissemination to protect the privacy of workers' locations.
However, it needs a trusted entity to help sanitize workers' location data.
Wang et al. \cite{DBLP:conf/icdm/WangZYLM16} study \emph{Bayesian attack} \cite{DBLP:conf/ccs/AndresBCP13, DBLP:conf/ccs/BordenabeCP14} on sparse mobile crowdsourcing and propose a privacy-preserving framework to reduce the data quality loss caused by differential location obfuscation. They provide the method to get the optimal location obfuscation matrix satisfying $\epsilon$-differential privacy. It can be used to protect workers' location without relying on the trust entity.
To et al. \cite{DBLP:conf/icde/ToS018} propose a privacy-aware framework that protects the privacy of both tasks and workers in spatial crowdsourcing without any trusted entity. It employs Geo-I to transform both tasks' and workers' locations into obfuscated locations. The platform can identify a set of candidate workers for the task requester through these obfuscated locations without knowing the real locations of both workers and the task. 
Wang et al. \cite{DBLP:journals/tmc/WangHLWWYQ19} also assume that no trusted entity exist, but they propose a method that achieves local differential privacy.
These works get rid of reliance on trusted third parties.
However, they only protect individual privacy without inspiring tasks or workers to participate in the platform.

\noindent\textbf{Private Data Compensation.}
In order to motivate requesters and workers to join spatial crowdsourcing platforms while protecting their location privacy, we need a connection between their utility and privacy cost.
Jin and Zhang \cite{DBLP:journals/ton/JinZ18} provide a framework for spectrum-sensing participants selection, which achieves differential location privacy, approximate social cost minimization, and truthfulness simultaneously.
Ghosh et al. \cite{DBLP:conf/sigecom/GhoshR11} model the utility of competing agents considering privacy cost.
They hold the privacy cost related to some unknown quantities $v$ and suppose the privacy cost is changing linearly with privacy budget $\epsilon$ ($\epsilon v$).
Nissim et al. \cite{DBLP:conf/sigecom/NissimOS12} argue that $\epsilon v$ should be the upper bound rather than the total privacy cost. They propose a privacy-aware mechanism with $v$ below a certain threshold.
Xiao \cite{DBLP:conf/innovations/Xiao13} proposes two models for quantifying an agent's privacy cost using mutual information and max divergence, respectively.
However, it requires the privacy variable $\delta>0$.
Wang et al. \cite{DBLP:journals/tmc/WangHLWWYQ19} propose a personalized privacy-preserving task allocation method. They define \emph{Probability Compare Function} (PCF) to compare two noise values with the acknowledgment of their privacy budget. Besides, they propose \emph{Probabilistic Winner Selection Mechanism}  to minimize the total travel distance and \emph{Vickrey Payment Determination Mechanism} to determine the appropriate payment to each winner of workers satisfying truthfulness, profitability, and probabilistic individual rationality. However, all workers can only have a fixed budget for each task and cannot dynamically compete for tasks with higher utilities.

\section{Problem definition}
\label{sec:problem_definition}

\begin{table}[t!]\vspace{-3ex}
	\begin{center}
		{\small\scriptsize  \vspace{1ex}
			\caption{\small Notations.} \label{varible_description}
			\begin{tabular}{c|c}
                \hline
                \textbf{Variable}                      & \textbf{Description}                                     \\ \hline
                $t_i$                         & the $i$-th   task\\
                $w_j$                         & the $j$-th   worker\\
                $d_{i,j}$                     & the real distance from $t_i$   to $w_j$\\
                $\hat{d}_{i,j}$               & the obfuscated distance from   $t_i$   to $w_j$ \\
                $\tilde{d}_{i,j}$             & the effective obfuscated distance from   $t_i$   to $w_j$              \\
                $\vectorfont{\epsilon_{i,j}}$ & the privacy budget vector owned   by $w_j$   to propose to $t_i$\\
                $\epsilon_{i,j}^{(u)}$        & the $u$-th   element in $\vectorfont{\epsilon_{i,j}}$\\
                $\tilde{\epsilon}_{i,j}$      & the effective privacy budget\\
                $\vectorfont{b_{i,j}}$        & the state vector corresponding to $\vectorfont{\epsilon_{i,j}}$\\
                $b_{i,j}^{(u)}$               & the $u$-th   element in $\vectorfont{b_{i,j}}$ recording whether $\epsilon_{i,j}^{(u)}$ has been used \\
                $s_{i,j}$                     & the state recording whether $t_i$ matches $w_j$ \\ \hline
            \end{tabular}
		}\vspace{-1ex}
	\end{center}
\end{table}

\begin{definition}[Spatial Tasks]\label{SpatialTasks} Let {$t_i$} denote a task.
Its location and value are denoted as {$l_i$} and {$v_i$}, respectively.
\end{definition}

Here,
$v_i$ is an inherent property of $t_i$, and a worker will gain $v_i$ revenue if he serves $t_i$.

\begin{definition}[Spatial Workers]\label{SpatialWorkers} Let {$w_j$} denote a worker located at {$l_j$}.
His service area is denoted as {$A_j$} with a service radius {$r_j$}.
\end{definition}

$A_j$ is a circle area centered at $l_j$ with radius $r_j$ (also called \emph{worker range} in the experiment). Let set $R_j$ denote all tasks in $A_j$.
$w_j$ only proposes to those tasks in $R_j$.

To make the distance and the privacy budget comparable with the task value, we define the Distance Value Function ($f_d$) in Definition~\ref{DVF} and Privacy Budget Function ($f_p$) in Definition~\ref{PBVF} to unify the measurement.

\begin{definition}[Distance Value Function, $f_d$]\label{DVF}
Given a distance $d\in R^{*}$, a function $f_d: R^{*}\to R^{*}$ is called \emph{distance value function}, which takes $d$ as the input and outputs a value $v$. It satisfies that $f_d(0)=0$, $f_d^{'}(\cdot)\geq 0$.
\end{definition}

\begin{definition}[Privacy Budget Value Function, $f_p$]\label{PBVF} Given a privacy budget $\epsilon\in R^{*}$, $f_p: R^{*}\to R^{*}$ is a \emph{privacy budget value function}, which takes $\epsilon$ as input and outputs a value $v$. It satisfies that $f_p(0)=0$, $f_p^{'}(\cdot)\geq 0$ and $\forall \epsilon_1, \epsilon_2 \in R, f_p(\epsilon_1)+f_p(\epsilon_2)=f_p(\epsilon_1+\epsilon_2)$.
\end{definition}

$f_d$ transforms a distance value into a task value. $f_p$ transforms a privacy budget value into a task value.
$f_d$ and $f_p$ are defined as monotone increasing functions and $f_d(0)= f_p(0)=0$.
Besides, $f_p$ is a linear function in this paper and we will consider other types of functions in the future work.

\begin{definition}[Privacy-aware Task Assignment Problem]\label{PrblmDf} Given a set of tasks $\entity{T}$, a set of workers $\entity{W}$, and a set of obfuscated worker-and-task distances {$\{\hat{d}_{i,j}|i\in[m],j\in[n]\}$}, where each $\hat{d}_{i,j}$ is added with a noise {$\eta_{i,j}$} subjecting to distribution {$DF(\epsilon_{i,j})$}, a \basicProblem{} problem is to find a match $M$ between workers and tasks subject to the working area constraint of workers, such that

{\scriptsize\begin{equation}\notag
    \begin{aligned}
    \textrm{max}\;\; & \sum_{t_i\in \entity{T}}\sum_{w_j\in\entity{W}} (s_{i,j}\cdot (v_{i}  - f_d(d_{i,j})) - f_p(\vectorfont{b_{i,j}}\cdot\vectorfont{\epsilon_{i,j}})) &\\
    s.t.\;\; & \sum_{t_i\in\entity{T}}s_{i,j}\leq 1, \;\;\;\; \forall i = 1,2,...,m \\
             & \sum_{w_j\in\entity{W}}s_{i,j}\leq 1, \;\;\;\; \forall j = 1,2,...,n \\
             & \sum_{z\in Z}b_{i,j}^{(z)}\leq Z, \;\;\;\; \forall z = 1,2,...,Z \\
             & s_{i,j}, b_{i,j}\in\{0,1\},\;\forall i = 1,2,...,m;\forall j = 1,2,...,n
    \end{aligned}
\end{equation}}
where {$s_{i,j}$} is the matching state representing whether task {$t_i$} is allocated to worker {$w_j$}. {$s_{i,j}=1$}, if {$t_i$} is allocated to {$w_j$};  otherwise, $s_{i,j}=0$.
$v_i$ is the value of task $t_i$.
{$f_d$} is a Distance Value Function transforming distance to value cost.
{$f_p$} is a Privacy Budget Value Function transforming privacy cost to value cost.
{$\vectorfont{\epsilon_{i,j}}=\langle\epsilon_{i,j}^{(1)},...,\epsilon_{i,j}^{(Z)}\rangle$} is the privacy budget vector between task {$t_i$} and worker {$w_j$}, where {$\epsilon_{i,j}^{(u)}$}({$u\in Z$}) stands for the {$u$}-th proposal of worker {$w_j$} to task {$t_i$}.
{$\vectorfont{b_{i,j}}=\langle b_{i,j}^{(1)},...,b_{i,j}^{(Z)}\rangle$} is the state vector corresponding to {$\vectorfont{\epsilon_{i,j}}$}.
Take {$\vectorfont{b_{1,2}}=\langle 1,1,0,0,0\rangle$} as an example.
It means in the total competition, {$w_2$} can propose to {$t_1$} five times and has already proposed twice with the privacy leakage {$\epsilon_{1,2}^{(1)}$} and {$\epsilon_{1,2}^{(2)}$}.
\end{definition}
The objective of \basicProblem{} is to find a one-to-one match that maximizes the total profit on the platform.
	In the objective function, there are three important parts to construct the matching profit between $t_i$ and $w_j$: task value $v_i$, distance value cost $f_d(d_{i,j})$ and privacy cost $f_p(\vectorfont{b_{i,j}}\cdot\vectorfont{\epsilon_{i,j}})$.
	We model the matching profit as the linear combination of the three parts.
	Note that, the privacy cost is concerned for the process of ``$w_j$ proposing to $t_i$" but not for the final matching state.
	Thus, $f_p(\vectorfont{b_{i,j}}\cdot\vectorfont{\epsilon_{i,j}})$ is not affected by $s_{i,j}$.

We give some of the frequently used variables in Table~\ref{varible_description}.

\section{Review of Conflict Elimination Algorithm}\label{TechniquesConceptions}

Conflict Elimination Algorithm (CEA) \cite{DBLP:journals/tmc/WangHLWWYQ19} is a related work that can resolve the winner conflict problem and can be used as a subroutine in our proposed algorithm, thus we first quickly review CEA. Here, workers are regarded as competitors. When there are more than one worker competing for one task, there will be a conflict, called winner conflict. The problem of resolving all these conflicts is called winner conflict problem.

Given all distances from each task-worker pair, CEA constructs the distance rank matrix $A_{m\times n}=(a_{i,k})_{m\times n}$ where $a_{i,k}$ stands for the index of the worker who is the $k$-th nearest from $t_i$.
For example, $a_{i,k}=j$ means $w_j$ is the $k$-th nearest worker of $t_i$.

For any conflict worker $w_c$ selected by $\varphi$ tasks, CEA allocates only one task to $w_c$ and finds another candidate other than $w_c$ for each of the rest $\varphi-1$ conflict tasks. Thus, for each conflict worker $w_c$, there will be $\varphi$ candidate distance choices as shown in equation~\ref{equal_conflict}:
{\scriptsize\begin{equation}
    \label{equal_conflict}
    \begin{split}
        \left\{
            \begin{array}{l}
                C_1: D_{c_1}=D(a_{c_1,1})+D(a_{c_2,2})+...+D(a_{c_\varphi,2})\\
                C_2: D_{c_2}=D(a_{c_1,2})+D(a_{c_2,1})+...+D(a_{c_\varphi,2})\\
                ...\\
                C_\varphi: D_{c_\varphi}=D(a_{c_1,2})+D(a_{c_2,2})+...+D(a_{c_\varphi,1})
            \end{array}
        \right.
    \end{split}
\end{equation}}
where {\scriptsize$C_u (1\leq u\leq\varphi)$} stands for the $u$-th solution: allocating  $w_c$ to $t_{c_u}$ and other tasks are allocated to the successive workers.

To choose the best solution from $\varphi$ choices in equation~\ref{equal_conflict}, we need to compare four distance values. For example, to compare $C_u$ and $C_v$ $(1\leq u,v\leq\varphi)$, we need to compare $D(a_{c_u,1})+D(a_{c_v,2})$ with $D(a_{c_v,1})+D(a_{c_u,2})$.

If the distances are obfuscated distances, we have to compare four Laplace random variables.
In CEA, it supposes that the difference between the travel distances for different tasks is relatively small for the same worker (i.e., {\scriptsize$D(a_{c_u,1})\simeq D(a_{c_v,1})$}).
Then, CEA only needs to compare two Laplace random variables, which can be calculated by \emph{Probability Compare Function} \cite{DBLP:journals/tmc/WangHLWWYQ19}.

\begin{definition}[Probability Compare Function \cite{DBLP:journals/tmc/WangHLWWYQ19}]\label{def_PCF} Given two values $d_a$ and $d_b$ with their obfuscated values {$\hat{d}_a=d_a+Lap(x,1/\epsilon_a)$} and {$\hat{d}_b=d_b+Lap(x,1/\epsilon_b)$} ({$Lap(x,y)$} is a random variable drawn from Laplace distribution with parameters {$x,y$}), a function {$f:$} {$R^4\to[0,1]$} is called a probability compare function (PCF) if $PCF(\hat{d_a},\hat{d_b},\epsilon_a,\epsilon_b)=\textrm{Pr}[d_a<d_b]$.
\end{definition}

For Example, suppose there are 3 tasks and 3 workers, their distance rank matrix is shown in Table~\ref{tab_DRM}. Each element in the table stands for the worker and his relative distance to the corresponding task.

For $w_3$, both $t_2$ and $t_3$ will choose him first.
Thus $w_3$ is a conflict worker.
We have $C_1:D_{2}=D(a_{2,1})+D(a_{3,2})$ and $C_2: D_{3}=D(a_{2,2})+D(a_{3,1})$.
To make a choice between $C_1$ and $C_2$ (choose the minimal one), it supposes $D(a_{2,1})\simeq D(a_{3,1})$, and thus only needs to compare $D(a_{3,2})$ with $D(a_{2,2})$.
Since $D(a_{2,2})<D(a_{3,2})$, $C_2$ is selected.
}

\begin{table}[t!]
\caption{Distance rank matrix.}
\label{tab_DRM}
\centering
\scalebox{1}{
\begin{tabular}{c|ccc}
\hline
\textbf{Task/Rank} & $1$ & $2$ & $3$ \\ \hline
$t_1$              & $w_1$ (9.06)     &$w_2$ (9.85) & $w_3$ (12.04) \\
$t_2$              & $w_3$ (2.09)  &$w_1$ (10.44) & $w_2$ (12.59) \\
$t_3$              & $w_3$ (2.00)  &$w_2$ (11.28) & $w_1$ (18.87) \\ \hline
\end{tabular}
}
\end{table}

\section{\solutionATotalName{} (\solutionA{})}\label{SolutionA}
A direct method to solve our matching problem is collecting all workers' proposals to tasks with privacy budgets and obfuscated distances and using the Hungarian algorithm to get the optimal matching. Here, the Hungarian matching algorithm \cite{DBLP:books/daglib/0022248}, also called the Kuhn-Munkres algorithm, is one classical method to exactly solve maximum bipartite matching problem with the time complexity of $O(n^3)$, where $n$ is the number of vertices in either part of the bipartite graph.
However, to use the Hungarian algorithm, we have to compare the path length calculated by summing many obfuscated distances, which needs complex comparisons and has low accuracy. In this section, we propose a private utility conflict-elimination (PUCE) algorithm to solve PA-TA problem. Due to each worker can propose to multiple tasks in each round, PUCE greedily chooses the worker-and-task pair that maximizes the subjective function of PA-TA.

\subsection{Comparison and Estimation of Obfuscated Distances}\label{comparisonAndEOD}

Before {introducing}  PUCE algorithm, we first explain three necessary techniques {for solving:}
1) how to calculate a suitable obfuscated distance when there is a series of obfuscated distances for a given task and a given worker;
2) how to compare a real distance with an obfuscated distance;
3) how to compare two utilities when knowing the obfuscated distances.

In this paper, according to the objective function of PA-TA, we define the utility of worker $w_j$ conducting task $t_i$ as:

{\small\begin{equation}\label{worker_utility}
	U_j(i) = v_i - f_d(d_{i,j}) - \sum_{t_i\in \entity{T}} f_p(\vectorfont{b_{i,j}}\cdot\vectorfont{\epsilon_{i,j}})
\end{equation}}

\noindent\textbf{Effective Obfuscated Distance and Effective Privacy Budget.}
In the process of our task assignment, $w_j$ may propose to $t_i$ many times, which means $w_j$ will submit more than one obfuscated distance $\hat{d}_{i,j}$ to the server. For the server, it needs to determine an obfuscated distance (we call it \emph{effective obfuscated distance}) for $d_{i,j}$ to make comparison. For other workers, they also need the \emph{effective obfuscated distance} to compare with the distances of themselves.
Thus, we need a method to calculate the \emph{effective obfuscated distance} in a series of obfuscated distances and ensure the \emph{effective obfuscated distance} supports comparison (i.e., supporting PCF).

We first adopt \emph{maximum likelihood estimation} (MLE) \cite{myung2003tutorial} to get a distance interval $\check{d}$ from a worker $w$'s release set $\vectorfont{DE}=\{(\hat{d}_1,\epsilon_1), (\hat{d}_2,\epsilon_2),...,(\hat{d}_u,\epsilon_u)\}$ for a task $t$.
Let $\vectorfont{DE}.\vectorfont{\hat{d}}$ denote the set $\{\hat{d}_1, \hat{d}_2..., \hat{d}_u\}$ in $\vectorfont{DE}$.
Let $L(X)=L(\hat{d}_1, \hat{d}_2,..., \hat{d}_u; X) = \prod_{k=1}^{u}\textrm{Pr}[\hat{d}_k; X]$, where $\textrm{Pr}[\hat{d}_k; X]$ is the probability function of $Lap(\epsilon_k)$.
When the server gets $\vectorfont{DE}$, it calculates the estimation of $d$ as follows.
{\scriptsize\begin{equation}\notag
		\begin{aligned}
			\check{d} &= \textrm{arg max}_d \prod_{k=1}^{u} \frac{\epsilon_k}{2}\textrm{exp}(-|\hat{d}_k-d|\cdot\epsilon_k)\\
			&= \textrm{arg min}_d \sum_{k=1}^{u} \epsilon_k\cdot|\hat{d}_k-d|.\\
		\end{aligned}
\end{equation}}\vspace{-2ex}

The value of $\check{d}$ is all points on a line segment.
We limit the domain of $d$ in $\vectorfont{DE}.\vectorfont{\hat{d}}$ to get the only estimation of $d$ (supporting comparison).
This estimation of $d$ is the \emph{effective obfuscated distance}, and we denote it as $\tilde{d}$.
We call the corresponding privacy budget (denoted by $\tilde{\epsilon}$) of $\tilde{d}$ in the pair as \emph{effective privacy budget} and call the pair $(\tilde{d},\tilde{\epsilon})$ as \emph{effective distance-budget pair}.

	For example, suppose $w_1$ releases 3 pairs of obfuscated distance and privacy budget to $t_1$: $\vectorfont{DE}=\{(0.1, 0.2), (0.2, 0.9), (0.3, 0.1)\}$. Then we can calculate the effective distance-budget pair as $(\tilde{d}=0.2,\tilde{\epsilon}=0.9)$.

\noindent\textbf{Partial Probability Compare Function (PPCF).}
If $w_{j_1}$ want to compare his distance from himself to $t_i$ with the effective obfuscated distance $\hat{d}_{i,j_2}$ of $w_{j_2}$ to $t_1$, $w_{j_1}$ can utilize the real distance $d_{i,j_1}$ instead of $\hat{d}_{i,j_1}$ or $\tilde{d}_{i,j_1}$ to achieve a more accurate comparison result.
Thus, we need a method for the comparison between a real distance and an obfuscated distance.
Suppose there are two values $d_i$ and $d_j$. The obfuscated value of $d_j$ is $\hat{d}_j$, which is calculated by adding noise $\eta_j$ drawn from a type of distribution $DF(\epsilon_j)$.
Then, we have
{\scriptsize\begin{align}
		\hat{d}_j &= d_j + \eta_j, \;\;\; \eta_j\sim DF(\epsilon_j),\notag\\
		\textrm{Pr}[d_i<d_j] &= \textrm{Pr}[d_i<\hat{d}_j-\eta_j]\notag\\
		&= \textrm{Pr}[\eta_j<\hat{d}_j-d_i].\notag
\end{align}}

Let $f(x)$ be the probability density function of $\eta_j$, then
{\scriptsize\begin{equation}\notag
		\begin{aligned}
			\textrm{Pr}[d_i<d_j] &= \int_{-\infty}^{\hat{d}_j-d_i}f(\eta_j)d\eta_j.
		\end{aligned}
\end{equation}}\vspace{-2ex}

Similar to PCF, we define PPCF$(d_i,\hat{d}_j,\epsilon_j) = \textrm{Pr}[d_i<d_j]$.
If the distribution of $DF(\epsilon_j)$ is symmetric about the y-axis (e.g., Laplace distribution), then
{\scriptsize\begin{equation}\label{PPCF_equation}
		\begin{aligned}
			\textrm{PPCF}(d_i,\hat{d}_j,\epsilon_j)>\frac{1}{2}\Leftrightarrow d_i<\hat{d}_j.
		\end{aligned}
\end{equation}}\vspace{-2ex}

Our PPCF is better than PCF as shown in Theorem~\ref{thrm:PPCF_PCF}. Please refer to the details of the proof in Appendix A.

\begin{theorem}\label{thrm:PPCF_PCF}
	For any given distance $d_x, d_y, \epsilon_x, \epsilon_y$ satisfying $d_x<d_y$. Let $\eta_x\sim Lap(0,1/\epsilon_x), \eta_y\sim Lap(0,1/\epsilon_y)$. Let $\hat{d}_x = d_x + \eta_x, \hat{d}_y = d_y + \eta_y$. Then $Pr[PCF(\hat{d}_x,\hat{d}_y,\epsilon_x,\epsilon_y)>\frac{1}{2}] \leq Pr[PPCF(d_x, \hat{d}_y, \epsilon_y)>\frac{1}{2}]$.
\end{theorem}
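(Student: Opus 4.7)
The plan is to (i) reduce the two ``$\text{PCF}>\tfrac12$'' and ``$\text{PPCF}>\tfrac12$'' events to clean events involving only the obfuscated variables, and then (ii) prove the resulting inequality between probabilities by a symmetry+unimodality argument on the Laplace density.

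First I would argue, exactly as the excerpt does for PPCF in \eqref{PPCF_equation}, that when the noise distributions are symmetric about $0$, the event $\text{PCF}(\hat d_x,\hat d_y,\epsilon_x,\epsilon_y)>\tfrac12$ is equivalent to $\hat d_x<\hat d_y$, and the event $\text{PPCF}(d_x,\hat d_y,\epsilon_y)>\tfrac12$ is equivalent to $d_x<\hat d_y$. This reduces the theorem to showing, for $c:=d_y-d_x>0$,
\begin{equation}\notag
\Pr[\hat d_x<\hat d_y]\;\le\;\Pr[d_x<\hat d_y].
\end{equation}
Writing $\hat d_x=d_x+\eta_x$ and $\hat d_y=d_y+\eta_y$ and using the symmetry of $\eta_y$, the right-hand side equals $F_{\eta_y}(c)$, while the left-hand side equals $F_{W}(c)$ where $W:=\eta_y-\eta_x$ (using that $W$ is symmetric around $0$, since $\eta_x$ and $\eta_y$ are independent symmetric Laplaces). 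So the whole problem collapses to the intuitive statement ``adding independent symmetric noise $-\eta_x$ to $\eta_y$ can only make its CDF at a positive point smaller,'' i.e., $F_W(c)\le F_{\eta_y}(c)$ for $c>0$.

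Next I would convert this into a pointwise inequality on the Laplace CDF. Conditioning on $\eta_x$ and exploiting the symmetry of $\eta_x$ via pairing $t$ with $-t$,
\begin{equation}\notag
F_{\eta_y}(c)-F_W(c)=\int_{0}^{\infty}f_{\eta_x}(t)\bigl[\,2F_{\eta_y}(c)-F_{\eta_y}(c+t)-F_{\eta_y}(c-t)\,\bigr]\,dt.
\end{equation}
Thus it suffices to show that for every $c>0$ and $t\ge 0$,
\begin{equation}\notag
F_{\eta_y}(c+t)-F_{\eta_y}(c)\;\le\;F_{\eta_y}(c)-F_{\eta_y}(c-t),
\end{equation}
equivalently $\Pr[c<\eta_y<c+t]\le\Pr[c-t<\eta_y<c]$.

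Finally, I would prove this last inequality by the fact that the Laplace density $f_{\eta_y}(x)=\tfrac{\epsilon_y}{2}e^{-\epsilon_y|x|}$ is symmetric and strictly decreasing in $|x|$: for every $u\in(0,t)$ and $c>0$ one checks $|c-u|\le|c+u|$ in both cases $u\le c$ and $u>c$, so $f_{\eta_y}(c-u)\ge f_{\eta_y}(c+u)$; integrating over $u\in(0,t)$ and then integrating again over $t$ against $f_{\eta_x}$ gives $F_{\eta_y}(c)-F_W(c)\ge 0$, which is the desired bound. The main obstacle I expect is not any of the manipulations but justifying the first step cleanly: namely that $\text{PCF}>\tfrac12\Leftrightarrow\hat d_x<\hat d_y$ requires a separate symmetry calculation on two independent Laplaces (rather than the one-variable calculation that suffices for PPCF), and I would handle it by writing $\text{PCF}$ as a convolution integral and invoking the symmetry of $\eta_x-\eta_y$ about $0$, mirroring the argument that produced \eqref{PPCF_equation}.
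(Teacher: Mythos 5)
Your proposal is correct, and while the reduction step is the same as the paper's, your proof of the core inequality takes a genuinely different route. Like the paper, you first establish that $\mathrm{PCF}(\hat d_x,\hat d_y,\epsilon_x,\epsilon_y)>\tfrac12\Leftrightarrow \hat d_x<\hat d_y$ (this is exactly the paper's Lemma on PCF, proved via the symmetry of the joint Laplace density about both axes) and that $\mathrm{PPCF}>\tfrac12\Leftrightarrow d_x<\hat d_y$, so both arguments collapse to showing $\Pr[\hat d_x<\hat d_y]\le\Pr[d_x<\hat d_y]$. From there the paper parametrizes both sides by $s=d_y-d_x$, computes the derivatives $\partial F/\partial s$ and $\partial G/\partial s$ in closed form for the Laplace case, locates the unique crossing point $s=\frac{1}{\epsilon_x-\epsilon_y}\ln\frac{\epsilon_x}{\epsilon_y}$ of the two derivatives, and invokes a single-crossing ``equal total area'' lemma to conclude $F(s)\le G(s)$. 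You instead condition on $\eta_x$, pair $t$ with $-t$ using its symmetry, and reduce everything to the midpoint inequality $F_{\eta_y}(c+t)-F_{\eta_y}(c)\le F_{\eta_y}(c)-F_{\eta_y}(c-t)$ for $c>0$, which follows from $|c-u|\le|c+u|$ and the fact that the Laplace density decreases in $|x|$. Your route is more elementary and strictly more general: it needs only that $\eta_x$ is symmetric and $\eta_y$ is symmetric and unimodal, with no Laplace-specific algebra. It also quietly avoids a defect in the paper's computation, namely that the closed-form derivative ratio has $\epsilon_x-\epsilon_y$ in a denominator and so does not cover the case $\epsilon_x=\epsilon_y$ without a separate limiting argument. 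What the paper's calculation buys in exchange is the explicit location of the crossing point, which is quantitatively informative but not needed for the stated theorem.
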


\noindent\textbf{Comparison Transformation from Utility to Distance.}
After receiving proposals of workers, the server needs to eliminate conflict among workers for each task.
We can easily use CEA directly to choose only one worker for each task.
However, in CEA, the comparison is based on obfuscated distances rather than utility functions, which does not satisfy our optimized goal.
If we use the utility directly as the comparison object, the server will know the utility value in each round, which leaks the real distance between tasks and workers.

\begin{algorithm}[t!]\scriptsize
	\DontPrintSemicolon
	\caption{\small WorkerProposal}
	\label{alg_worker_propose_multi}
	\KwIn{Not winning worker set $NWW$}
	\KwOut{Candidate list $CL$}
	{Initialize candidate list $CL$ as $m$ empty sets;}\\
	\For{each worker $w_j$ in $NWW$}{
		\For{each task $t_i$ in $R_j$}{
			\If{$w_j$'s privacy budget has been exhausted}{
				{continue;}\\
			}
			{$U_j(i) = v_i - f_d(d_{i,j}) - \sum_{t_i\in \entity{T}} f_p(\vectorfont{b_{i,j}}\cdot\vectorfont{\epsilon_{i,j}})$;}\\
			\If{$U_j(i)\leq 0$}{
				{continue;}\\
			}
			{Get $(\tilde{d}_{i,win(i)},\tilde{\epsilon}_{i,win(i)})$ of $w_{win(i)}$ from the server;}\label{WorkerProposalUtilityComparisonStart}\\
			{Calculate new $(\tilde{d}_{i,j},\tilde{\epsilon}_{i,j})$;}\\
			{Calculate $\tilde{d}'_{i,win(i),j}$ by Equation~\ref{effective_utility_equation};}\\
			\If{$\textrm{PPCF}(d_{i,j},\tilde{d}'_{i,win(i),j},\epsilon_{i,win(i)})\leq 0.5$}{
				{continue;}\\
			}
			\If{$\textrm{PCF}(\tilde{d}_{i,j},\tilde{d}'_{i,win(i),j}, \tilde{\epsilon}_{i,win(i)}, \tilde{\epsilon}_{i,j})\leq 0.5$}{
				continue;\label{WorkerProposalUtilityComparisonEnd}\\
			}
			{Add $\tilde{d}_{i,j}$ to $CL[i]$}
		}
	}
	\Return $CL$;
\end{algorithm}

In order to handle the problem above, we convert the utility comparison into the distance comparison and then use CEA to choose the high-utility one under the distance form.
For any two workers $w_a$ and $w_b$, they hold tasks $t_x$ and $t_y$, respectively.
Their utilities are $U_{a}(x)$ and $U_{b}(y)$, respectively.
Let $V_{a}(x)=U_{a}(x)+f_d(d_{x,a})$ and $V_{b}(y)=U_{b}(y)+f_d(d_{y,b})$. Then we have
{\scriptsize\begin{equation*}
		\begin{aligned}
			\textrm{Pr}(U_{a}(x)>U_{b}(y))  &= \textrm{Pr}(V_{a}(x)-f_d(d_{x,a})>V_{b}(y)-f_d(d_{y,b})) \\
			&= \textrm{Pr}(f_d^{-1}(V_{a}(x))- d_{x,a}>f_d^{-1}(V_{b}(y))-d_{y,b}) \\
			&= \textrm{Pr}(d_{x,a}<d_{y,b} + f_d^{-1}(V_{a}(x)) - f_d^{-1}(V_{b}(y))).
		\end{aligned}
\end{equation*}}
Let

\vspace{-4ex}
{\scriptsize\begin{equation}
		\begin{aligned}
			\hat{d}'_{y,b,a}=\hat{d}_{y,b} + f_d^{-1}(V_{a}(x)) - f_d^{-1}(V_{b}(y)),\label{effective_utility_equation}
		\end{aligned}
\end{equation}}
thus,\vspace{-3.5ex}
{\scriptsize\begin{equation}\notag
		\begin{aligned}
			\textrm{Pr}(U_{a}(x)>U_{b}(y))  &= \textrm{Pr}(\eta_{x,a} -\eta_{y,b}>\hat{d}_{x,a}-\hat{d}'_{y,b,a})\\
			&= \textrm{PCF}(\hat{d}_{x,a},\hat{d}'_{y,b,a}, \epsilon_{x,a}, \epsilon_{y,b}).
		\end{aligned}
\end{equation}}

Therefore, we can calculate $\hat{d}'_{y,b,a}$ for each pair of $w_a$ and $w_b$ with the same task $t_y$ and use PCF function to compare the utility.
Similarly, we can compare $U_a(x)$ and $U_b(y)$ through PPCF:\vspace{-1ex}
{\scriptsize\begin{equation*}
		\begin{aligned}
			\textrm{Pr}(U_{a}(x)>U_{b}(y))  &= \textrm{Pr}(\eta_{y,b}<\hat{d}'_{y,b,a} - d_{x,a})\\
			&= \textrm{PPCF}(d_{x,a}, \hat{d}'_{y,b,a}, \epsilon_{y,b}).
		\end{aligned}
\end{equation*}}

\subsection{The PUCE Algorithm}

We suppose that $w_j$ will propose to all tasks $R_j$ within area $A_j$.
In order to further decline unnecessary privacy costs, we add an extra judgment for workers through the PPCF function.

\begin{algorithm}[t]\scriptsize
	\DontPrintSemicolon
	\caption{\small WinnerChosen}
	\label{alg_winner_chosen}
	\KwIn{Candidate list $CL$, last term allocation list $AL'$}
	\KwOut{Allocation list $AL$, updating state $upd$}
	
	\If{All set in $CL$ are empty} {
		\Return {$(AL',false)$}
	}
	{Initialize $AL$ as $m$ null values;}\\
	{Initialize competing table $CT$ as empty table;}\\
	\For{Each candidate set $CS_i$ in $CL$}{
		\If{$CS_i$ is empty}{\label{WinnerChosenNoApplyStart}
			{Set $AL[i]=AL'[i]$;}\label{WinnerChosenNoApplyEnd}\\
		}
		\Else{
			{Set $CT[i]=CS_i\cup\{\tilde{d}_{i,win(i)}\}$;}\label{WinnerChosenPCFStart}\\
			{Calculate $\hat{d}'_{i,a,b}$ for each pair in $CT[i]$;}\\
			{Sort $CT[i]$ in descending order by $\textrm{PCF}(\hat{d}'_{i,a,b}, \hat{d}_{i,b}, \epsilon_{i,a}, \epsilon_{i,b})$;}\label{WinnerChosenPCFEnd}\\
		}
	}
	{Get updated matching $M$ set by using CEA for $CT$;}\\
	{Add $M$ to $AL$;}\\
	\Return $(AL, true)$;
\end{algorithm}

The worker proposal process and winner-chosen algorithm are respectively shown in Algorithm~\ref{alg_worker_propose_multi} and Algorithm~\ref{alg_winner_chosen}.

In Algorithm~\ref{alg_worker_propose_multi}, each worker $w_j$ checks all the tasks in his service area and judges whether it is worth to complete for the tasks (check whether $U_j(i)>0$ for $t_i\in R_j$).
Besides, he also judges whether he has advantages over the before-winner worker for these tasks by utility comparison. The utility comparison is shown from line \ref{WorkerProposalUtilityComparisonStart} to line \ref{WorkerProposalUtilityComparisonEnd}.
If the two conditions are satisfied, $w_j$ will propose to this task with a new privacy budget and obfuscated distance.

Algorithm~\ref{alg_winner_chosen} takes candidate list $CL$ (constructed by Algorithm~\ref{alg_worker_propose_multi}) and last term allocation list $AL'$ as the input.
It outputs the updating allocation list with the updating state $upd$.
The $false$ value of $upd$ means there is no change for $AL$.
The candidate list will be partitioned into two parts.
Ones with no workers' proposal are the same as the last term ones, which is shown from line \ref{WinnerChosenNoApplyStart} to line \ref{WinnerChosenNoApplyEnd}.
The others containing workers' proposals will be added to a new competing table with the winners of the last term.
Each set of workers for applied tasks in competing table will be sorted by the utility value (compared by $\textrm{PCF}(\hat{d}'_{x,a,b},\hat{d}_{x,b}, \epsilon_{x,a}, \epsilon_{x,b})$) in descending order.
The process is shown from line \ref{WinnerChosenPCFStart} to line \ref{WinnerChosenPCFEnd}.

By executing Algorithm~\ref{alg_worker_propose_multi} and Algorithm~\ref{alg_winner_chosen}, we can construct our \solutionA{} algorithm as shown in Algorithm~\ref{alg_solutionA}.
The total task set and the total worker set can be divided into several time window slices. We execute \solutionA{} on each time window in a batch-based style.
In the beginning, the not winning worker set $NWW$ is initialized as the whole worker set $W$, and the allocation list $AL$ is initialized as an empty set.
We execute Algorithm~\ref{alg_worker_propose_multi} to get candidate allocation list $CL$.
Then we execute Algorithm~\ref{alg_winner_chosen} to pick a new allocation list $AL$ and get a updating state $upd$.
When there are still some workers proposing to tasks ($CL$ is not empty), $upd$ will be set as $true$.
We also update $NWW$ by removing the new winner workers and adding the new loser workers.
When no workers propose to any task, $upd$ will be set as $false$.
Thus, we get the final task-worker matching pairs $TWM$ as $AL$.

\begin{algorithm}[t]\scriptsize
\DontPrintSemicolon
\caption{\small\solutionA{}}
\label{alg_solutionA}
\KwIn{Tasks $\entity{T}$ and workers $\entity{W}$ in the current time window}
\KwOut{The task-worker matching pairs $TWM$}

{Initialize not winning worker set $NWW$ as $\entity{W}$;}\\
{Initialize halt state $hs$ as $false$;}\\
{Initialize allocation list $AL$ as $m$ empty set list;}\\
\While{$hs$ is not $true$}{
    {Get $CL$ by executing Algorithm~\ref{alg_worker_propose_multi};}\\
    {Get $AL$ and $upd$ by executing Algorithm~\ref{alg_winner_chosen};}\\
    {Update $NWW$ by removing new winners and adding new losers;}\\
    {Set $hs=upd$;}\\
}
{Set $TWM$ as $AL$;}\\
\Return $TWM$;
\end{algorithm}

\begin{example}[Running Example of PUCE]
	We give a running example of the whole process of \solutionA{} following the motivation example.
	As shown in Figure \ref{fig_Introduction_A}, three workers $w_1$, $w_2$ and $w_3$ have service areas $15$, $15$ and $10$, respectively.
	Three tasks $t_1$, $t_2$ and $t_3$ have task values $12.4$, $11$ and $13$, respectively.
	The distance between each task and worker is shown in Table~\ref{tab_puce_solution}.

\begin{table}[t!]\centering	\vspace{-2ex}
	\caption{Task-worker distances.}
	\label{tab_puce_solution}
		\scalebox{1}{
			\begin{tabular}{c|ccc}
				\hline
				\textbf{Worker/Task} & $t_1$ & $t_2$ & $t_3$ \\ \hline
				$w_1$               & 12.2  & 3.61  & 17.12 \\
				$w_2$              & 5     & 10.44 & 12.21 \\
				$w_3$               & 9.43  & 18.25 & 7.28  \\ \hline
			\end{tabular}
	}
\end{table}

	Suppose there are three privacy budgets for each task-worker pair.
	The corresponding effective distance, the privacy budget and utility are shown in Table~\ref{tab_puce_solution2}.

	At the beginning, $NWW$ is set as $\{w_1, w_2, w_3\}$. $CL$ is set to NULL.
	$w_1$ firstly judges whether the tasks within his service area will be added to the $CL$.
	He calculates the utility for $t_1$ as $U_1(1)=0.1>0$ and adds $\tilde{d}_{1,1}$ to $CL[1]$.
	Besides, he adds $\tilde{d}_{2,1}$ to $CL[2]$.
	And $w_2$, $w_3$ also add their selected tasks (by the judgement in Algorithm~\ref{alg_worker_propose_multi}). And we can get the data in $CL$ as shown in Table~\ref{tab_CL} (the utility values are shown in square brackets).
	Then we get $CT$ by sorting $CL$, which is shown in Table~\ref{tab_CT}.

	After that, we find $t_1$ is allocated to $w_3$.
	Besides, $t_2$ and $t_3$ fall into conflict for $w_2$.
	After the comparison of CEA, $t_3$ is allocated to $w_2$.
	In the next round, there is only $t_2$ unallocated.
	And $w_1$ has not matched any task yet.
	$w_1$ can only propose for $t_2$. However the utility of $U_1(2)$ in this round is $-3.1\leq 0$.
	Thus there is no worker proposing to any tasks in this round. And the process is end.
\end{example}

\noindent\textbf{Privacy Analysis.}
We define the query data set of worker $w_j$ as $X_j$, which consists of all tasks in the service area of $w_j$ (i.e., $R_j$).
The neighboring data set of $X_j$ is noted as $X_j'$.
It satisfies that $\|X_j-X_j'\|=1$, which means there is only one different task item between $X_j$ and $X_j'$.
We focus on the query $f$ as `\emph{Get each distance from $w_j$ to his service tasks $R_j$}'.
That means $f(X_j)=[d_{i_1,j},...,d_{i_{|R_j|},j}]$.

\begin{table}[t!]\small\vspace{-2ex}
		\caption{\small Effective obfuscated distance, privacy budget and utility.}
		\label{tab_puce_solution2}
		\centering
		\scalebox{0.8}{
			\begin{tabular}{|c|cc|cc|cc|}
				\hline
				\textbf{Matchable   pair} & \multicolumn{2}{c|}{$(\tilde{d},\epsilon^{(1)})$, utility} & \multicolumn{2}{c|}{$(\tilde{d},\epsilon^{(2)})$, utility} & \multicolumn{2}{c|}{$(\tilde{d},\epsilon^{(3)})$, utility} \\ \hline
				$(t_1,w_1)$               & \multicolumn{1}{c|}{(12.7,0.1)}            & 0.1          & \multicolumn{1}{c|}{(12.4,0.3)}            &               & \multicolumn{1}{c|}{(12.3,0.4)}              &             \\ \hline
				$(t_1,w_2)$               & \multicolumn{1}{c|}{(5.5,4.6)}             & 2.8           & \multicolumn{1}{c|}{(5.3,4.65)}             &               & \multicolumn{1}{c|}{(5.1,4.8)}               &             \\ \hline
				$(t_1,w_3)$               & \multicolumn{1}{c|}{(9.93,0.1)}            & 2.87          & \multicolumn{1}{c|}{(9.63,0.4)}            &               & \multicolumn{1}{c|}{(9.53,0.4)}              &             \\ \hline
				$(t_2,w_1)$               & \multicolumn{1}{c|}{(4.11,6.99)}           & 0.4           & \multicolumn{1}{c|}{(4.01,7.1)}           & -3.1          & \multicolumn{1}{c|}{(3.81,7.2)}             &             \\ \hline
				$(t_2,w_2)$               & \multicolumn{1}{c|}{(10.94,0.1)}           & 0.46          & \multicolumn{1}{c|}{(10.64,0.2)}           &               & \multicolumn{1}{c|}{(10.54,0.5)}             &             \\ \hline
				$(t_3,w_2)$               & \multicolumn{1}{c|}{(12.71,0.1)}           & 0.69          & \multicolumn{1}{c|}{(12.51,0.3)}           &               & \multicolumn{1}{c|}{(12.31,0.4)}             &             \\ \hline
				$(t_3,w_3)$               & \multicolumn{1}{c|}{(7.78,5.4)}            & 0.32          & \multicolumn{1}{c|}{(7.58,5.5)}            &               & \multicolumn{1}{c|}{(7.38,5.6)}              &             \\ \hline
			\end{tabular}
	}
\end{table}

\begin{table}[t!]\centering\vspace{-2ex}
		\caption{Candidate list $CL$.}
		\label{tab_CL}
		
		\scalebox{1}{
			\begin{tabular}{c|c|c|c}
				\hline
				\textbf{CL} &                                &                                      &\\ \hline
				1           & $\tilde{d}_{1,1}$=12.7, [0.1]    & $\tilde{d}_{1,2}$=5.5, [2.8]    & $\tilde{d}_{1,3}$=9.93 [2.87]       \\
				2           & $\tilde{d}_{2,1}$=4.11, [0.4]   & $\tilde{d}_{2,2}$=10.94, [0.46]  &  \\
				3           & $\tilde{d}_{3,2}$=12.71, [0.69] & $\tilde{d}_{3,3}$=7.78, [0.32]   &  \\ \hline
			\end{tabular}
	}
\end{table}

\begin{table}[t!]\centering\vspace{-3ex}
		\caption{Competing table $CT$.}
		\label{tab_CT}
		
		\scalebox{1}{
			\begin{tabular}{c|c|c|c}
				\hline
				\textbf{CT} &                  1              &                2                &   3\\ \hline
				1           & $\tilde{d}_{1,3}$=9.93 [2.87]    & $\tilde{d}_{1,2}$=5.5, [2.8]   & $\tilde{d}_{1,1}$=12.7, [0.1]  \\
				2           & $\tilde{d}_{2,2}$=10.94, [0.46]   & $\tilde{d}_{2,1}$=4.11, [0.4] &\\
				3           & $\tilde{d}_{3,2}$=12.71, [0.69] & $\tilde{d}_{3,3}$=7.78, [0.32]  &\\ \hline
			\end{tabular}
	}
\end{table}

\begin{theorem}\label{SolutionA_PA}
\solutionA{} satisfies {\scriptsize$(\sum_{t_i\in R_j}\vectorfont{b_{i,j}\epsilon_{i,j}}r_j)$}-local differential privacy for each worker $w_j$.
\end{theorem}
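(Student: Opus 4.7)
The plan is to obtain the stated guarantee by (i) bounding the sensitivity of the distance query $f$, (ii) invoking the Laplace mechanism on each individual release of an obfuscated distance, and (iii) summing via sequential composition over all proposals the worker ever emits. Throughout, I treat the worker $w_j$ as an LDP client that reveals only $(\hat{d}_{i,j}, \epsilon_{i,j}^{(u)})$ pairs to the untrusted server, so all privacy losses the adversary can observe come from these releases.

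The first step is the sensitivity calculation for $f(X_j)=[d_{i_1,j},\ldots,d_{i_{|R_j|},j}]$ on $X_j=R_j$. Neighbouring datasets $X_j$ and $X_j'$ differ in exactly one task, say $t_i$ is replaced by $t_i'$; the output vector then changes in exactly one coordinate. Because both $t_i$ and $t_i'$ must lie in the service area $A_j$, which is a disk of radius $r_j$ centred at $l_j$, both distances to $w_j$ are in $[0,r_j]$, so the coordinate changes by at most $r_j$. Hence the $\ell_1$-sensitivity of $f$ equals $r_j$.

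Next I apply the Laplace-mechanism guarantee to each single proposal. In Algorithm~\ref{alg_worker_propose_multi}, the $u$-th proposal of $w_j$ to task $t_i$ outputs $\hat{d}_{i,j}=d_{i,j}+\eta$ with $\eta\sim Lap(1/\epsilon_{i,j}^{(u)})$ and sets $b_{i,j}^{(u)}=1$. Using the coordinate sensitivity $r_j$, this release satisfies $\bigl(\epsilon_{i,j}^{(u)}\cdot r_j\bigr)$-LDP with respect to $X_j$. No other information about $X_j$ is revealed by that proposal, since the utility-filtering and PPCF/PCF tests in lines \ref{WorkerProposalUtilityComparisonStart}--\ref{WorkerProposalUtilityComparisonEnd} use only data local to the worker together with the effective pairs published by the server.

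The last step is sequential composition across rounds. The adversary's view after running \solutionA{} to completion is simply the concatenation of every $(\hat{d}_{i,j},\epsilon_{i,j}^{(u)})$ pair released by $w_j$ whose $b_{i,j}^{(u)}=1$. Adaptive choices (which task to propose to, which budget slot to consume, whether to stop) are permitted because they depend only on previously published quantities, and the freshly drawn Laplace noise is independent of the past. Summing the per-release costs therefore gives a total of $\sum_{t_i\in R_j}\sum_{u=1}^{Z} b_{i,j}^{(u)}\epsilon_{i,j}^{(u)}\,r_j = r_j\sum_{t_i\in R_j}\vectorfont{b_{i,j}}\cdot\vectorfont{\epsilon_{i,j}}$, which is exactly the theorem's bound. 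The main subtlety I expect to address explicitly is that $\vectorfont{b_{i,j}}$ is itself data-dependent, so one must cite the adaptive composition theorem rather than the plain one; once that is in place, both the sensitivity step (which rests on the service-area radius $r_j$) and the composition step are routine.
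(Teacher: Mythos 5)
Your proof is correct and follows essentially the same route as the paper's: the paper writes the likelihood ratio of the full released vector as a product of per-release Laplace density ratios, bounds each factor by $\exp(\epsilon_{i,j}^{(u)}|d_{i,j}-d'_{i,j}|)\leq\exp(\epsilon_{i,j}^{(u)}r_j)$ using the service-area radius, and sums the exponents over the used budget slots --- which is exactly your sensitivity-plus-Laplace-mechanism-plus-sequential-composition argument unrolled by hand. The one thing you add that the paper leaves implicit is the appeal to \emph{adaptive} composition to handle the data-dependence of $\vectorfont{b_{i,j}}$ (the paper's static product tacitly treats the set of released coordinates as fixed), and conversely note that under your one-task-replaced neighbouring relation only the coordinates of the replaced task actually change, so the summed bound is valid but loose --- a looseness the paper shares, since its prose defines task-set neighbours while its proof lets every $d_{i,j}$ differ from $d'_{i,j}$.
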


\begin{proof}
Let $\algvar{A}_j$ be the mechanism \solutionA{} applying to $w_j$ with query $f$ defined above.
Let $X_j$ be the location of $w_j$.
For query $f(X_j)=[d_{i_1,j},...,d_{i_{|R_j|},j}]$, we extend it to an equivalent query $\hat{f}(X_j)=f(X_j)\cdot \algvar{J}$, where $\algvar{J}$ is a block diagonal matrix:

\vspace{-1ex}
{\scriptsize\begin{equation}\notag
\algvar{J}=
    \begin{bmatrix}
        CP(\vectorfont{b_{i_1,j}})  &       &       &   \\
        &       CP(\vectorfont{b_{i_2,j}})  &       &   \\
        &       &       \ddots  &   \\
        &       &       &       CP(\vectorfont{b_{i_{|R_j|},j}})
    \end{bmatrix}
\end{equation}}
Here, {\scriptsize$CP(\vectorfont{b})$} is the compression of {$\vectorfont{b}$}, which means removing all zero element of {$\vectorfont{b}$}. For example, if {$\vectorfont{b}=[1,1,0,0,0]$, then {\scriptsize$CP(\vectorfont{b})=[1,1]$}.}  $\hat{f}(X_j)$ means query $d_{i_u,j}$ for $sum(\vectorfont{b_{i_u,j}})$ times for $u\in[|R_j|]$, where $sum(\vectorfont{b_{i_u,j}})$ means the sum of all elements in $\vectorfont{b_{i_u,j}}$. We denote the size of $\hat{f}(X_j)$ as $|\hat{f}|$ and the $a$-th element of $\hat{f}(X_j)$ as $\hat{f}(X_j)_a$.

Let $Y_j$ denote the set of all published obfuscated distances of the worker $w_j$ to tasks in $R_j$.
Then we have $Y_j=\hat{f}(X_j)+[\eta_1, \eta_2,...,\eta_{|\hat{f}|}]$, where $\eta_a(1\leq a\leq |\hat{f}|)$ is an i.i.d random variable drawn from $Lap(1/\epsilon_a)$.
Hence we have
{\scriptsize\begin{equation}\notag
    \begin{aligned}
        \frac{\textrm{Pr}[\algvar{A}_j(X_j)=Y_j]}{\textrm{Pr}[\algvar{A}_j(X'_j)=Y_j]}
        &= \prod_{a\in[|\hat{f}|]} (\frac{\textrm{exp}(-\epsilon_a|Y_{j,a}-\hat{f}(X_{j})_a|)}{\textrm{exp}(-\epsilon_a|Y_{j,a}-\hat{f}(X_{j}')_a|)})
           \\
        &=\prod_{t_i\in R_j}\prod_{u\in[sum(\vectorfont{b_{i,j}})]}
           (\frac{\textrm{exp}(-\epsilon_{i,j}^{(u)}|\tilde{d}_{i,j}^{(u)}-d_{i,j}|)}{\textrm{exp}(-\epsilon_{i,j}^{(u)}|\tilde{d}_{i,j}^{(u)}-d'_{i,j}|)}) \\
        &\leq\prod_{t_i\in R_j}\prod_{u\in[sum(\vectorfont{b_{i,j}})]}
           (\textrm{exp}(\epsilon_{i,j}^{(u)}(|d_{i,j}-d'_{i,j}|))) \\
        &=\prod_{t_i\in R_j}
           \textrm{exp}(\vectorfont{b_{i,j}\epsilon_{i,j}}(|d_{i,j}-d'_{i,j}|)) \\
        &\leq\textrm{exp}(\sum_{t_i\in R_j}\vectorfont{b_{i,j}\epsilon_{i,j}}r_j).
    \end{aligned}
\end{equation}}
Because $X_j$ contains only one element, then we have \solutionA{} satisfies $(\sum_{t_i\in R_j}\vectorfont{b_{i,j}\epsilon_{i,j}}r_j)$-local differential privacy for each worker $w_j$.
\end{proof}

\noindent\textbf{Time Cost Analysis.}
There are $m$ tasks and $n$ workers.
{Each worker has $Z$ privacy budget for each task.}
Therefore, the worst time cost for \solutionA{} is $O(m\cdot n\cdot Z)$.

\section{\solutionBTotalName{} (\solutionB{})}\label{SolutionB}
In this section, we declare that each worker can compete for each task within their service area, whether they have already won a task.
We {model} our problem as an exact potential game with at least one Nash equilibrium in pure strategy.
{To make the utility value support comparison under a privacy circumstance,} we approximate our utility function by replacing real distance with effective obfuscated distance.

\subsection{Cases of Utility Change in Competition}
There are three cases of utility change in each time of competition for each task-worker pair.
They are \emph{Winning Change}, \emph{Abandoned Change} and \emph{Defeated Change}.
We denote them as $\Delta U_{j}^{W}(i)$, $\Delta U_{j}^{A}(i)$ and $\Delta U_{j}^{D}(i)$ respectively, which are expressed as follows:\vspace{-1.5ex}
{\scriptsize\begin{align}
	\Delta U_{j}^{W}(i) &= v_i - f_d(\tilde{d}_{i,j})-f_p(\epsilon_{i,j}^{(z)}),\notag\\
	\Delta U_{j}^{A}(i) &= - v_i + f_d(\tilde{d}_{i,j})\notag,\\
	\Delta U_{j}^{D}(i) &= - v_i + f_d(\tilde{d}_{i,j}).\notag
\end{align}}

$\Delta U_{j}^{W}(i)$ means the utility change of winning task $t_i$ for worker $w_j$.
$\Delta U_{j}^{A}(i)$ means the utility change of abandoning task $t_i$ {(because each worker can only match one task at most)} for worker $w_j$.
$\Delta U_{j}^{D}(i)$ means the utility change of being defeated by some other competitor in competing for task $t_i$ for worker $w_j$.
It is the same with $\Delta U_{j}^{A}(i)$.
We use $\Delta U_{j}^{W(k)}(i)$, $\Delta U_{j}^{A(k)}(i)$ and $\Delta U_{j}^{D(k)}(i)$ to denote the above three utility change in $k$-th competition.

We give examples of these three utility changes.
Suppose there are two workers $w_1, w_2$ and two tasks $t_1, t_2$.
At the first stage, $w_1$ competes for $t_1$ and $w_2$ competes for $t_2$.
Then the corresponding $\Delta U_{1}^{W}(1)$ and $\Delta U_{2}^{W}(2)$ are shown in Figure~\ref{subfig:delta_utility_A}.
At the second stage, $w_1$ competes for $t_2$ and gets it successfully. As is shown in Figure~\ref{subfig:delta_utility_B}.
The utility change between $w_1$ and $t_2$ is $\Delta U_1^W(2)$.
The utility change between $w_1$ and $t_1$ is $\Delta U_1^A(1)$.
The utility change between $w_2$ and $t_2$ is $\Delta U_2^D(2)$.

\begin{figure}[t!]\centering \vspace{-2ex}
	\subfigure[][{\scriptsize Stage 1}]{
		\scalebox{0.5}[0.5]{\includegraphics{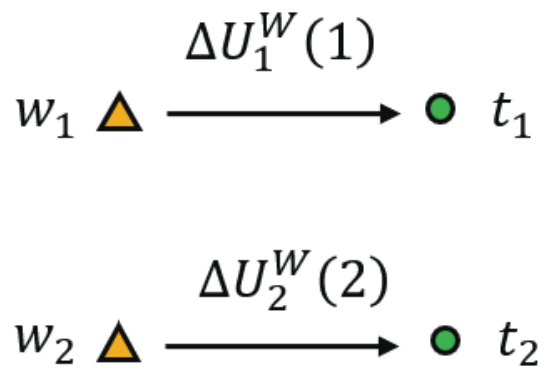}}
		\label{subfig:delta_utility_A}}
	\subfigure[][{\scriptsize Stage 2}]{
		\scalebox{0.5}[0.5]{\includegraphics{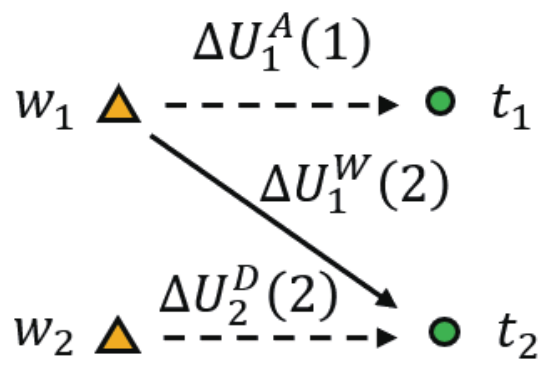}}
		\label{subfig:delta_utility_B}}
	\caption{\small Utility change.}
	\label{fig:delta_utility}
\end{figure}

\subsection{Game Modeling and Nash Equilibrium}
We approximate our \basicProblem{} as \emph{Privacy-aware Approximate Task Assignment} (\approximateBasicProblem{}) problem by replacing the real distance as effective distance.
We formulate \approximateBasicProblem{} as an $n$-player strategic game, $\mathcal{G}=<\entity{W},\vectorfont{S},\vectorfont{UT}>$.
$\mathcal{G}$ consists of players $\entity{W}$, strategy spaces $\vectorfont{S}$, and utility functions $\vectorfont{UT}$.
We specify these three components as follows:

  (1) {\scriptsize$\entity{W}=\{w_1,...,w_n\}$} denotes the finite set of $n$ workers with $n\geq 2$.
  We will use worker and player interchangeably in the rest of the paper.

  (2) {\scriptsize$\vectorfont{S}=\{S_j\}_{j=1}^{n}$} is the strategy spaces (i.e., the overall strategy set of all players).
  $S_j$ is the finite set of strategies available to worker $w_j$. Here, one strategy of worker $w_j$ indicates an action that he proposes to some task $t_i$ with a privacy budget $\epsilon^{(u)}_{i,j}$ for the $u$-th proposal.

  (3) {\scriptsize$\vectorfont{UT}=\{UT_j^{(k)}\}_{j=1}^{n}$} is the utility functions of all players $w_j$ where $k$ is the total competition number.
  For each chosen strategy {\scriptsize$\vectorfont{st}\in S$, $UT_j^{(k)}(\vectorfont{st})\in \mathbb{R}$} is the utility of player $w_j$.
  We calculate {\scriptsize$UT_j^{(k)}(\vectorfont{st})$} as follows:\vspace{-2ex}

      {\scriptsize\begin{equation}
        \label{formula_utility}
        \begin{aligned}
        UT_j^{(k)}(\vectorfont{st})  &= \Delta U_j^{W(k)}(i_2) + \Delta U_{win(i_2)}^{D(k-1)}(i_2) + \Delta U_j^{A(k-1)}(i_1)\\
                                     &= v_{i_2} - f_d(\tilde{d}_{i_2,j}^{(k)}) - f_p(\epsilon_{i_2,j}^{(z_{k})}) - v_{i_2} + f_d(\tilde{d}_{i_2,win(i_2)}^{(k-1)})\\
                                     & - v_{i_1} + f_d(\tilde{d}_{i_1,j}^{(k-1)})\\
                                     &= - f_d(\tilde{d}_{i_2,j}^{(k)}) - f_p(\epsilon_{i_2,j}^{(z_{k})}) + f_d(\tilde{d}_{i_2,win(i_2)}^{(k-1)}) - v_{i_1} + f_d(\tilde{d}_{i_1,j}^{(k-1)}).\\
        \end{aligned}
      \end{equation}}

In equation \ref{formula_utility}, $w_j$ wins $t_{i_1}$ and $w_{win(i_2)}$ wins $t_{i_2}$ in $(k-1)$-th competition. $w_j$ will compete for $t_{i_2}$ in $k$-th competition.

In the following part, we define \emph{exact potential game} (EPG) and prove that \approximateBasicProblem{} is an EPG.
\begin{definition}[Exact Potential Game] \label{EPG_definition}
A strategic game, $\mathcal{G}=<\entity{W},\vectorfont{S},\vectorfont{UT}>$, is an Exact Potential Game (EPG) if there exists a function, $\Phi: \vectorfont{S}\to \mathbb{R}$, such that for all $\vectorfont{st}_j\in\vectorfont{S}$, it holds that, $\forall w_j\in\entity{W}$, $\forall k\in N^{+}$,
{\scriptsize\begin{equation}\notag
    \begin{aligned}
        &UT_j^{(k)}(st'_j,\vectorfont{st}_{-j}) - UT_j^{(k)}(st_j,\vectorfont{st}_{-j})\\
        =&\Phi^{(k)}(st'_j,\vectorfont{st}_{-j}) - \Phi^{(k)}(st_j,\vectorfont{st}_{-j}).
    \end{aligned}
\end{equation}}
\end{definition}

\begin{algorithm}[t]\scriptsize
	\DontPrintSemicolon
	\caption{\small\solutionB{}}
	\label{alg_game}
	\KwIn{Tasks $\entity{T}$ and workers $\entity{W}$ in the current time window}
	\KwOut{The allocation list $AL$}
	{Initialize $AL$ as a list with $m$ $null$ value;}\\
	{Initialize halt state $hs$ as $false$;}\\
	\While{$hs$ is $false$}{
		{Set $hs$ as $true$;}\\
		\For{each worker $w_j\in\entity{W}$}{
			{Get the maximal $UT_j$ for each task $t_i\in R_j\setminus\{AL[b]\}$;}\label{ut_j}\\
			\If{$UT_j(\vectorfont{st})$ is $null$ or $UT_j(\vectorfont{st})\leq 0$}{
				continue;
			}
			{Set $hs$ as $false$;}\\
			{Set $t_c$ as $w_j$'s already mateched task;}\\
			{Set $t_b$ as the task with maximal $UT_j$;}\\
			{Set $w_f$ as the worker matched $t_b$ before;}\\
			{Update effective distance-budget pair between $t_b$ and $w_j$;}\\
			{Set $AL[c]=null$;}\\
			{Set $AL[b]=w_j$;}\\
		}
	}
	
	\Return $AL$;
\end{algorithm}

\begin{theorem}
\approximateBasicProblem{} is an Exact Potential Game (EPG).
\end{theorem}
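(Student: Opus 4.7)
My proof plan is to exhibit an explicit potential function whose change under any unilateral deviation of a single worker exactly matches that worker's utility change, and then invoke Definition~\ref{EPG_definition}. The key observation that guides the choice of potential is that $UT_j^{(k)}$ in equation~(\ref{formula_utility}) deliberately bundles together three pieces: the winning change of $w_j$ at the new task $t_{i_2}$, the defeated change of the previous winner $w_{win(i_2)}$ at $t_{i_2}$, and the abandoned change of $w_j$ at the old task $t_{i_1}$. Because the value $v_{i_2}$ earned by $w_j$ is cancelled by the value $v_{i_2}$ lost by the previous winner, $UT_j^{(k)}$ already has the form of the marginal change of a global, socially additive quantity, so a social-welfare potential is the natural candidate.

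Concretely, I would let $M^{(k)}(\vectorfont{st})$ denote the task-worker matching induced by $\vectorfont{st}$ at the end of round $k$ and define
\[
\Phi^{(k)}(\vectorfont{st}) \;=\; \sum_{(t_i,w_j)\in M^{(k)}(\vectorfont{st})} \bigl(v_i - f_d(\tilde{d}_{i,j}^{(k)})\bigr) \;-\; \sum_{(i,j,u):\, b_{i,j}^{(u)}=1} f_p(\epsilon_{i,j}^{(u)}),
\]
that is, the total ``value minus distance cost'' of the current matching minus the cumulative privacy cost consumed through round $k$. The linearity of $f_p$ from Definition~\ref{PBVF} is what lets the second sum be written as a simple accumulation in which each new proposal contributes exactly one new term.

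The verification is then mostly bookkeeping. Fix $\vectorfont{st}_{-j}$ and let $st_j=(t_{i_a},\epsilon_{i_a,j}^{(z)})$ be any strategy for $w_j$ in round $k$, so $w_j$ abandons his previous match $t_{i_1}$ and wins $t_{i_a}$ from $w_{win(i_a)}$. I would compute $\Phi^{(k)}(st_j,\vectorfont{st}_{-j}) - \Phi^{(k-1)}$ term by term: the contribution at $t_{i_a}$ changes from $v_{i_a}-f_d(\tilde{d}_{i_a,win(i_a)}^{(k-1)})$ to $v_{i_a}-f_d(\tilde{d}_{i_a,j}^{(k)})$, the contribution at $t_{i_1}$ drops from $v_{i_1}-f_d(\tilde{d}_{i_1,j}^{(k-1)})$ to $0$, and the privacy-cost sum gains the single extra term $f_p(\epsilon_{i_a,j}^{(z)})$. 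The two copies of $v_{i_a}$ cancel, producing exactly the right-hand side of equation~(\ref{formula_utility}), so $\Phi^{(k)}(st_j,\vectorfont{st}_{-j})-\Phi^{(k-1)} = UT_j^{(k)}(st_j,\vectorfont{st}_{-j})$. Repeating the computation for any other $st_j'$ and subtracting cancels the common reference $\Phi^{(k-1)}$, which depends only on $\vectorfont{st}_{-j}$ and the pre-round-$k$ history, giving exactly the identity required by Definition~\ref{EPG_definition}.

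The hard part is not algebraic but a careful case analysis of boundary situations: if $w_j$ had no match in round $k-1$, the $-v_{i_1}+f_d(\tilde{d}_{i_1,j}^{(k-1)})$ term is absent from both $UT_j^{(k)}$ and the corresponding change in $\Phi^{(k)}$; if $t_{i_a}$ had no previous winner, the $f_d(\tilde{d}_{i_a,win(i_a)}^{(k-1)})$ term vanishes on both sides; and I would also need to argue that updating the effective distance to $\tilde{d}_{i_a,j}^{(k)}$ only affects entries of $\Phi^{(k)}$ involving $w_j$, so that changing $st_j$ alone perturbs no other term. Once these bookkeeping points are dispatched, the statement that \approximateBasicProblem{} is an exact potential game follows directly.
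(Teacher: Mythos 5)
Your proposal is correct and takes essentially the same route as the paper: the potential you exhibit (total value-minus-distance-cost of the current matching minus the accumulated privacy cost) is exactly the paper's $\Phi^{(k)}(\vectorfont{st})=\sum_{t_i}\sum_{w_j}(s_{i,j}^{(k)}(v_i-f_d(\tilde{d}_{i,j}))-f_p(\vectorfont{b_{i,j}^{(k)}}\cdot\vectorfont{\epsilon_{i,j}}))$, rewritten using the linearity of $f_p$, and the verification that a unilateral deviation changes $\Phi$ by exactly $UT_j^{(k)}$ is the same algebra the paper carries out (the paper organizes it as a difference between two deviations via a recurrence on $\tilde{U}_j^{(k)}(i)$, while you factor it through the common reference $\Phi^{(k-1)}$, which is only a cosmetic difference). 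Your explicit attention to the boundary cases (unmatched worker, task with no previous winner) is a point the paper glosses over, but it does not change the argument.
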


\emph{Proof:} We define a potential function as {\scriptsize$$\Phi^{(k)}(\vectorfont{st})=\sum_{t_i\in\entity{T}}\sum_{w_j\in\entity{W}}(s_{i,j}^{(k)}\cdot (v_i-f_d(\tilde{d}_{i,j}))-f_p(\vectorfont{b_{i,j}^{(k)}}\cdot\vectorfont{\epsilon_{i,j}}))$$} which represents the total utility value of the matching result in $k$-th competition that all worker gain.
Let {\scriptsize$\tilde{U}_j^{(k)}(i)=v_i-f_d(\tilde{d}_{i,j}^{(k)})-\sum_{t_i\in\entity{W}}f_p(\vectorfont{b_{i,j}^{(k)}}\cdot\vectorfont{\epsilon_{i,j}})$} be the approximate value of $U_j(i)$ by replacing the real distance $d_{i,j}$ with the effective obfuscated distance $\tilde{d}_{i,j}$.
Then we get the recurrence relation of $\tilde{U}_j^{(k)}(i)$ for $k$ as\vspace{-2ex}
{\scriptsize\begin{equation}\notag
    \begin{aligned}
        \tilde{U}_j^{(k)}(i) = \left\{
            \begin{array}{l}
                \tilde{U}_j^{(k-1)}(i) + v_i - f_d(\tilde{d}_{i,j}^{(k)}) - f_p(\epsilon_{i,j}^{(z_k)})\;\;\;\sharp 1   \\
                \tilde{U}_j^{(k-1)}(i) - v_i + f_d(\tilde{d}_{i,j}^{(k-1)})\;\;\;\sharp 2 \\
                \tilde{U}_j^{(k-1)}(i) \;\;\;\sharp 3
            \end{array}
        \right.
    \end{aligned}
\end{equation}}
where condition $\sharp$1 means $w_j$ wins $t_i$ in $k$-th competition, condition $\sharp$2 means $w_j$ \textrm{gives up his original task or is defeated} in $k$-th competition and condition $\sharp$3 means {there is no change between $t_i$ and $w_j$}.
Suppose that $w_j, w_{j_x}, w_{j_y}$ wins $t_{i_1},t_{i_2},t_{i_3}$ in $(k-1)$-th competition respectively and $w_j$ will compete for $t_{i_2}$ ($st_j$) or $t_{i_3}$ ($st'_j$) in $k$-th competition, then we obtain
{\scriptsize\begin{equation}\notag
    \begin{aligned}
            & \Phi^{(k)}(st'_j,\vectorfont{st}_{-j}) - \Phi^{(k)}(st_j,\vectorfont{st}_{-j}) \\
           =& \tilde{U}_j^{(k)}(i_1) + \tilde{U}_j^{(k-1)}(i_2) + \tilde{U}_j^{(k)}(i_3) + \tilde{U}_{j_y}^{(k)}(i_3) + \tilde{U}_{j_x}^{(k-1)}(i_2) \\
            & - (\tilde{U}_j^{(k)}(i_1) + \tilde{U}_j^{(k)}(i_2) + \tilde{U}_j^{(k-1)}(i_3)  + \tilde{U}_{j_y}^{(k-1)}(i_3) + \tilde{U}_{j_x}^{(k)}(i_2)) \\
           =& \tilde{U}_j^{(k)}(i_3) - \tilde{U}_j^{(k-1)}(i_3) - (\tilde{U}_j^{(k)}(i_2) - \tilde{U}_j^{(k-1)}(i_2)) \\
            & + \tilde{U}_{j_y}^{(k)}(i_3) - \tilde{U}_{j_y}^{(k-1)}(i_3) - (\tilde{U}_{j_x}^{(k)}(i_2) - \tilde{U}_{j_x}^{(k-1)}(i_2))\\
           =& v_{i_3} - f_d(\tilde{d}_{i_3,j}^{(k)})- f_p(\epsilon_{i_3,j}^{(z_k)}) - (v_{i_2} - f_d(\tilde{d}_{i_2,j}^{(k)})- f_p(\epsilon_{i_2,j}^{(z_k)}))\\
            & - v_{i_3} + f_d(\tilde{d}_{i_3,j_y}^{(k-1)}) - ( - v_{i_2} + f_d(\tilde{d}_{i_2,j_x}^{(k-1)}))\\
           =& - f_d(\tilde{d}_{i_3,j}^{(k)})- f_p(\epsilon_{i_3,j}^{(z_k)}) + f_d(\tilde{d}_{i_3,j_y}^{(k-1)})\\
            & + f_d(\tilde{d}_{i_2,j}^{(k)}) + f_p(\epsilon_{i_2,j}^{(z_k)}) - f_d(\tilde{d}_{i_2,j_x}^{(k-1)})\\
           =& v_{i_3}- f_d(\tilde{d}_{i_3,j}^{(k)})- f_p(\epsilon_{i_3,j}^{(z_k)}) - v_{i_3} + f_d(\tilde{d}_{i_3,j_y}^{(k-1)}) - v_{i_1} + f_d(\tilde{d}_{i_1,j}^{(k-1)})\\
            & - (v_{i_2} - f_d(\tilde{d}_{i_2,j}^{(k)}) - f_p(\epsilon_{i_2,j}^{(z_k)}) - v_{i_2} + f_d(\tilde{d}_{i_2,j_x}^{(k-1)})- v_{i_1} + f_d(\tilde{d}_{i_1,j}^{(k-1)}))\\
           =& \Delta U_{j}^{W(k)}(i_3) + \Delta U_{j_y}^{D(k-1)}(i_3) + \Delta U_{j}^{A(k-1)}(i_1)\\
            & - (\Delta U_{j}^{W(k)}(i_2) + \Delta U_{j_x}^{D(k-1)}(i_2) + \Delta U_{j}^{A(k-1)}(i_1))\\
           =& UT_j^{(k)}(st'_j,\vectorfont{st}_{-j}) - UT_j^{(k)}(st_j,\vectorfont{st}_{-j})
    \end{aligned}
\end{equation}}

According to Definition~\ref{EPG_definition}, the strategic game of the \approximateBasicProblem{} is an exact potential game.
Therefore, \approximateBasicProblem{} has pure Nash equilibrium according to Theorem~2.3 in Ref~\cite{chew2016potential}.

\noindent\textbf{PGT Algorithm.}
The server executes the competition process with the aid of workers.
Each worker $w_j$ needs to repeat choosing the best task $t_b$ for the maximal utility value.
If the maximal value is positive, $w_j$ will update his effective distance-budget pair for $t_b$  and ask the server to update the allocation list.

We give the process in Algorithm~\ref{alg_game}.
The critical step is to calculate the best response information (maximal $UT_j$) shown in line \ref{ut_j}.
The state variable $hs$ is a boolean variable that indicates whether there still exists a task that can improve a utility function $UT_j$ for any $w_j\in\mathcal{W}$.
If there is no such task, the process will halt.

\begin{example}[Running Example of PGT]
Consider the example in Table~\ref{tab_puce_solution}, and the effective obfuscated distance and privacy budgets are shown in Table~\ref{tab_puce_solution2}.
As shown in Table~\ref{tab_gt_allocation_list}, suppose in the $k$-th competition, the winners of $t_1$, $t_2$ and $t_3$ are $w_1$, $w_2$ and $w_3$ respectively. And $w_1$, $w_2$ and $w_3$ have consumed their first privacy budgets $\epsilon_1$ for all three tasks.
Besides, they public the obfuscated distances relevant to $\epsilon_1$  for all tasks
(so that all the effective obfuscated distances related to $\epsilon_1$ are able to calculated by the server and all workers).
Suppose $f_d$ and $f_p$ are both identity functions (i.e., {\scriptsize$f_d(x)=x$, $f_p(x)=x$}).

In the $(k+1)$-th competition, it is $w_1$'s turn to compete.
$w_1$ can only compete for $t_2$.
He first uses his new privacy budget {\scriptsize$\epsilon_{2,1}^{(z_{k+1})}=\epsilon_{2,1}^{(2)}=7.1$} and calculates the new effective obfuscated distance {\scriptsize$\tilde{d}_{2,1}^{(k+1)}=4.01$}.
After that, he calculates {\scriptsize$UT_1^{(k+1)}=-f_d(\tilde{d}_{2,1}^{(k+1)})-f_p(\epsilon_{2,1}^{(z_{k+1})})+f_d(\tilde{d}_{2,2}^{(k)})-v_1+f_d(\tilde{d}_{1,1}^{(k)})=0.13>0$}.
Then, he publishes his privacy budget {\scriptsize$\epsilon_{2,1}^{(2)}=7.1$} with the corresponding obfuscated distance {\scriptsize$\hat{d}_{2,1}^{(2)}$} to the server.
The server can also calculate the new effective obfuscated distance {\scriptsize$\tilde{d}_{2,1}^{(k+1)}$} and {\scriptsize$UT_1^{(k+1)}$}.
It finds that {\scriptsize$UT_1^{(k+1)}$} is positive, which means $w_1$ wins $t_2$.
The server then alters the allocation table $AL$ by setting the winner of $t_2$ as $w_1$ and the winner of $t_1$ as NULL.

In the $(k+2)$-th competition, it is $w_2$'s turn to compete.
$w_2$ can compete for both $t_1$ and $t_3$. He calculates {\scriptsize$UT_2^{(k+2)}[t_1]=v_1-f_d(\tilde{d}_{1,2}^{(k+2)})-f_p(\epsilon_{1,2}^{(z_{k+2})})=2.45>0$} and {\scriptsize$UT_2^{(k+2)}[t_3]=-f_d(\tilde{d}_{3,2}^{(k+2)})-f_p(\epsilon_{3,2}^{(z_{k+2})})+f_d(\tilde{d}_{3,3}^{(k+1)})=-5.03<0$}.
After that, $w_2$ sets {\scriptsize$UT_2^{(k+2)}$} as {\scriptsize$UT_2^{(k+2)}[t_1]$}, which is the maximal positive value in set {\scriptsize$\{UT_2^{(k+2)}[t_1], UT_2^{(k+2)}[t_3]\}$}.
Then, $w_2$ applies to the server for $t_1$ by proposing {\scriptsize$(\hat{d}_{2,1}^{(2)},\epsilon_{2,1}^{(2)})$}.
After similar calculations, the server alters $AL$ by setting the winner of $t_1$ as $w_2$.

In the $(k+3)$-th competition, it is $w_3$'s turn to compete.
$w_3$ can only propose to $t_1$.
However, the value {\scriptsize$UT_3^{(k+3)}=-9.95<0$}.
Therefore, $w_3$ does not compete for any tasks.

These three steps are repeated until all workers do not propose to any tasks (i.e., until the $6$-th competition).
Table~\ref{tab_gt_distance_budget_change} records the changing of effective obfuscated distances and privacy budgets.
The red one (with $UT>0$) means there is a new winner who publishes a new privacy budget and updates the corresponding effective obfuscated distance. The green one (with $UT\leq 0$) means the competitor fails to compete for the task and will publish neither his new obfuscated distance nor his new privacy budget.

\begin{table}[t!]
	\caption{Allocation list from the $k$-th competition.}
	\label{tab_gt_allocation_list}
	\centering
	\scalebox{1}{
		\begin{tabular}{c|ccc}
			\hline
			Task  & $k$-th & $(k+1)$-th & $(k+2)$-th --   $(k+6)$-th \\ \hline
			$t_1$ & $w_1$  & NULL       & $w_2$                      \\
			$t_2$ & $w_2$  & $w_1$      & $w_1$                      \\
			$t_3$ & $w_3$  & $w_3$      & $w_3$                      \\ \hline
		\end{tabular}
	}
\end{table}

\begin{table}[t!]
	\caption{The timeline of effective distances and privacy budgets.}
	\label{tab_gt_distance_budget_change}
	\centering
	\scalebox{0.7}{
		\begin{tabular}{|c|ccccccc|}
			\hline
			\textbf{Pair/Times}          & \multicolumn{1}{c|}{$k$}                         & \multicolumn{1}{c|}{$k+1$}                                          & \multicolumn{1}{c|}{$k+2$}                                             & \multicolumn{1}{c|}{$k+3$}                                            & \multicolumn{1}{c|}{$k+4$}                                            & \multicolumn{1}{c|}{$k+5$}                                             & $k+6$                                            \\ \hline
			\multirow{2}{*}{$(t_1,w_1)$} & \multicolumn{4}{c|}{\multirow{2}{*}{(12.7,0.1)}}                                                                                                                                                                                                                        & \multicolumn{1}{c|}{(12.7,0.1)}                                       & \multicolumn{2}{c|}{\multirow{2}{*}{(12.7,0.1)}}                                                                          \\ \cline{6-6}
			& \multicolumn{4}{c|}{}                                                                                                                                                                                                                                                   & \multicolumn{1}{c|}{\textcolor{green}{(12.4,0.3)}} & \multicolumn{2}{c|}{}                                                                                                     \\ \hline
			\multirow{2}{*}{$(t_1,w_2)$} & \multicolumn{2}{c|}{\multirow{2}{*}{(5.5,4.6)}}                                                                        & \multicolumn{1}{c|}{(5.5,4.6)}                                         & \multicolumn{4}{c|}{\multirow{2}{*}{(5.3,4.65)}}                                                                                                                                                                                                                           \\ \cline{4-4}
			& \multicolumn{2}{c|}{}                                                                                                  & \multicolumn{1}{c|}{\textcolor{red}{(5.3,4.65)}}     & \multicolumn{4}{c|}{}                                                                                                                                                                                                                                                     \\ \hline
			\multirow{2}{*}{$(t_1,w_3)$} & \multicolumn{3}{c|}{\multirow{2}{*}{(9.93,0.1)}}                                                                                                                                                & \multicolumn{1}{c|}{(9.93,0.1)}                                       & \multicolumn{2}{c|}{\multirow{2}{*}{(9.93,0.1)}}                                                                                               & (9.93,0.1)                                       \\ \cline{5-5} \cline{8-8}
			& \multicolumn{3}{c|}{}                                                                                                                                                                           & \multicolumn{1}{c|}{\textcolor{green}{(9.63,0.4)}} & \multicolumn{2}{c|}{}                                                                                                                          & \textcolor{green}{(9.63,0.4)} \\ \hline
			\multirow{2}{*}{$(t_2,w_1)$} & \multicolumn{1}{c|}{\multirow{2}{*}{(4.11,6.99)}} & \multicolumn{1}{c|}{(4.11,6.99)}                                     & \multicolumn{5}{c|}{\multirow{2}{*}{(4.01,7.1)}}                                                                                                                                                                                                                                                                                                   \\ \cline{3-3}
			& \multicolumn{1}{c|}{}                            & \multicolumn{1}{c|}{\textcolor{red}{(4.01,7.1)}} & \multicolumn{5}{c|}{}                                                                                                                                                                                                                                                                                                                              \\ \hline
			\multirow{2}{*}{$(t_2,w_2)$} & \multicolumn{5}{c|}{\multirow{2}{*}{(10.94,0.1)}}                                                                                                                                                                                                                                                                                               & \multicolumn{1}{c|}{(10.94,0.1)}                                       & \multirow{2}{*}{(10.94,0.1)}                     \\ \cline{7-7}
			& \multicolumn{5}{c|}{}                                                                                                                                                                                                                                                                                                                           & \multicolumn{1}{c|}{\textcolor{green}{(10.64,0.2)}} &                                                  \\ \hline
			\multirow{2}{*}{$(t_3,w_2)$} & \multicolumn{2}{c|}{\multirow{2}{*}{(12.71,0.1)}}                                                                      & \multicolumn{1}{c|}{(12.71,0.1)}                                       & \multicolumn{2}{c|}{\multirow{2}{*}{(12.71,0.1)}}                                                                                             & \multicolumn{1}{c|}{(12.71,0.1)}                                       & \multirow{2}{*}{(12.71,0.1)}                     \\ \cline{4-4} \cline{7-7}
			& \multicolumn{2}{c|}{}                                                                                                  & \multicolumn{1}{c|}{\textcolor{green}{(12.51,0.3)}} & \multicolumn{2}{c|}{}                                                                                                                         & \multicolumn{1}{c|}{\textcolor{green}{(12.51,0.3)}} &                                                  \\ \hline
			$(t_3,w_3)$                  & \multicolumn{7}{c|}{(7.78,5.4)}                                                                                                                                                                                                                                                                                                                                                                                                                                             \\ \hline
		\end{tabular}
	}
\end{table}
\end{example}

\noindent\textbf{Convergence Analysis.}
In order to answer the convergence speed of \solutionB{}, we need to know how many rounds it takes to find a pure Nash equilibrium.
For the corresponding potential game of a \approximateBasicProblem{} instance, $\mathcal{G}=<\entity{W},\vectorfont{S},\vectorfont{UT}>$, we assume there is an equivalent game with potential function $\Phi_\mathbb{Z}(\vectorfont{st})=d\cdot\Phi(\vectorfont{st})$, where $d$ is a positive multiplicative factor satisfying that $\Phi_\mathbb{Z}(\vectorfont{st})\in\mathbb{Z}$ for $\forall{\vectorfont{st}\in\vectorfont{S}}$. Let $\vectorfont{st}^{\ast}$ be the best strategy the workers can choose in this \approximateBasicProblem{} game instance. Based on the above assumption, we prove that \solutionB{} executes at most $\Phi_\mathbb{Z}(\vectorfont{st}^{\ast})$ rounds.

\begin{theorem}\label{thrm:GameConv}
\solutionB{} executes at most $\Phi_\mathbb{Z}(\vectorfont{st}^{\ast})$ rounds to achieve a pure Nash equilibrium, where $\Phi_\mathbb{Z}(\vectorfont{st}^{\ast}) = d\cdot\Phi(\vectorfont{st}^{\ast})$ is a scaled potential function with integer value $d$ and $\vectorfont{st}^{\ast}$ is the optimal strategy the workers can choose in the potential \approximateBasicProblem{} game instance.
\end{theorem}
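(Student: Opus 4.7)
The plan is to exploit the exact potential game structure established just above: in any EPG, a unilateral move that strictly increases a player's utility strictly increases the potential function by exactly the same amount, because $\Phi^{(k)}(st'_j,\vectorfont{st}_{-j}) - \Phi^{(k)}(st_j,\vectorfont{st}_{-j}) = UT_j^{(k)}(st'_j,\vectorfont{st}_{-j}) - UT_j^{(k)}(st_j,\vectorfont{st}_{-j})$. I would first observe that Algorithm~\ref{alg_game} is a best-response dynamics: in every round some worker $w_j$ changes his strategy only when his (strictly positive) computed utility $UT_j(\vectorfont{st})$ improves his current assignment, and otherwise the halt flag $hs$ remains \emph{true} and the process terminates.

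Next I would translate this ``strict improvement'' property into a progress statement on the scaled potential $\Phi_\mathbb{Z} = d \cdot \Phi$. Because every non-terminating round corresponds to a single-player strict improvement, the EPG identity gives $\Phi(\vectorfont{st}^{(k+1)}) - \Phi(\vectorfont{st}^{(k)}) > 0$, and multiplying by $d$ yields $\Phi_\mathbb{Z}(\vectorfont{st}^{(k+1)}) - \Phi_\mathbb{Z}(\vectorfont{st}^{(k)}) > 0$. Since $\Phi_\mathbb{Z}$ is integer-valued by assumption, this strict inequality forces $\Phi_\mathbb{Z}(\vectorfont{st}^{(k+1)}) - \Phi_\mathbb{Z}(\vectorfont{st}^{(k)}) \geq 1$ in each active round. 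Thus the potential strictly increases by at least one integer unit per round.

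Finally I would bound the total number of rounds. Because $\vectorfont{st}^{\ast}$ is optimal, $\Phi_\mathbb{Z}(\vectorfont{st}^{(k)}) \leq \Phi_\mathbb{Z}(\vectorfont{st}^{\ast})$ for every $k$, while the initial strategy profile $\vectorfont{st}^{(0)}$ (no worker assigned to any task) contributes a non-negative baseline $\Phi_\mathbb{Z}(\vectorfont{st}^{(0)}) \geq 0$. Combining the monotone integer growth with the upper bound yields at most $\Phi_\mathbb{Z}(\vectorfont{st}^{\ast}) - \Phi_\mathbb{Z}(\vectorfont{st}^{(0)}) \leq \Phi_\mathbb{Z}(\vectorfont{st}^{\ast})$ improvement rounds, after which $hs$ must stay \emph{true} and the algorithm returns. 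By Theorem~2.3 of~\cite{chew2016potential} the terminating profile is a pure Nash equilibrium, establishing the claimed bound.

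The main obstacle I foresee is the justification of the per-round increment of at least one unit: this requires that the players' best-response moves used by PGT are exactly the ``unilateral deviations'' for which the EPG identity holds, and that the scaling factor $d$ is truly uniform (i.e.\ makes $\Phi_\mathbb{Z}(\vectorfont{st})$ an integer for every reachable $\vectorfont{st}$, not only for $\vectorfont{st}^{\ast}$). I would make this explicit by noting that line~\ref{ut_j} of Algorithm~\ref{alg_game} evaluates $UT_j$ precisely in the single-player-deviation form of equation~(\ref{formula_utility}), which is the form covered by the EPG proof above, so the identity does apply round-by-round.
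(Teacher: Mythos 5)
Your proposal is correct and follows essentially the same argument as the paper's proof: each non-terminating round is a strict unilateral improvement, the EPG identity transfers this to a strict increase of the scaled integer-valued potential (hence an increase of at least $1$), and the non-negative starting value together with the upper bound $\Phi_\mathbb{Z}(\vectorfont{st}^{\ast})$ caps the number of rounds. The uniformity concern you raise about $d$ is already covered by the paper's standing assumption that $\Phi_\mathbb{Z}(\vectorfont{st})\in\mathbb{Z}$ for all $\vectorfont{st}\in\vectorfont{S}$, so your version is simply a slightly more explicit rendering of the same proof.
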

\begin{proof}
We say \solutionB{} converges when no workers deviate from their current strategies. If \solutionB{} has not converged, then at least one worker $w_j$ deviates from his current strategy in each round. Besides the new change strategy $st'_j$ of $w_j$ is better than his current strategy $st_j$. And the change will improve at least 1 (i.e., $\Phi_\mathbb{Z}(st'_i,\vectorfont{s}_{-i})-\Phi_\mathbb{Z}(st_i,\vectorfont{s}_{-i})\geq 1$) for potential games. Because the maximum value of scaled potential function is $\Phi_\mathbb{Z}(\vectorfont{st}^{\ast})$, and the total utility is always positive, \solutionB{} needs at most $\Phi_\mathbb{Z}(\vectorfont{st}^{\ast})$ rounds to converge to a pure Nash equilibrium.
\end{proof}

\noindent\textbf{Quality Analysis.}
Since the distance in our game is rather real distance than effective obfuscated distance, here we give the upper bound of expectation of \emph{price of stability} (EPoS) and the lower bound of expectation of \emph{price of anarchy} (EPoA).
Let \vspace{-3ex}
{\scriptsize\begin{align}
	U_{j}^{L}(i) &= v_i-f_d(d_{i,j})-f_p(\sum_{t_k\in R_j}sum(\vectorfont{\epsilon_{k,j}}))\notag\\
	U_{j}^{H}(i) &= v_i-f_d(d_{i,j})-f_p(min(\vectorfont{\epsilon_{i,j}}))\notag\\
	U_{min}^{+}(i) &= \left\{
	\begin{array}{ll}
		\min\limits_{R_j\ni t_i, U_{j}^{L}(i)>0} U_{j}^{L}(i), &\textrm{if there exists}\;U_{j}^{L}(i)>0   \\
		0, &\textrm{otherwise}
	\end{array}
	\right.\notag\\
	U_{max}^{+}(i) &= \left\{
	\begin{array}{ll}
		\max\limits_{R_j\ni t_i} U_{j}^{H}(i),&\textrm{if there exists}\;U_{j}^{H}(i)>0   \\
		0,&\textrm{otherwise}
	\end{array}
	\right.\notag
\end{align}}

Then we have Theorem~\ref{thrm:Pos_PoA} as follows.
\begin{theorem}\label{thrm:Pos_PoA}
In the strategic game of \solutionB{}, the lower bound of EPoA is {\scriptsize$\frac{\sum_{t_i\in\entity{T}}U_{min}^{+}(i)}{\sum_{t_i\in\entity{T}}U_{max}^{+}(i)}$ $(\sum_{t_i\in\entity{T}}U_{max}^{+}(i)\neq 0)$}
and the upper bound of EPoS is 1.
\end{theorem}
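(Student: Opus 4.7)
The plan is to derive both bounds from a per-task decomposition of the social welfare that applies uniformly to any realized matching, whether it is a Nash equilibrium of PGT or the true social optimum. First I would observe that, for any worker $w_j$ with $t_i\in R_j$, the real-distance net contribution of matching $t_i$ to $w_j$ is sandwiched as $U_j^L(i)\le v_i-f_d(d_{i,j})-f_p(\cdot)\le U_j^H(i)$: the upper end $U_j^H$ charges only the cheapest budget $\min(\vectorfont{\epsilon_{i,j}})$ once to $t_i$, while the lower end $U_j^L$ debits the maximum privacy a worker can consume while exploring all of $R_j$. Hence any sample-path welfare (NE or OPT) decomposes into per-task contributions that are either $0$ or lie inside $[U_j^L(i),U_j^H(i)]$ for some eligible $w_j$.

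Next I would handle the EPoA lower bound. At any NE, every sustained match yields positive utility to its worker, so task $t_i$ contributes at least $U_{min}^+(i)$ on the real-distance scale (and exactly $0=U_{min}^+(i)$ when no eligible worker satisfies $U_j^L(i)>0$, by definition). Summing across tasks gives a realization-wise lower bound $\sum_{t_i\in\entity{T}}U_{min}^+(i)$ on NE welfare. Dually, the social optimum can gain at most $U_{max}^+(i)$ per task, so it is dominated by $\sum_{t_i\in\entity{T}}U_{max}^+(i)$ on every realization. Because both bounds are deterministic in the Laplace noise, dividing and taking expectations yields
\[\mathrm{EPoA}\;\ge\;\frac{\sum_{t_i\in\entity{T}}U_{min}^+(i)}{\sum_{t_i\in\entity{T}}U_{max}^+(i)}.\]

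For the EPoS upper bound, I would simply invoke the definitional fact that any Nash-equilibrium welfare is at most the social optimum in a maximization problem; in particular, the best NE satisfies $U^{\mathrm{NE}}\le U^{\mathrm{OPT}}$ on every sample path, so taking expectations gives $\mathrm{EPoS}\le 1$.

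The hard part will be rigorously linking the per-task NE bound $U_{min}^+(i)$ to the fact that PGT's workers react to effective obfuscated utilities rather than to real utilities, so a match could in principle be retained with a real contribution that is smaller than what the obfuscated-utility threshold $UT_j>0$ suggests. My resolution is to interpret $U_j^L(i)$ as an a-priori worst-case envelope on the real contribution that already accounts for the maximum privacy $w_j$ can spend while keeping $t_i$; any match that survives the game's incentives must therefore have real contribution at least $U_j^L(i)$, and minimizing positive values across eligible workers reproduces $U_{min}^+(i)$ and closes the argument.
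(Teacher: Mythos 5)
Your proposal is correct and follows essentially the same route as the paper's proof: both bound the worst Nash-equilibrium welfare below by $\sum_{t_i\in\entity{T}}U_{min}^{+}(i)$ using $U_j^L(i)$ as a worst-case envelope that charges the full privacy expenditure over $R_j$, bound the optimum above by $\sum_{t_i\in\entity{T}}U_{max}^{+}(i)$ via $U_j^H(i)$, and obtain $\mathrm{EPoS}\le 1$ directly from the fact that the best equilibrium's welfare cannot exceed the optimum. The subtlety you flag at the end (obfuscated versus real utilities) is resolved exactly as the paper implicitly does, by noting that both bounds are deterministic envelopes independent of the noise realization.
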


Please refer to details of the proof of Theorem~\ref{thrm:Pos_PoA} in Appendix B.

\begin{theorem}\label{thrm:Privacy_PGT}
\solutionB{} satisfies {\scriptsize$(\sum_{t_i\in R_j}\vectorfont{b_{i,j}\epsilon_{i,j}}r_j)$}-local differential privacy for each worker $w_j$.
\end{theorem}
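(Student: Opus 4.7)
The plan is to mirror the structure of the proof of Theorem~\ref{SolutionA_PA}, since what matters for the local differential privacy guarantee is not how winners are selected but only what each worker $w_j$ releases to the (untrusted) server and the other workers. In both PUCE and PGT, every proposal from $w_j$ to $t_i$ with budget $\epsilon_{i,j}^{(u)}$ is realized by publishing an obfuscated distance $\hat{d}_{i,j}^{(u)} = d_{i,j} + Lap(1/\epsilon_{i,j}^{(u)})$, and every quantity the adversary subsequently computes (effective obfuscated distance $\tilde{d}_{i,j}$, utility change $UT_j^{(k)}$, allocation updates) is a deterministic post-processing of these released pairs together with public task information. Thus the adversary's view of $w_j$ is captured entirely by the vector of published $\hat{d}_{i,j}^{(u)}$ values.

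First, I would reintroduce the same formalism used in the PUCE proof. For each $w_j$, let the private data set be $X_j=\{l_j\}$ and consider a neighbouring $X_j'$. Define the query $f(X_j)=[d_{i_1,j},\dots,d_{i_{|R_j|},j}]$ and the extended query $\hat f(X_j)=f(X_j)\cdot \algvar{J}$, where $\algvar{J}$ is the block-diagonal compression matrix that repeats the $t_i$-component $sum(\vectorfont{b_{i,j}})$ times, matching the number of proposals actually issued by $w_j$ during the execution of Algorithm~\ref{alg_game}. The key observation is that although the vector $\vectorfont{b_{i,j}}$ is now determined by the PGT game dynamics rather than by PUCE's greedy loop, the output distribution of $\algvar{A}_j(X_j)$ over the published obfuscated distances is still a product of independent Laplace noises calibrated to the corresponding $\epsilon_{i,j}^{(u)}$.

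Next, I would apply sequential composition of Laplace mechanisms exactly as in Theorem~\ref{SolutionA_PA}. Let $Y_j$ be the released vector. Because the noise entries are independent,
\[
\frac{\Pr[\algvar{A}_j(X_j)=Y_j]}{\Pr[\algvar{A}_j(X_j')=Y_j]}
=\prod_{t_i\in R_j}\prod_{u\in[sum(\vectorfont{b_{i,j}})]}
\frac{\exp(-\epsilon_{i,j}^{(u)}|\tilde d_{i,j}^{(u)}-d_{i,j}|)}{\exp(-\epsilon_{i,j}^{(u)}|\tilde d_{i,j}^{(u)}-d'_{i,j}|)}
\le \prod_{t_i\in R_j}\exp\bigl(\vectorfont{b_{i,j}\epsilon_{i,j}}\,|d_{i,j}-d'_{i,j}|\bigr).
\]
Using $|d_{i,j}-d'_{i,j}|\le r_j$ (any two admissible locations of $w_j$ lie in a ball of radius $r_j$ around the candidate tasks in $R_j$), the product is bounded by $\exp(\sum_{t_i\in R_j}\vectorfont{b_{i,j}\epsilon_{i,j}}\,r_j)$, which yields the claimed privacy level.

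The main obstacle is the step that is essentially bookkeeping rather than technical: justifying that the additional information flow induced by PGT's game protocol — namely the broadcast utilities $UT_j^{(k)}$, the halt flag, and the allocation list $AL$ — does not release anything beyond the published $(\hat d_{i,j}^{(u)},\epsilon_{i,j}^{(u)})$ pairs. By inspection of Algorithm~\ref{alg_game} and Equation~\ref{formula_utility}, every such quantity is computed from already-published effective distance–budget pairs together with public values ($v_i$, $r_j$, $f_d$, $f_p$), so the post-processing invariance of differential privacy closes the argument, and the proof then reduces to the composition calculation above.
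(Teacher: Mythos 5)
Your proposal is correct and follows essentially the same route as the paper's own proof in Appendix C, which likewise reduces PGT's privacy guarantee to the Laplace composition calculation over the extended query $\hat f(X_j)=f(X_j)\cdot\algvar{J}$ and the bound $|d_{i,j}-d'_{i,j}|\le r_j$. Your additional remark that the game-protocol outputs (utilities, halt flag, allocation list) are deterministic post-processing of the released distance--budget pairs is a point the paper leaves implicit, and it strengthens rather than changes the argument.
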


The proof is similar to Theorem~\ref{SolutionA_PA}, and please refer to the details in Appendix C.

\begin{figure}[t!]\centering\vspace{-3ex}
	\subfigure[][{\scriptsize Order location distribution}]{
		\scalebox{0.25}[0.25]{\includegraphics{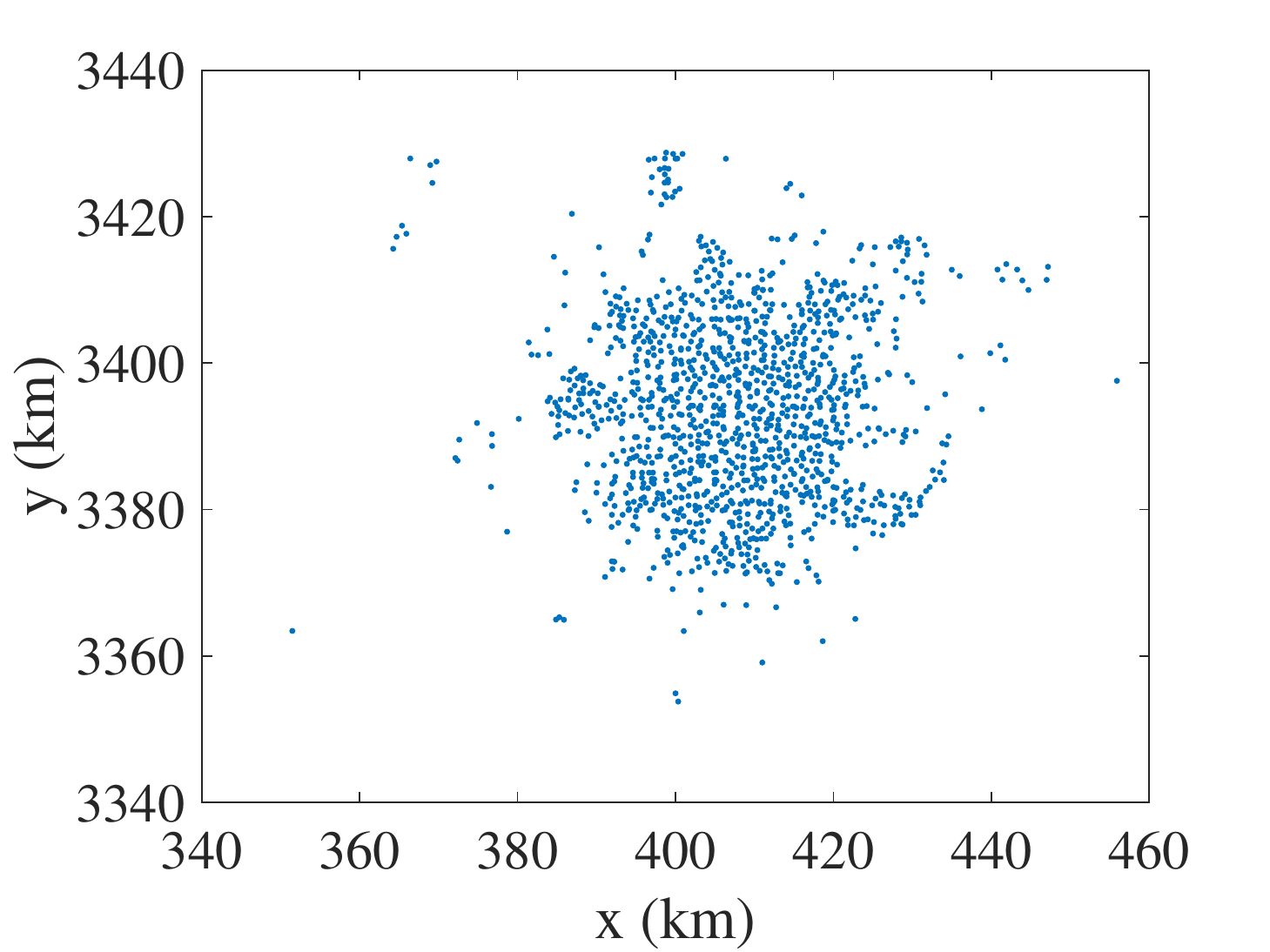}}
		\label{subfig:order_distribution}}
	\subfigure[][{\scriptsize Taxi location distribution}]{
		\scalebox{0.25}[0.25]{\includegraphics{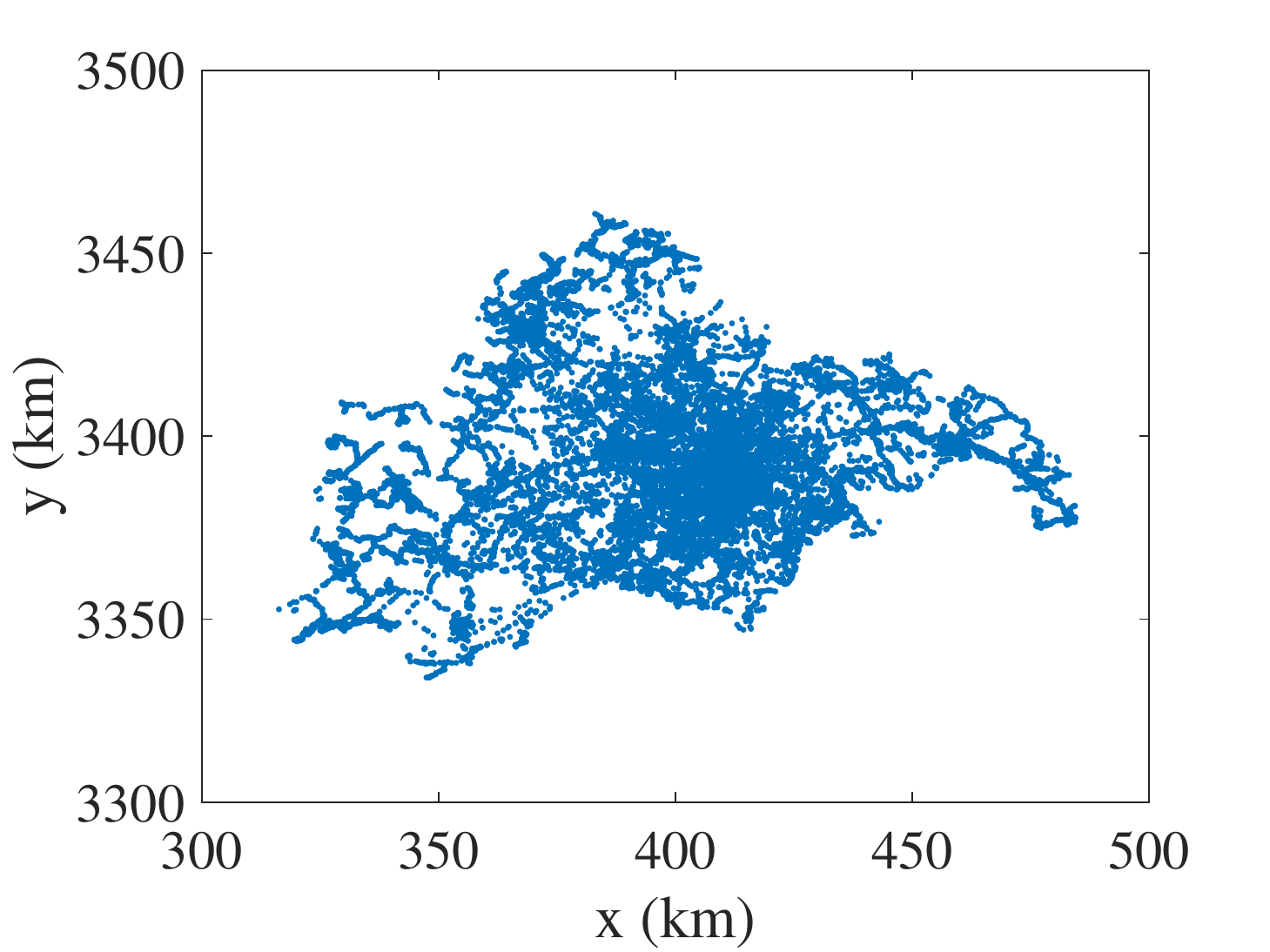}}
		\label{subfig:taxi_distribution}}\figureCaptionMargin
	\caption{\small Orders and taxies of Chengdu Didi data set.}\figureBelowMargin
	\label{fig_experiment_real_dataset}
\end{figure}

\section{Experiment}\label{Experiment}
\subsection{Data Sets}
We test our mechanisms in real and synthetic  data sets.

\noindent\textbf{Real Data Set.}
We use Didi Chuxing\cite{DidiLink} in Chengdu, China, as our real data set.
We choose the day with the most requests for evaluation (November 18, 2016 in Chengdu) and perform the same preprocess in the existing work~\cite{DBLP:journals/pvldb/TongZZCYX18}, which is denoted as \emph{chengdu}.

\emph{Chengdu} contains $259347$ orders and $30000$ taxis.
Each order tuple is a taxi request consisting of a release time, a pickup location, a drop-off location, and some passengers.
Each taxi tuple is a basic message consisting of the original location of the taxi and its capacity.
The location distribution of taxis is shown in Figure~\ref{subfig:taxi_distribution}.

\noindent\textbf{Synthetic Data Set.}
We generate two data sets with 2-dimensional uniform distribution and normal distribution, respectively.
For the uniform distribution data set, we randomly generate {$300$k} points for tasks and {$900$k} for workers in a plane with a range of $100\times 100$.
Each point follows a 2-dimensional uniform distribution with an average of 0.

For the normal distribution data set, we generate $300$k and $900$k points for tasks and workers, respectively. The expectation and variance for all points are 0 and 150, respectively.

\subsection{Experimental Setup}
We split the orders into batches by timestamp.
Each batch contains at most 1000 orders. Figure~\ref{subfig:order_distribution} is a batch example of the order distribution.
We also split the taxis into ten groups for the real data set, each containing 3000 taxis.
We use each worker group circularly for each batch.
We set the {pickup} locations of orders as task locations and the {original} locations of taxis as worker locations.

Let $S_T$ and $S_W$ be two sets for tasks and workers.
We define the value $p_{wt}=\frac{|S_W|}{|S_T|}$ as \emph{worker-task ratio} which stands for the ratio between {the} worker number and {the} task number.

We alter the method in Ref~\cite{DBLP:journals/tmc/WangHLWWYQ19} by constraining the workers' proposing range {in his service area} and replacing PCF with PPCF in order to get reasonable comparison with our \solutionA{}.
We denote this altering method in Ref~\cite{DBLP:journals/tmc/WangHLWWYQ19} as \solutionCMPTotalName{} (\EXPSolutionCMP{}).
The {difference} between \solutionA{} and \solutionCMP{} is the optimization objective.
In \solutionCMP{}, the goal is to minimize all the travelling distance on the platform, which only {considers} the distance variable.
However, in \solutionA{}, the goal is to maximize the utility function of the platform, which considers the task value, travel distance and privacy budget.

\begin{table}[t!]\vspace{-2ex}
	\begin{center}
		{\small\scriptsize  \vspace{1ex}
				\caption{\small Methods.} \label{all_methods}
				\begin{tabular}{c|c|c|c}\hline
					& Private version           & Non-Private version   &   Non-PPCF version\\ \hline
					Distance Elimination & \EXPSolutionCMP{} \cite{DBLP:journals/tmc/WangHLWWYQ19} & \EXPSolutionNPCMP{}  & \EXPSolutionCMP-nppcf\\
					Utility Elimination  & \EXPSolutionA{}      & \EXPSolutionNPA{}  & \EXPSolutionA{}-nppcf   \\
					Game Theory          & \EXPSolutionB{}      & \EXPSolutionNPB{}  & --- \\
					Greedy               & ---                  & \EXPSolutionNPGD{} & --- \\ \hline
				\end{tabular}
		}\vspace{-1ex}
	\end{center}
\end{table}

\begin{table}[t!]\vspace{-2ex}
	\begin{center}
		{\small\scriptsize  \vspace{1ex}
			\caption{\small Experimental settings.} \label{tab:settings}
			\label{tab_experiment_parameter_settings}
			\begin{tabular}{l|l}\hline
				{\bf \qquad Parameters \qquad \quad } & {\bf \qquad  \qquad Values \qquad } \\ \hline
				worker-task ratio           &       $1, 1.5, \textbf{2}, 2.5, 3$ \\
				task values                 &       $1.5, 3, \textbf{4.5}, 6, 7.5$\\
				worker range                &       $0.8, 1.1, \textbf{1.4}, 1.7, 2.0$\\
				\multirow{2}{*}{privacy budget}              &       $[0.5,0.75], [0.75,1.00], [1.00,1.25],$ \\
				\;                &       $[1.25,1.50], [1.50, 1.75]; \textbf{[0.5,1.75]}$\\
				privacy budget group size   &       $\textbf{7}$\\
				\hline
			\end{tabular}
		}\vspace{-1ex}
	\end{center}
\end{table}

We compare our \solutionA{} and \solutionB{} with \EXPSolutionCMP{}.
Besides, we construct the {non-private} solution of each {private} solution by eliminating the privacy budget cost in the utility function and replacing obfuscated distance with real distance. These {non-private} solutions are \solutionNPATotalName{} (\EXPSolutionNPA{}), \solutionNPBTotalName{} (\EXPSolutionNPB{}), \solutionNPCMPTotalName{} (\EXPSolutionNPCMP{}) and \solutionNPGDTotalName{} (\EXPSolutionNPGD). Here, GRD always greedily chooses the current best worker-task pair (with the highest utility) for each worker.
We also construct the non-PPCF solution of \EXPSolutionA{} and \EXPSolutionCMP{} by replacing the PPCF part with the PCF part.
We denote these non-PPCF solutions as \EXPSolutionA{}-nppcf and \EXPSolutionCMP{}-nppcf.
We compare all these methods above and summarize them in Table~\ref{all_methods}.

We show the parameter settings in table~\ref{tab_experiment_parameter_settings}, where the default values are marked in bold.
As for distance value function $f_d$ and privacy budget value function $f_p$, we model them as linear functions and use $f_d(x)=\alpha x$ and $f_p(x)=\beta x$ in our experiment. We set $\alpha=1$ and $\beta=1$.

\begin{figure}[t!]\centering\vspace{-3ex}
	\subfigure{
		\scalebox{0.33}[0.33]{\includegraphics{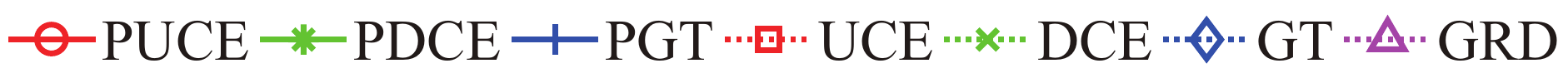}}}\hfill\\
	\addtocounter{subfigure}{-1}\vspace{-2.5ex}
	\subfigure[][{\scriptsize \emph{chengdu}}]{	\scalebox{0.25}[0.25]{\includegraphics{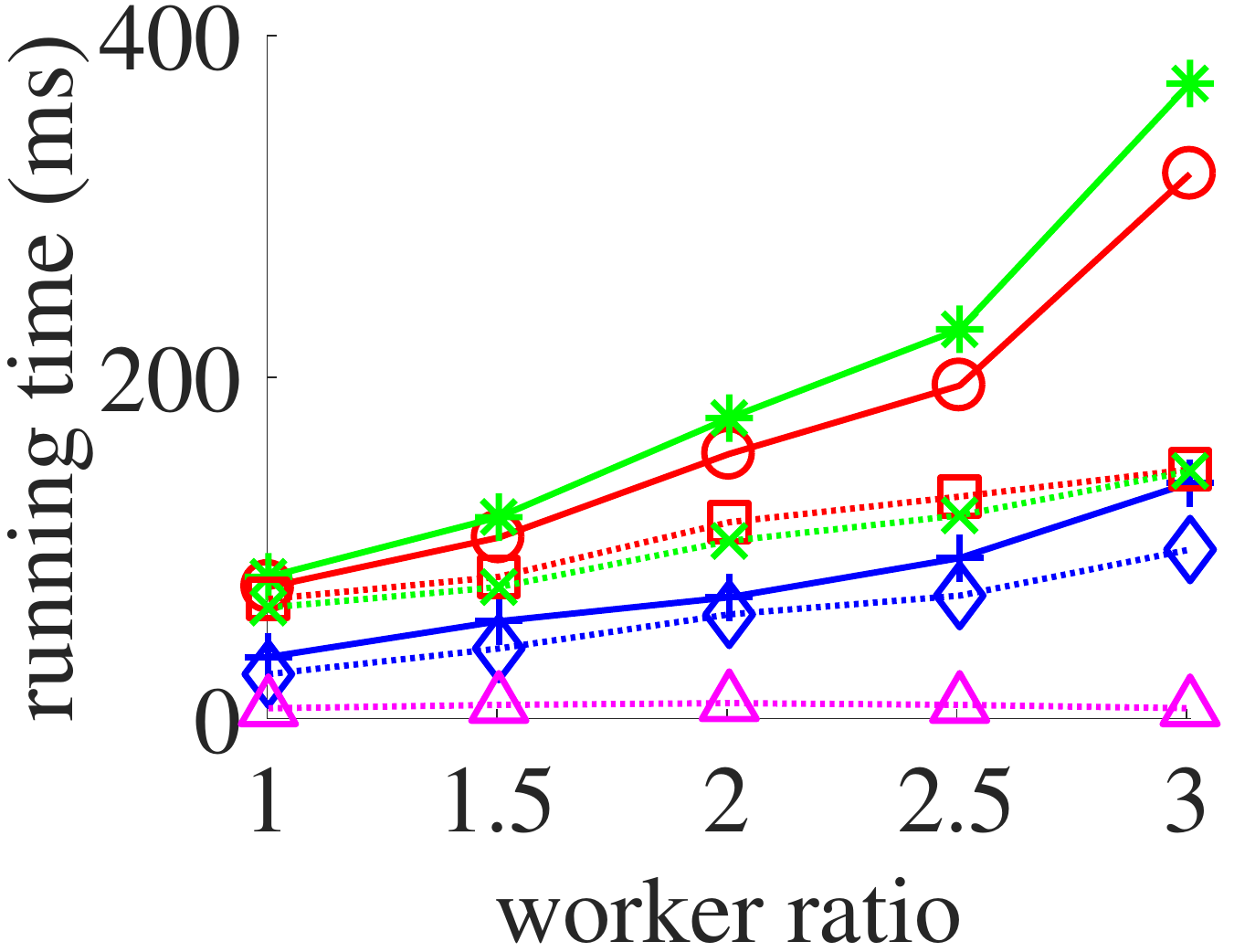}}
		\label{subfig:task_value_time_cost_chengdu}}
	\subfigure[][{\scriptsize \emph{normal}}]{
		\scalebox{0.25}[0.25]{\includegraphics{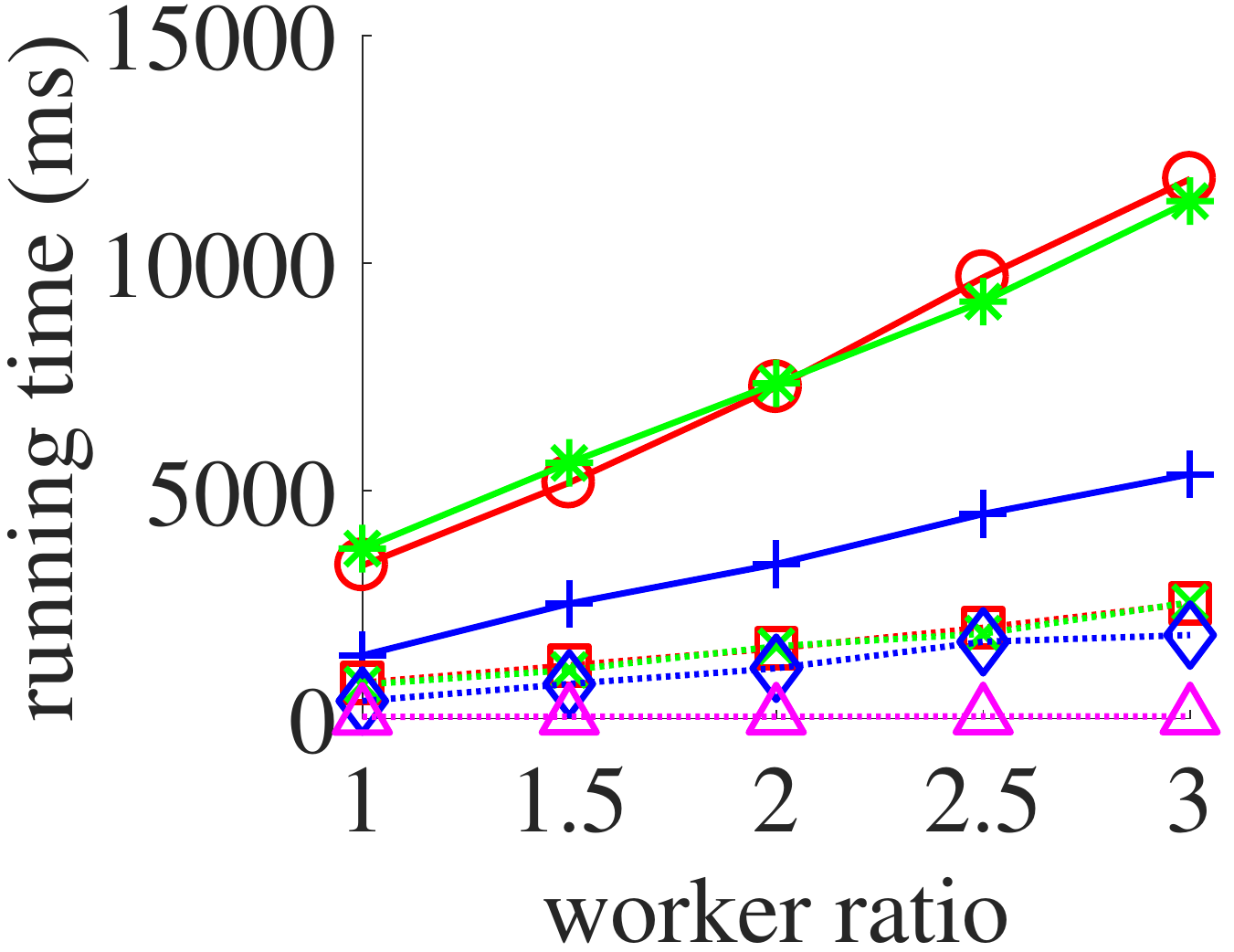}}
		\label{subfig:worker_distribution}}\figureCaptionMargin
	\caption{\small The impact of the worker ratio on the time cost.}\vspace{-2ex}
	\label{fig:worker_ratio_time_cost}
\end{figure}

\begin{figure}[t!]\centering 
\subfigure{
  \scalebox{0.33}[0.33]{\includegraphics{bar2-eps-converted-to.pdf}}}\hfill\\
\addtocounter{subfigure}{-1}\vspace{-2.5ex}
	\subfigure[][{\scriptsize Average Utility}]{
		\scalebox{0.25}[0.25]{\includegraphics{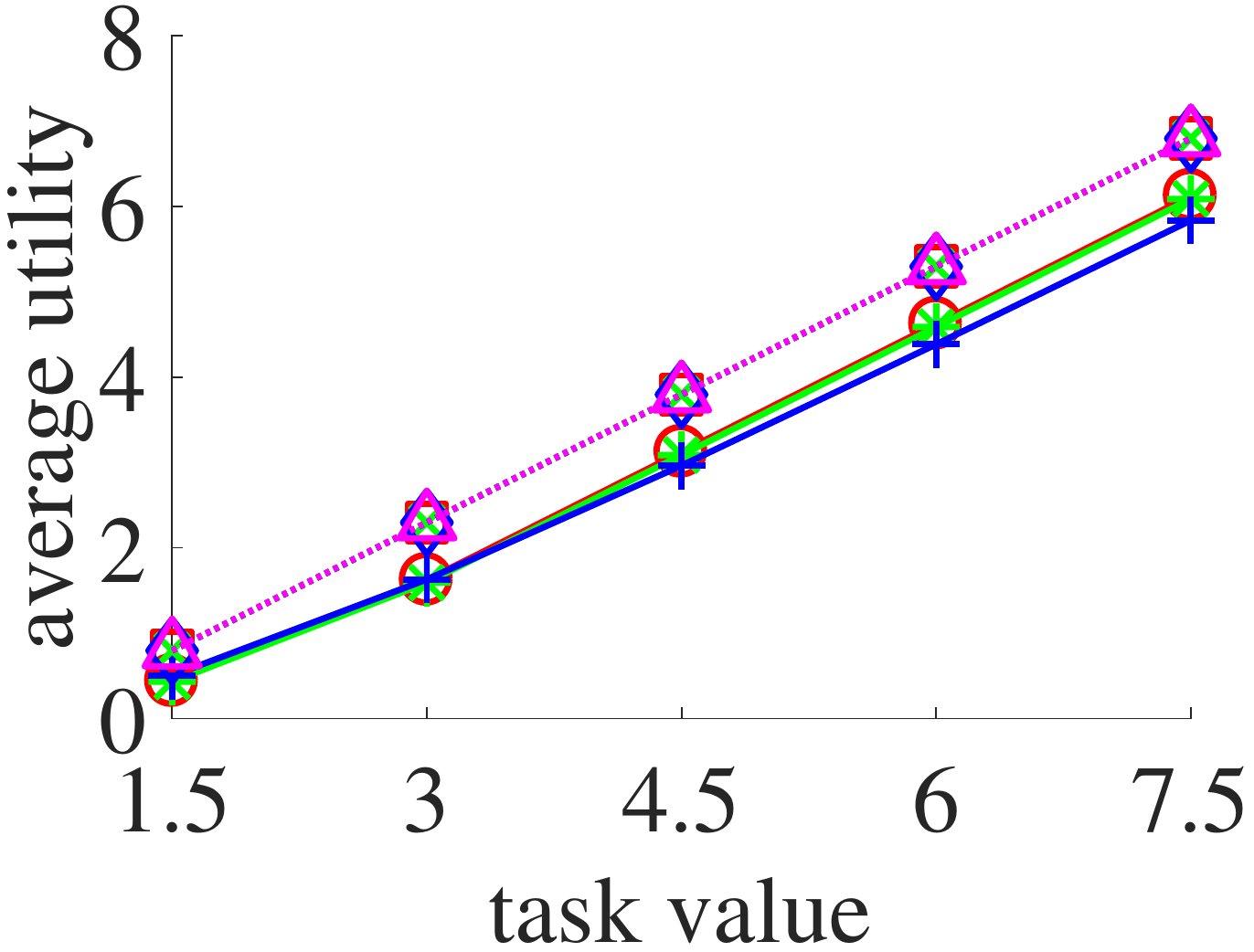}}
		\label{subfig:task_value_utility_chengdu}}
	\subfigure[][{\scriptsize Relative Deviation of Utility}]{
		\scalebox{0.25}[0.25]{\includegraphics{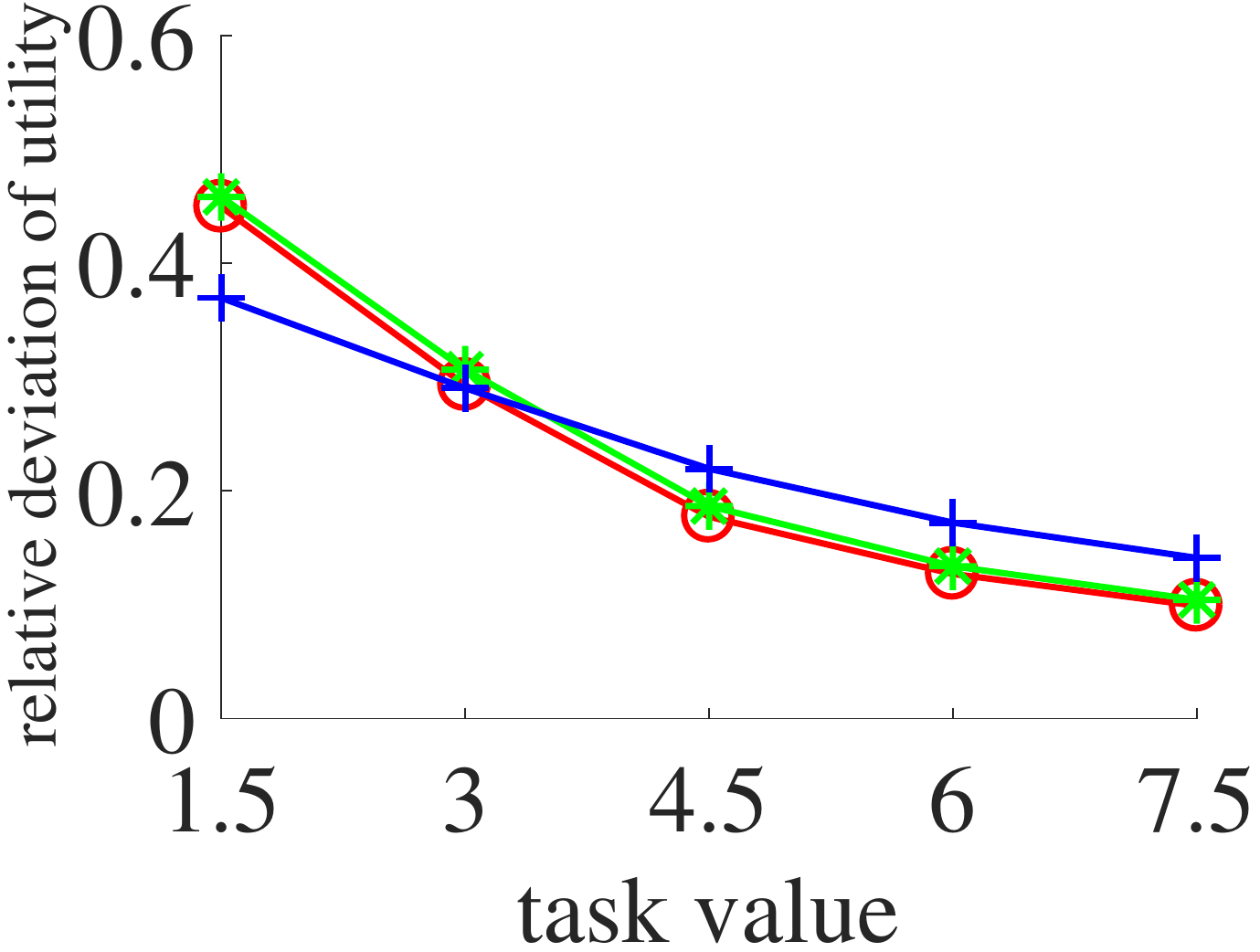}}
		\label{subfig:task_value_utility_deviation_chengdu}}\figureCaptionMargin
	\caption{\small The impact of the task value on the utility  (\emph{chengdu}).}
	\label{fig:task_value_utility_chengdu}
\end{figure}

We run our experiment on an Intel(R) Xeon(R) Silver 4210R CPU @ 2.4GHz with 128 GB RAM in Java.

\subsection{Measures}
We design a utility-based empirical measure of the efficiency of our proposed mechanisms.

\textbf{Average Utility:}
We define the average utility $U_\textrm{AVG}$ as $\frac{\sum_{(i,j)\in M}U_j(i)}{|M|}$, which means the average utility value of a successful task-worker pair.

\textbf{Relative Deviation of Utility:}
Let the utility of {non-private} solutions be $U_\textrm{NP}$ and privacy ones be $U_\textrm{P}$.
We define the relative deviation of utility $U_\textrm{RD}$ as $\frac{U_\textrm{NP}-U_\textrm{P}}{U_\textrm{NP}}$.

\textbf{Average Travel Distance:} We define the average travel distance $D_\textrm{AVG}$ as $\frac{\sum_{(i,j)\in M}d_{i,j}}{|M|}$, which means the average travel distance of a successful task-worker pair.

\textbf{Relative Deviation of Distance:}
Let the distance of {non-private} solutions be $D_\textrm{NP}$ and privacy ones be $D_\textrm{P}$.
We define the relative deviation of distance $D_\textrm{RD}$ as $\frac{D_{\textrm{P}} - D_\textrm{NP}}{D_\textrm{NP}}$.

\subsection{Experimental Result}

\subsubsection{Time Cost}
Figure~\ref{fig:worker_ratio_time_cost} shows the time cost on different worker ratio from 1 to 3 while the other parameters are in the default values in Table~\ref{tab_experiment_parameter_settings}.
We can see that the time cost increases linearly with the worker ratio.
That is because when we fix the task quantity, as the worker ratio becomes larger, the competition between workers will become more fierce, and it will cost more time to finish the whole competition.

Besides, we can find that \EXPSolutionA{} costs nearly the same time over the change of worker ratio.
\EXPSolutionB{} costs much less time than \EXPSolutionA{} and \EXPSolutionCMP{}.
Compared with \EXPSolutionCMP{}, \EXPSolutionB{} costs about 52\%--63\% less time in \emph{chengdu} and 50\%--63\% in \emph{normal}.

\begin{figure}[t!]\centering\vspace{-3ex}
	\subfigure{
		\scalebox{0.33}[0.33]{\includegraphics{bar2-eps-converted-to.pdf}}}\hfill\\
	\addtocounter{subfigure}{-1}\vspace{-2.5ex}
	\subfigure[][{\scriptsize Average Utility}]{
		\scalebox{0.25}[0.25]{\includegraphics{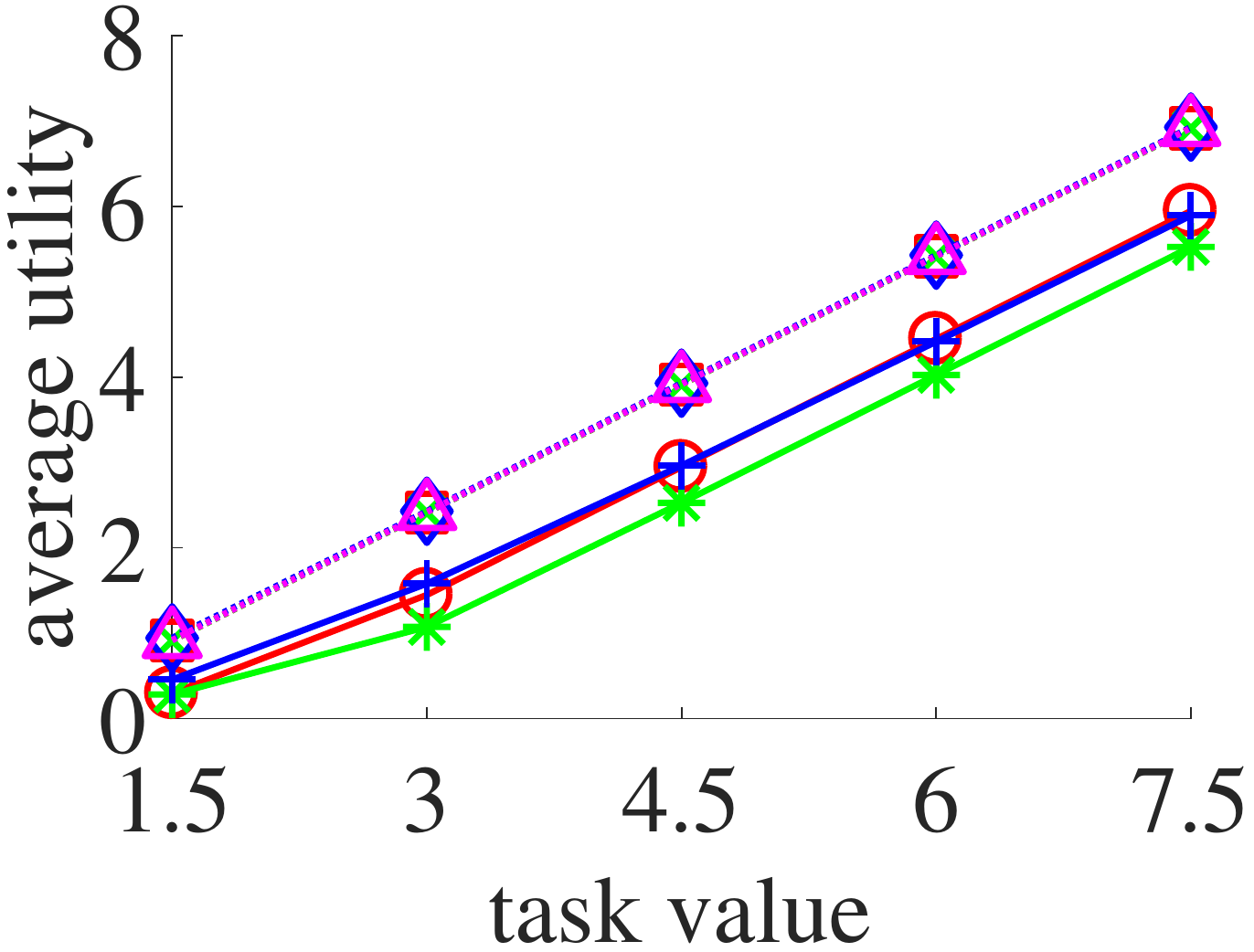}}
		\label{subfig:task_value_utility_normal}}
	\subfigure[][{\scriptsize Relative Deviation of Utility}]{
		\scalebox{0.25}[0.25]{\includegraphics{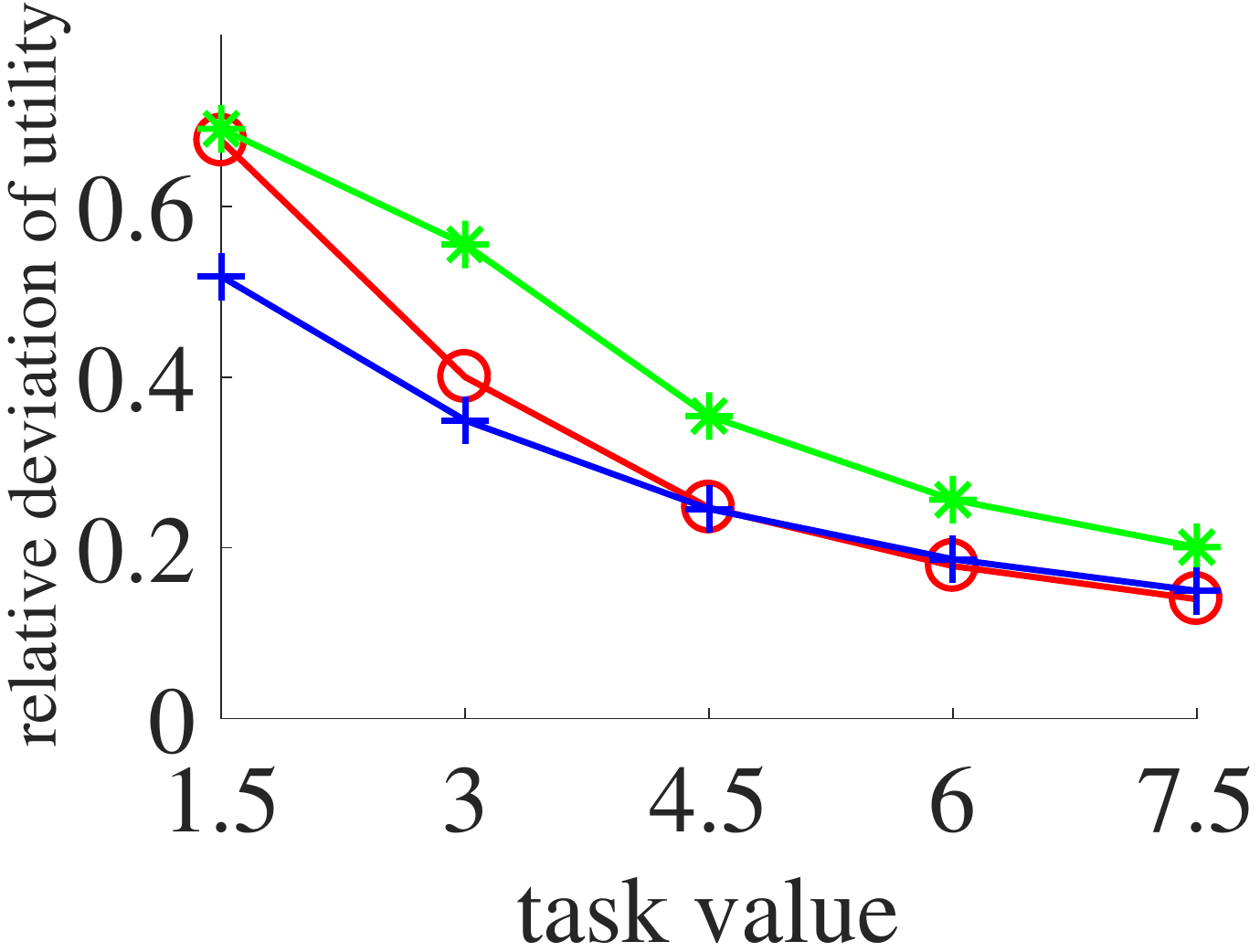}}
		\label{subfig:task_value_utility_deviation_normal}}\figureCaptionMargin
	\caption{\small The impact of the task value on the utility  (\emph{normal}).}\vspace{-2ex}
	\label{fig:task_value_utility_normal}
\end{figure}

\begin{figure}[t!]\centering 
\subfigure{
  \scalebox{0.33}[0.33]{\includegraphics{bar2-eps-converted-to.pdf}}}\hfill\\
\addtocounter{subfigure}{-1}\vspace{-2.5ex}
	\subfigure[][{\scriptsize Average Utility}]{
		\scalebox{0.25}[0.25]{\includegraphics{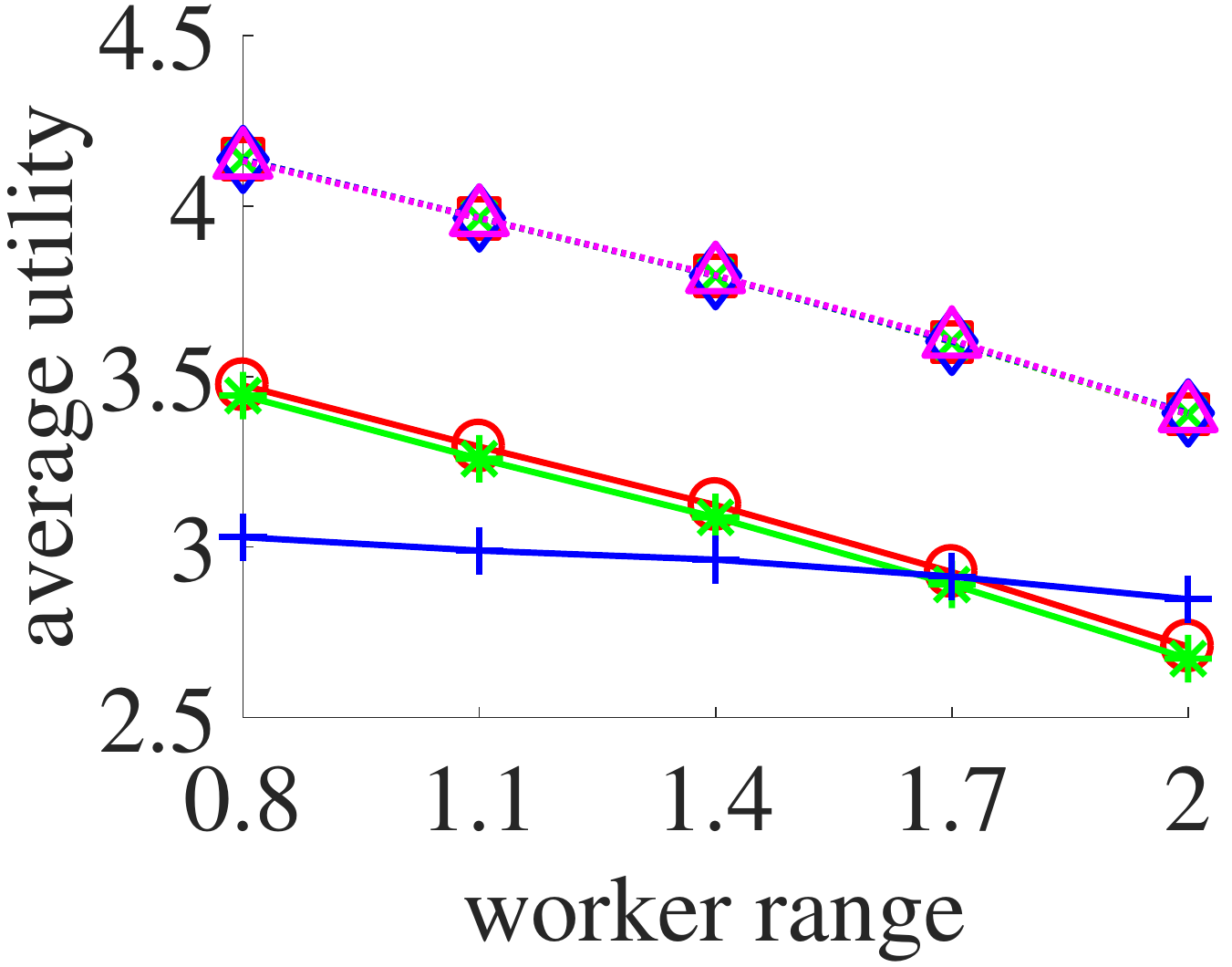}}
		\label{subfig:worker_range_utility_chengdu}}
	\subfigure[][{\scriptsize Relative Deviation of Utility}]{
		\scalebox{0.25}[0.25]{\includegraphics{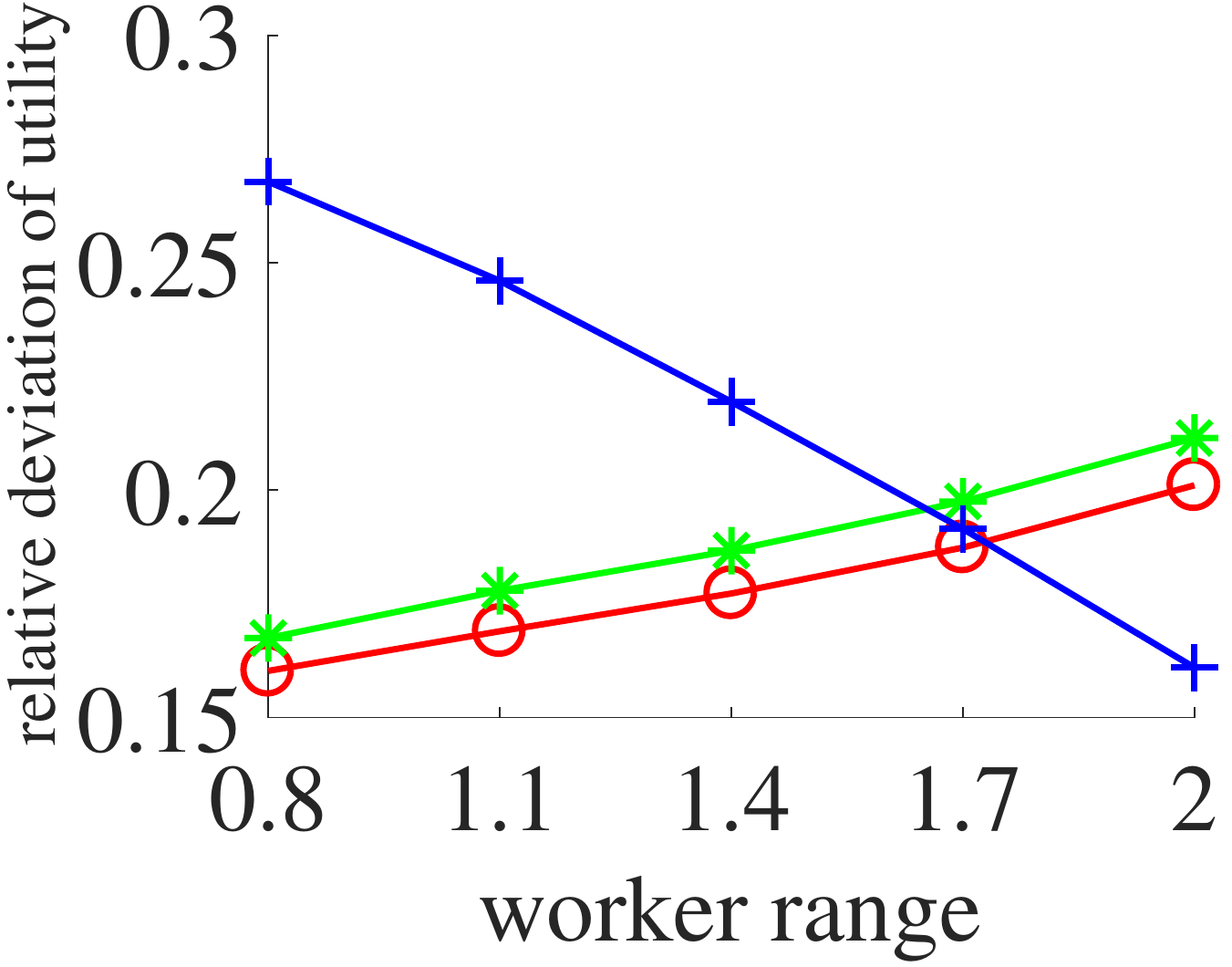}}
		\label{subfig:worker_range_utility_deviation_chengdu}}\figureCaptionMargin
	\caption{\small The impact of the worker range on the utility  (\emph{chengdu}).}
	\label{fig:worker_range_utility_chengdu}
\end{figure}

\subsubsection{Average Utility}
Figure~\ref{fig:task_value_utility_chengdu} and \ref{fig:task_value_utility_normal} show the relation between the utility and the task value on \emph{chengdu} and \emph{normal} respectively.
We change the task value from 1.5 to 7.5 and set other parameters as the default values.

In Figure~\ref{subfig:task_value_utility_chengdu} and \ref{subfig:task_value_utility_normal}, the utility increases approximately linear with the task value.
We can see that \EXPSolutionB{} performs worse than \EXPSolutionCMP{} slightly in \emph{chengdu}, but better in \emph{normal}.
\EXPSolutionB{} even performs better than \EXPSolutionA{} in \emph{normal}.
The reason is that \EXPSolutionB{} takes advantage over the other two when the workers' service area contains many tasks.
The data in \emph{chengdu} is of road network data which is sparser than that in \emph{normal}.
Thus {when we fix} the service area, a worker in \emph{chengdu} can propose to fewer tasks than that in \emph{normal} on average, which leads to poor utility for \EXPSolutionB{}. (The following experiment result in Figure~\ref{subfig:worker_range_utility_chengdu} proves this inference.)
\EXPSolutionA{} performs better than \EXPSolutionCMP{} in both of the two data sets.
The relative deviation of utility impacted by the task value is shown in Figure~\ref{subfig:task_value_utility_deviation_chengdu} and \ref{subfig:task_value_utility_deviation_normal}.
We can see that the relative deviation of utility decreases with the task value increase from 1.5 to 7.5, which means the absolute deviation between the private and non-private solutions keeps nearly stable.
And when the task value becomes larger and larger, the utility of private solutions equals that of non-private solutions asymptotically.

Figure~\ref{fig:worker_range_utility_chengdu} and \ref{fig:worker_range_utility_normal} show the relation between the utility and the worker range on \emph{chengdu} and \emph{normal} respectively.
The worker service area (denoted as worker range) increases from 0.8 to 2, and the other parameters are set as default values.
The average utility depends on the total utility and the matching quantity.
Specifically, in Figure~\ref{subfig:worker_range_utility_chengdu}, the average utility of all solutions decreases when the worker range increases from 0.8 to 2.
It is because when worker service areas become larger, more workers (who have no task to propose to in some small range conditions, denoting them as $\mathcal{W}_L$) will be able to propose to some tasks.
With the ratio of $\mathcal{W}_L$ becoming larger, the average distance to all matching tasks becomes larger, making the average utility smaller.

Besides, we can see that the utility of \EXPSolutionB{} decreases slower than both \EXPSolutionA{} and \EXPSolutionCMP{}. The utility of \EXPSolutionB{} is no less {than} $88\%$ when the worker range is no more than $1.6$. And as the worker range increases, the utility of \EXPSolutionB{} will exceed the other two.
The reason why \EXPSolutionB{} keeps lower decrease is that \EXPSolutionB{} can avoid ineffective competition. When the service area becomes larger, the competition becomes more intense, and the advantage of \EXPSolutionB{} becomes more apparent.

Figure~\ref{subfig:worker_range_utility_deviation_chengdu} shows the relative deviation of utility affected by the worker range. We can see that the utility of \EXPSolutionB{} will tend to that of its {non-private} solution when the worker range becomes larger and larger.
However \EXPSolutionA{} and \EXPSolutionCMP{} deviate more as the worker range becomes larger.
That is because when the worker's service area becomes larger, it has a greater possibility of disturbing a large real distance to a small obfuscated distance or a small real distance to a large obfuscated distance.
Without the guarantee of total utility function $ST$, the total proposing workers' utilities in \EXPSolutionA{} and \EXPSolutionCMP{} decrease dramatically when the worker range increases.

\begin{figure}[t!]\centering\vspace{-3ex}
	\subfigure{
		\scalebox{0.33}[0.33]{\includegraphics{bar2-eps-converted-to.pdf}}}\hfill\\
	\addtocounter{subfigure}{-1}\vspace{-2.5ex}
	\subfigure[][{\scriptsize Average Utility}]{
		\scalebox{0.25}[0.25]{\includegraphics{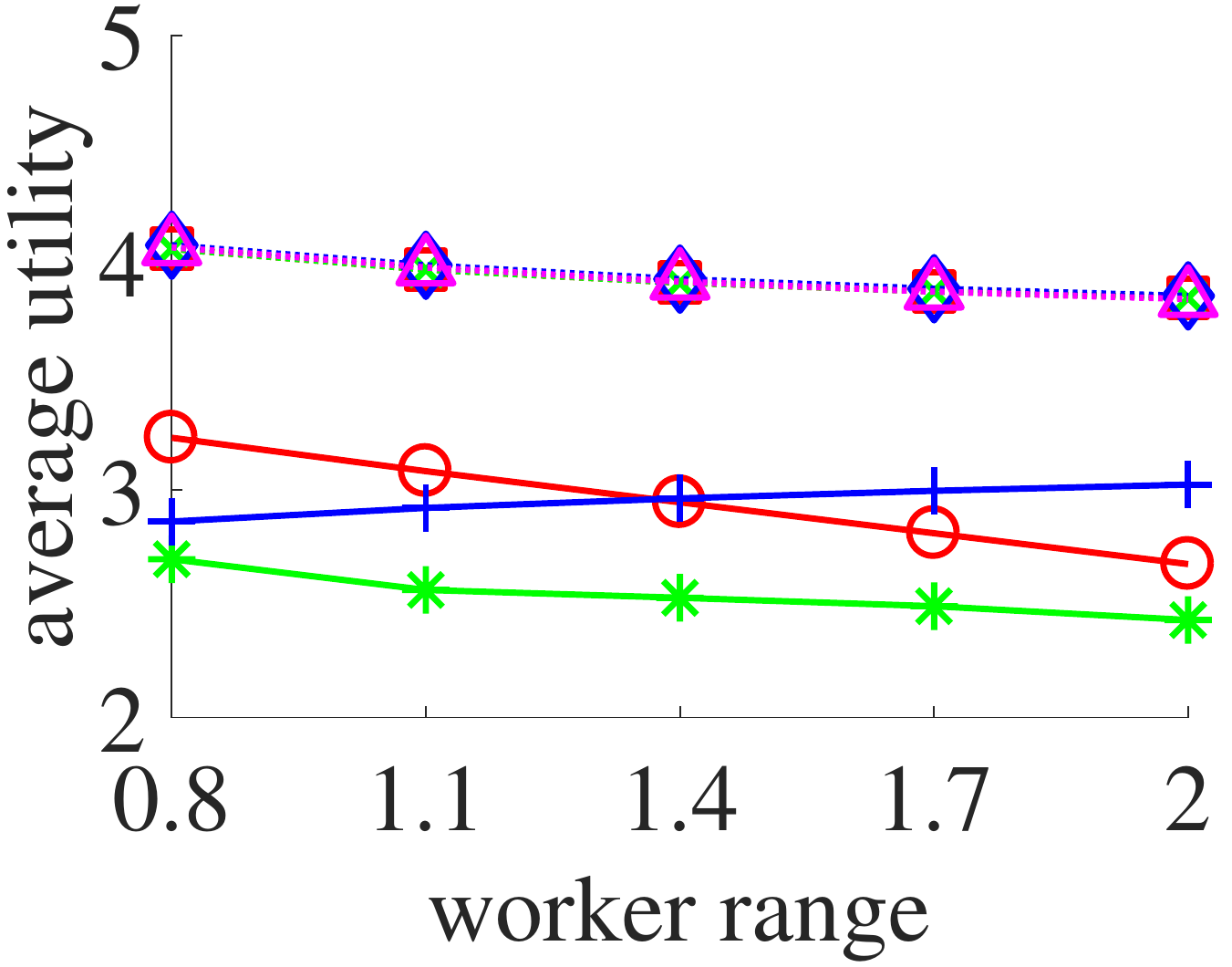}}
		\label{subfig:worker_range_utility_normal}}
	\subfigure[][{\scriptsize Relative Deviation of Utility}]{
		\scalebox{0.25}[0.25]{\includegraphics{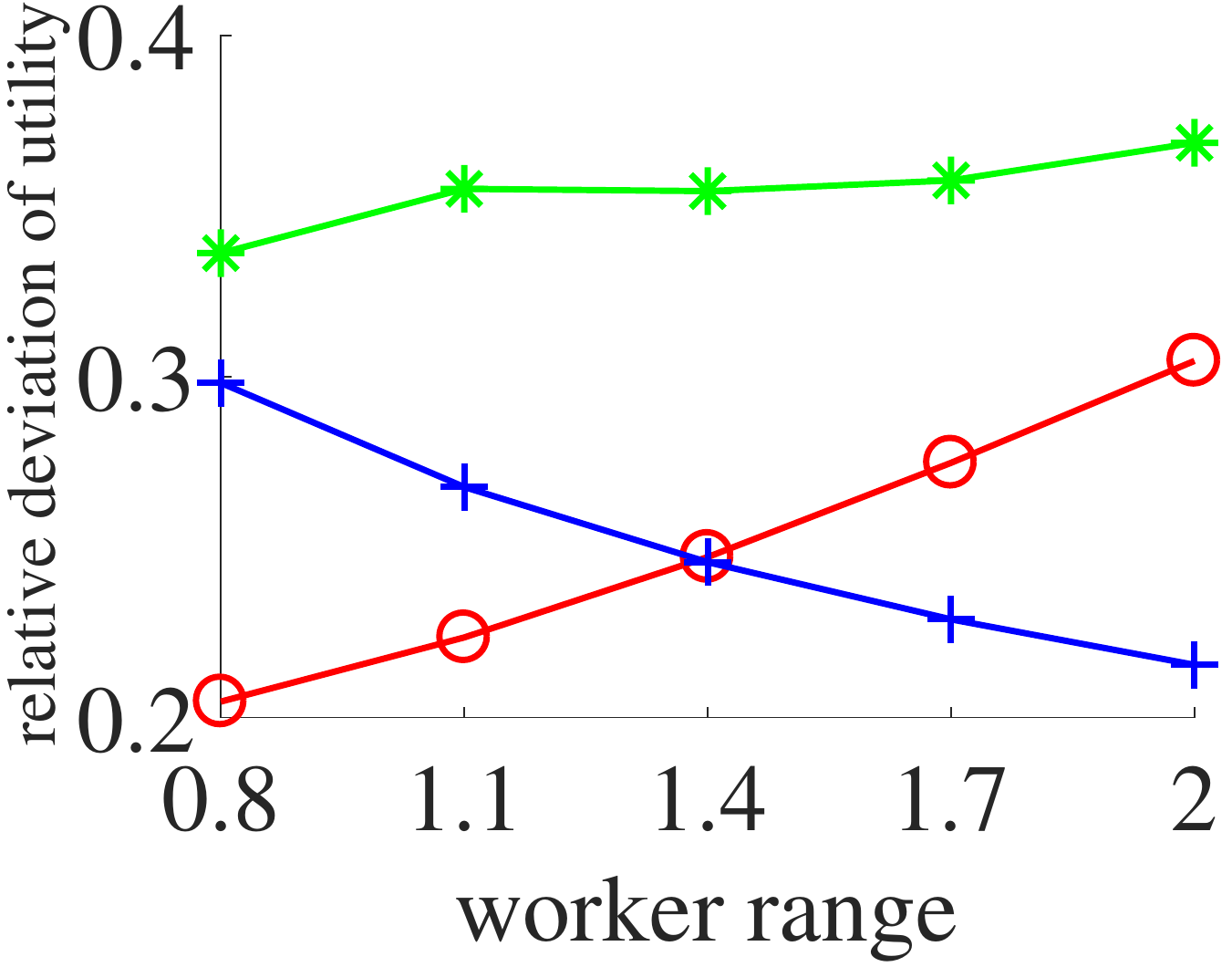}}
		\label{subfig:worker_range_utility_deviation_normal}}\vspace{-1ex}
	\caption{\small The impact of the worker range on the utility  (\emph{normal}).}\vspace{-2ex}
	\label{fig:worker_range_utility_normal}
\end{figure}

\begin{figure}[t!]\centering 
\subfigure{
  \scalebox{0.33}[0.33]{\includegraphics{bar2-eps-converted-to.pdf}}}\hfill\\
\addtocounter{subfigure}{-1}\vspace{-2.5ex}
	\subfigure[][{\scriptsize Average Utility}]{
		\scalebox{0.25}[0.25]{\includegraphics{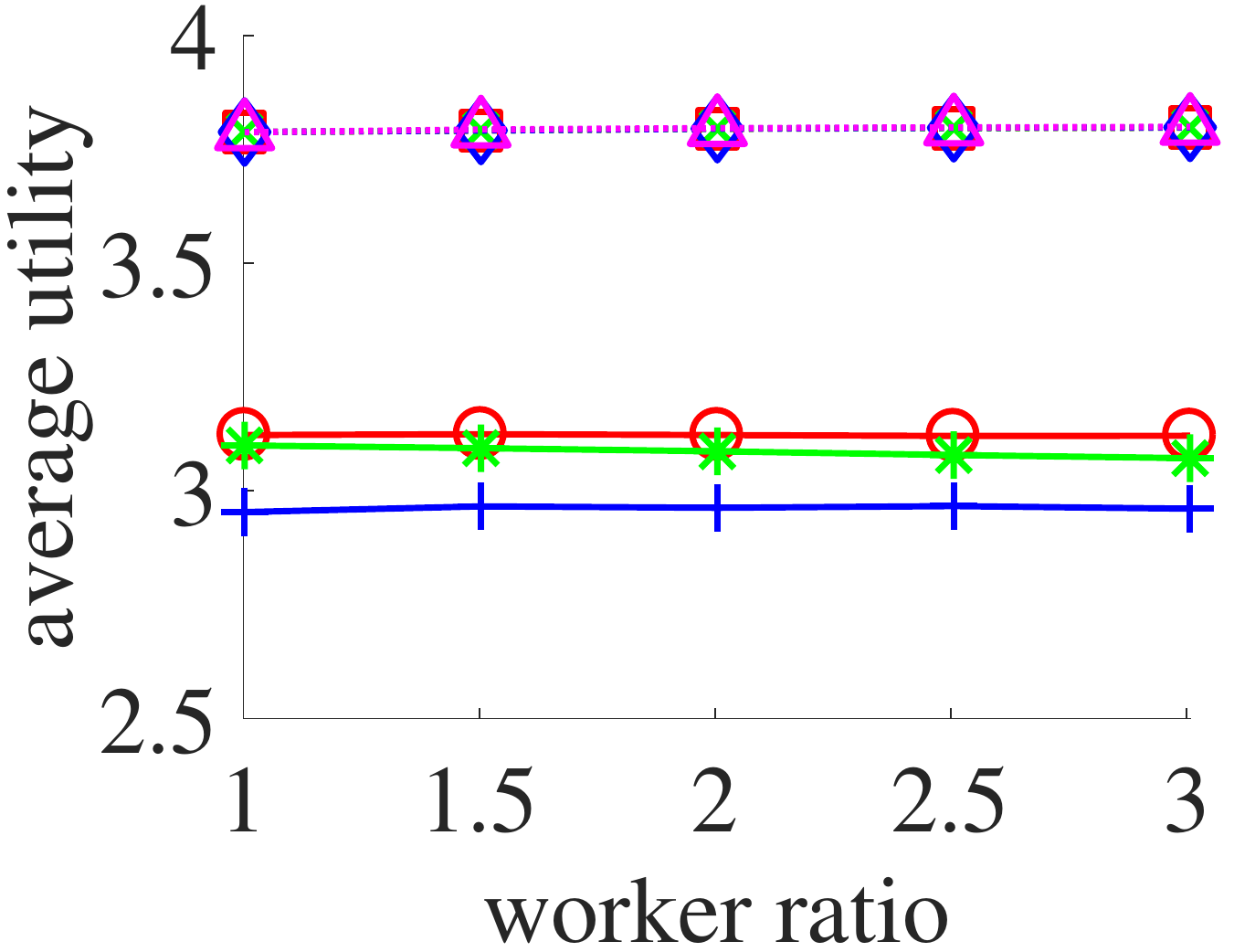}}
		\label{subfig:worker_ratio_utility_chengdu}}
	\subfigure[][{\scriptsize Relative Deviation of Utility}]{
		\scalebox{0.25}[0.25]{\includegraphics{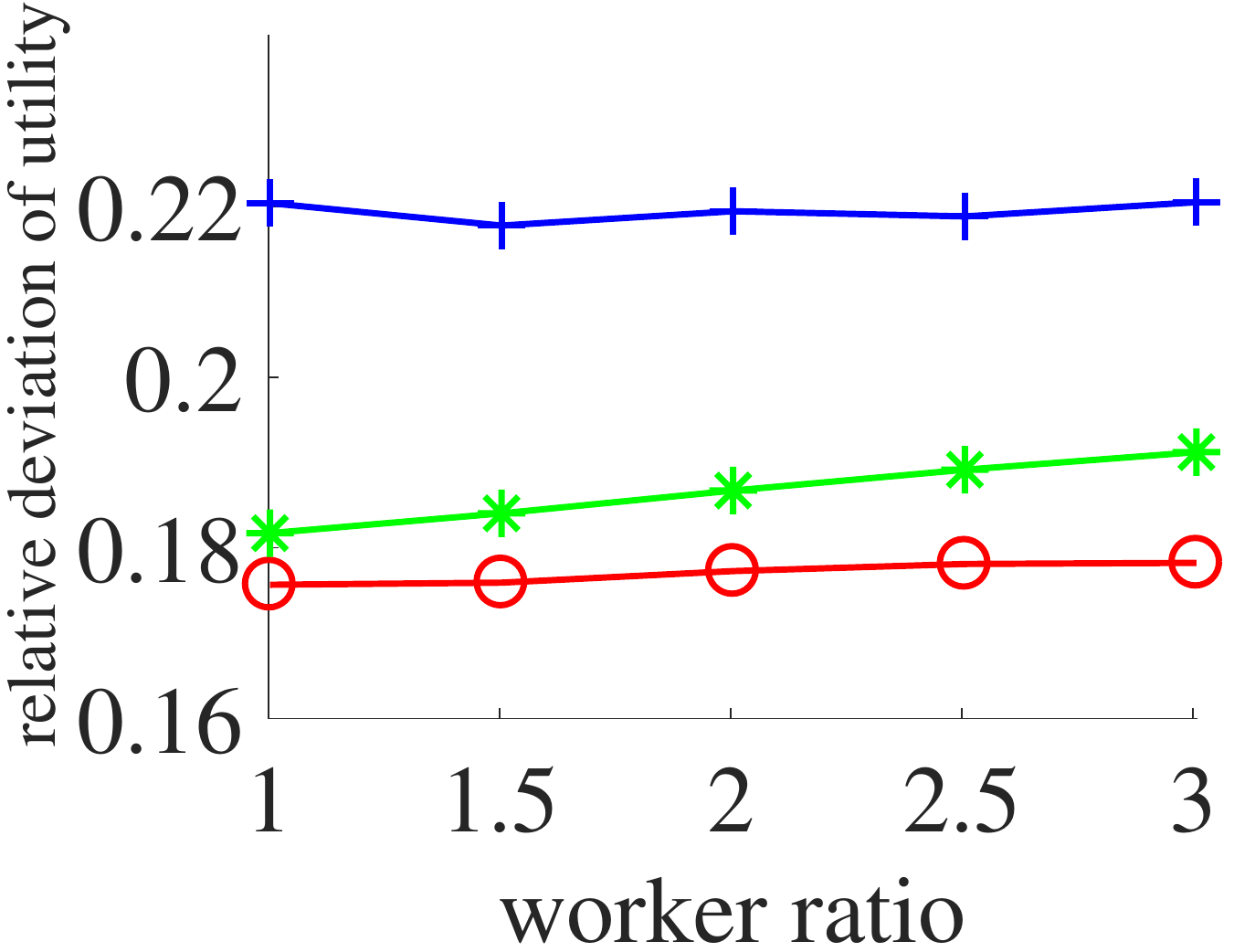}}
		\label{subfig:worker_ratio_utility_deviation_chengdu}}\figureCaptionMargin
	\caption{\small The impact of the worker ratio on the utility  (\emph{chengdu}).}
	\label{fig:worker_ratio_utility_chengdu}
\end{figure}

From Figure~\ref{subfig:worker_range_utility_normal}, we can get the similar conclusion to that in Figure~\ref{subfig:worker_range_utility_chengdu}.
Besides, we can find when the worker range becomes large enough, the decline rate of average utility for \EXPSolutionA{} and \EXPSolutionCMP{} tend to be small.
That is because being too far away will make the utility value non-positive, and the server will not choose.
The average utility of \EXPSolutionB{} increases slightly, which is $16\%$ larger than \EXPSolutionCMP{} on average.
That is because \EXPSolutionB{} can increase the total utility more rapidly than the matching quantity.

Figure~\ref{fig:worker_ratio_utility_chengdu} and \ref{fig:worker_ratio_utility_normal} show the relation between the utility and the worker ratio.
From figure~\ref{subfig:worker_ratio_utility_chengdu} and \ref{subfig:worker_ratio_utility_normal}, we can see that the worker ratio does not affect the average utility very much.
That is because the increase of workers does not significantly increase proposing workers.
Besides, we can see that \EXPSolutionA{} always keeps a higher average utility than \EXPSolutionCMP{}. And \EXPSolutionB{} performs worse than \EXPSolutionCMP{} in \emph{chengdu} but better in \emph{normal}.

\subsubsection{Average Travel Distance}
Figure~\ref{fig:task_value_distance_chengdu} to Figure~\ref{fig:worker_ratio_distance_normal} show the influence of the task value, worker range and worker ratio on the distance.
\EXPSolutionCMP{} is better than \EXPSolutionA{} and \EXPSolutionB{} in most cases.
That is because the goal of \EXPSolutionCMP{} is only to minimize the total travel distance on the platform without considering task value and privacy budget cost.
Besides, we can see that different data sets lead to different comparison results for \EXPSolutionA{}, \EXPSolutionB{} and \EXPSolutionCMP{}.
The average travel distance of \EXPSolutionCMP{} on \emph{normal} outperforms the other two on \emph{chengdu}.

\begin{figure}[t!]\centering\vspace{-3ex}
	\subfigure{
		\scalebox{0.33}[0.33]{\includegraphics{bar2-eps-converted-to.pdf}}}\hfill\\
	\addtocounter{subfigure}{-1}\vspace{-2.5ex}
	\subfigure[][{\scriptsize Average Utility}]{
		\scalebox{0.25}[0.25]{\includegraphics{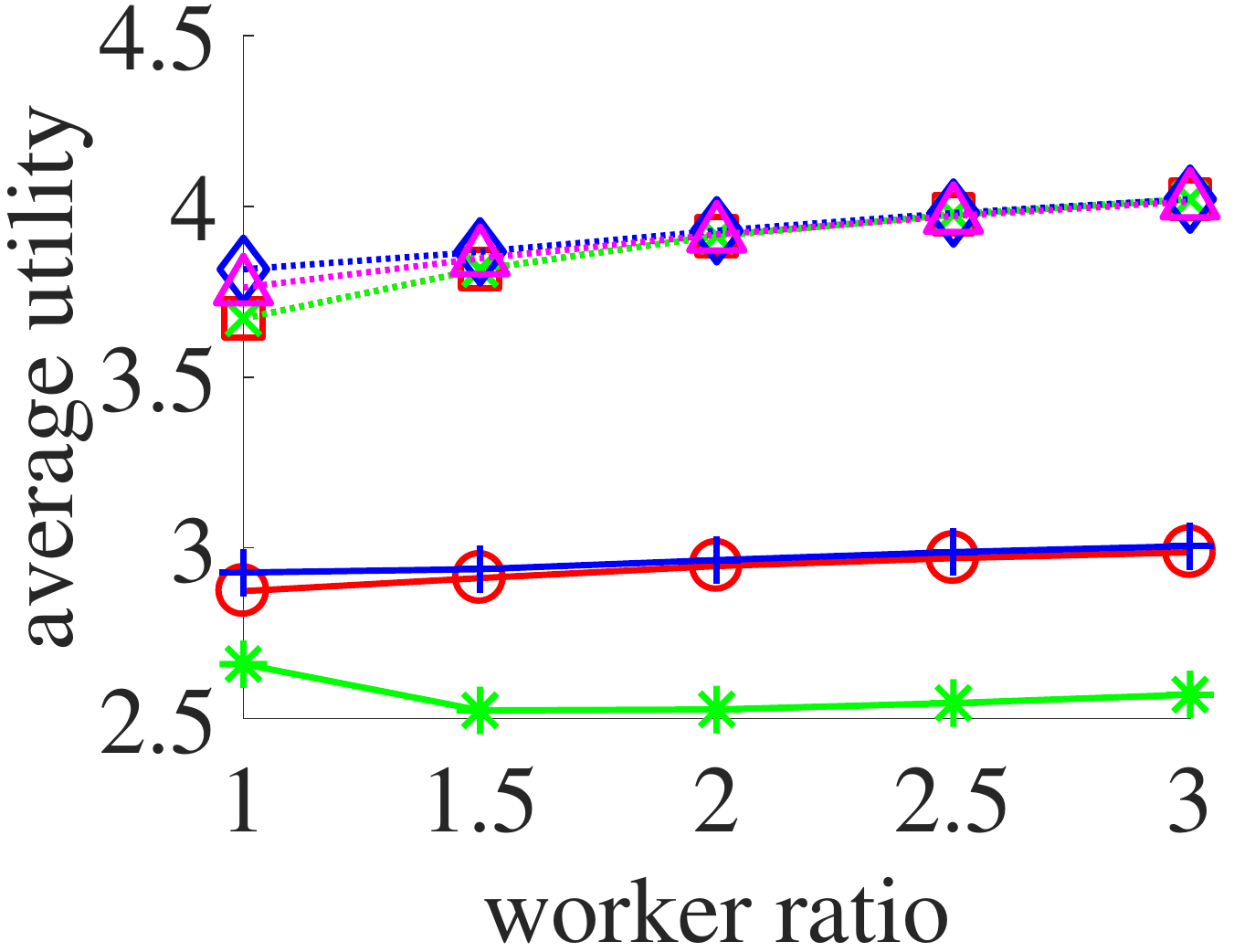}}
		\label{subfig:worker_ratio_utility_normal}}
	\subfigure[][{\scriptsize Relative Deviation of Utility}]{
		\scalebox{0.25}[0.25]{\includegraphics{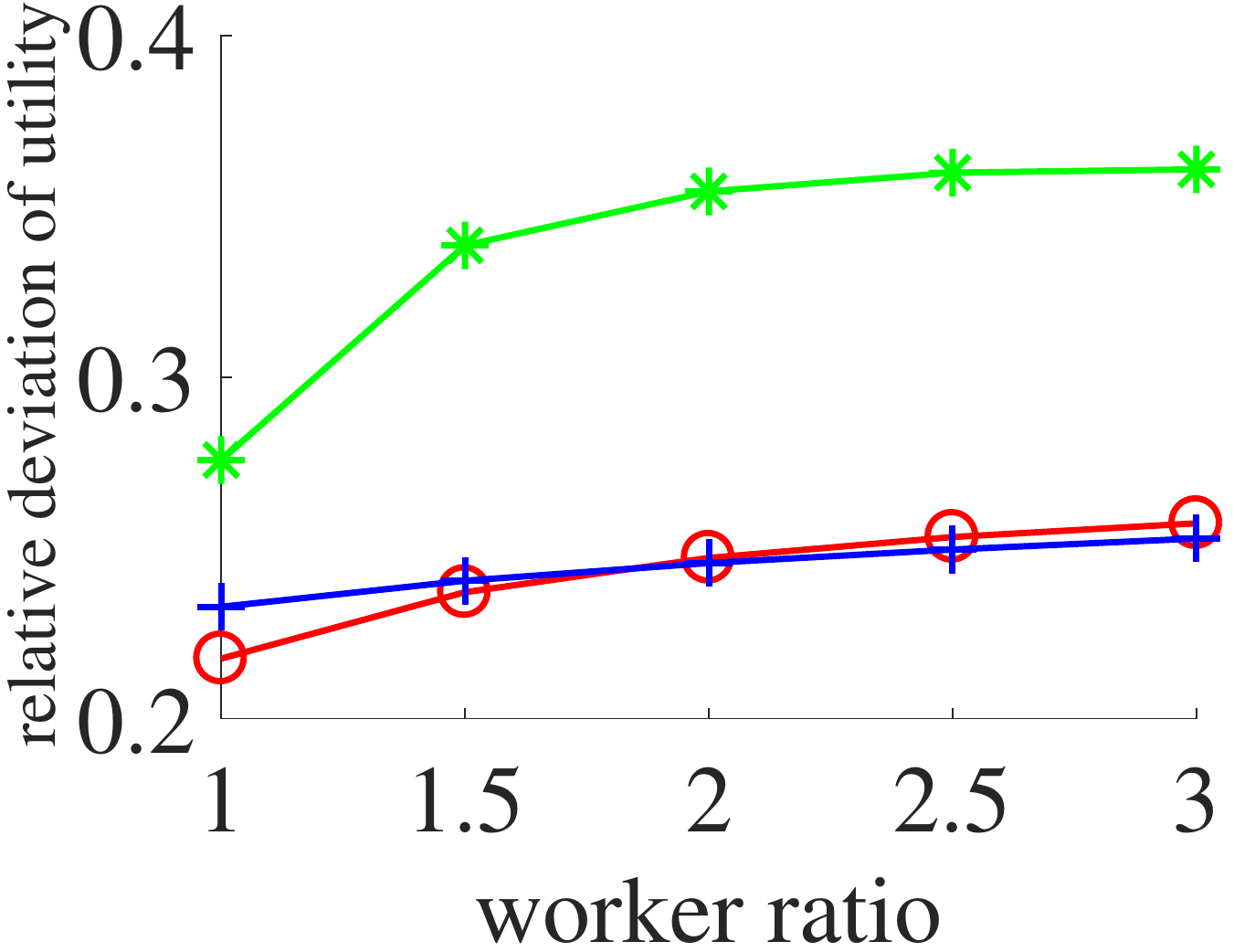}}
		\label{subfig:worker_ratio_utility_deviation_normal}}\figureCaptionMargin
	\caption{\small The impact of the worker ratio on the utility  (\emph{normal}).}\vspace{-2ex}
	\label{fig:worker_ratio_utility_normal}
\end{figure}

\begin{figure}[t!]\centering
	\subfigure{
		\scalebox{0.33}[0.33]{\includegraphics{bar2-eps-converted-to.pdf}}}\hfill\\
	\addtocounter{subfigure}{-1}\vspace{-2.5ex}
	\subfigure[][{\scriptsize Average Distance}]{
		\scalebox{0.25}[0.25]{\includegraphics{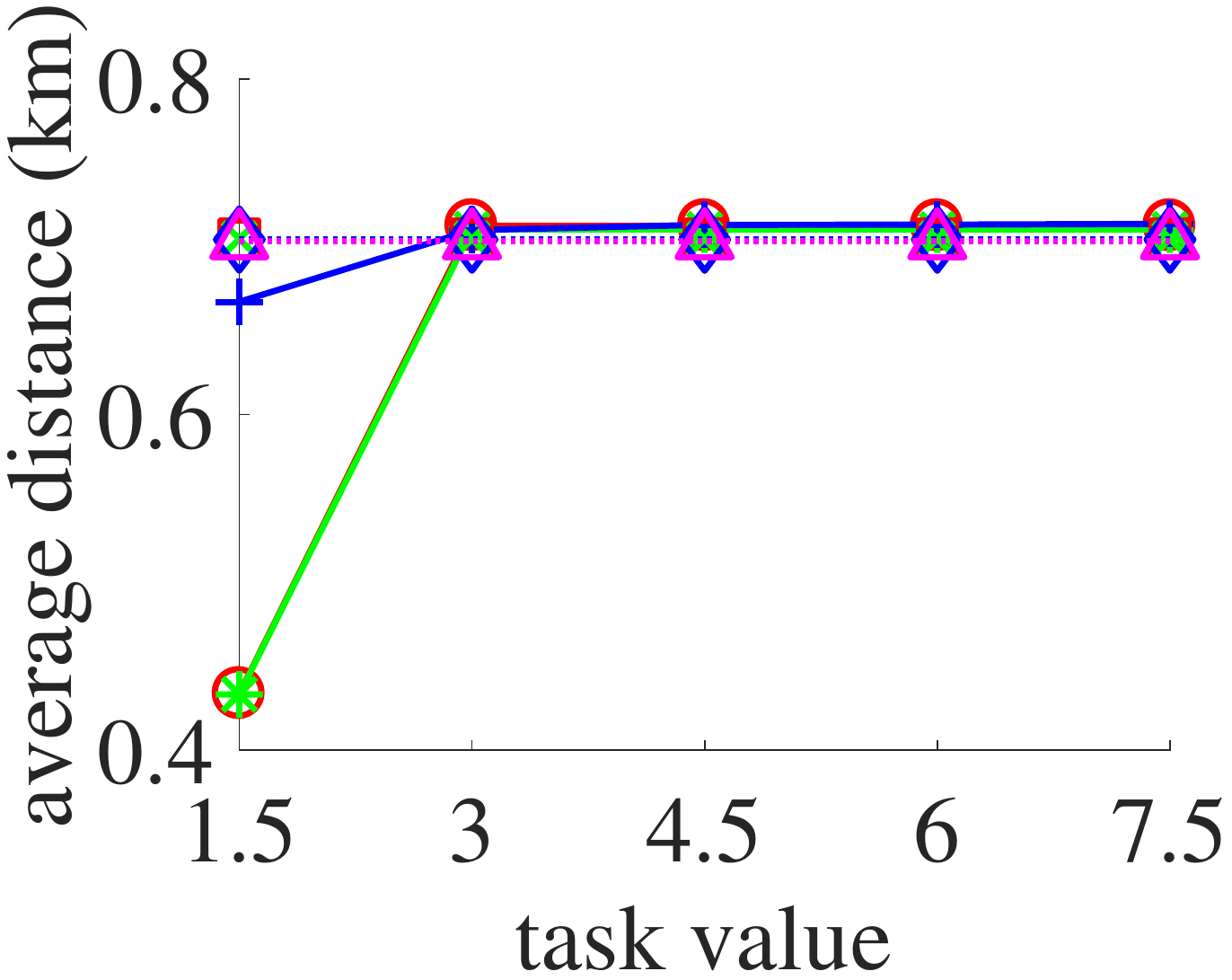}}
		\label{subfig:task_value_distance_chengdu}}
	\subfigure[][{\scriptsize Relative Deviation of Distance}]{
		\scalebox{0.25}[0.25]{\includegraphics{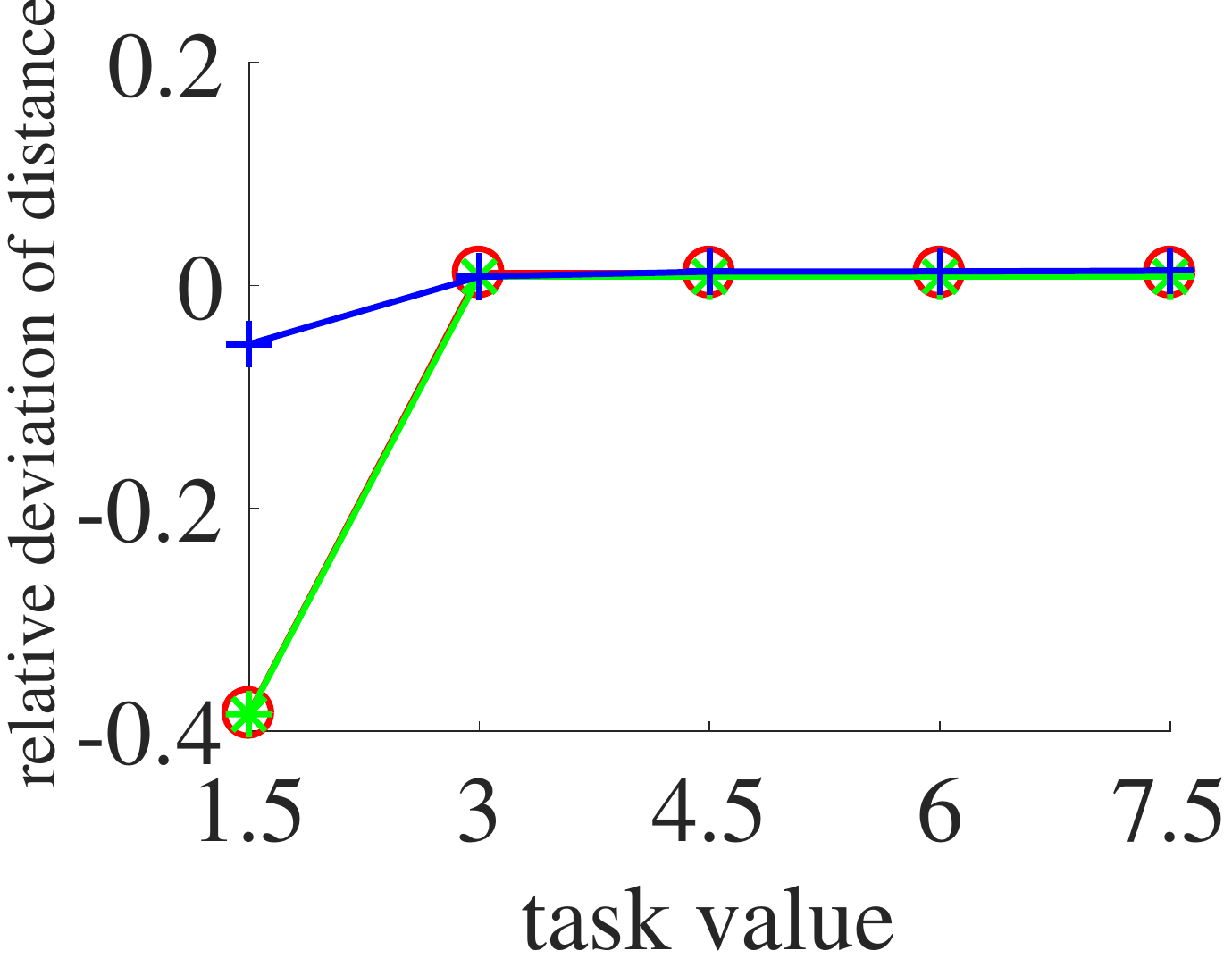}}
		\label{subfig:task_value_distance_deviation_chengdu}}\figureCaptionMargin
	\caption{\small The impact of the task value on the distance  (\emph{chengdu}).}
	\label{fig:task_value_distance_chengdu}
\end{figure}

Figure~\ref{fig:task_value_distance_chengdu} and \ref{fig:task_value_distance_normal} show the relation between the average distance and the task values. We can see that task values do not affect the average distance when the task value is larger than 3. That is because when the task value is large enough, it will not affect the difference between the two utility values. Workers will not choose many tasks in their range when the task value is minimal, leading to a small average distance.
Besides, \EXPSolutionA{} is better than \EXPSolutionB{} slightly but worse than \EXPSolutionCMP{}. However, the difference of the distances between \EXPSolutionA{} and \EXPSolutionCMP{} keeps stable as task value increases.

Figure~\ref{fig:worker_range_distance_chengdu} and \ref{fig:worker_range_distance_normal} show the relation between the average distance and the worker service area.
We can see that the average distance increases when worker range increases.
That is because a larger range will lead to more proposing workers with far distance, making the average distance larger.
The average distances of \solutionA{} and \solutionB{} are nearly equal. They are worse than \EXPSolutionCMP{} with nearly {fixed} difference distance value when the worker range is larger than 1.4.

Figure~\ref{fig:worker_ratio_distance_chengdu} and \ref{fig:worker_ratio_distance_normal} show the relation between the average distance and the worker ratio.
Especially in figure~\ref{subfig:worker_ratio_distance_chengdu}, the average distance in non-privacy solutions decreases when the worker ratio increases. That is because, with the increase in workers, the competition has become rigorous.
The number of tasks limits the increase of workers' proposals, and a task will be allocated to the worker at a small distance. Therefore, the average becomes smaller with the worker ratio becoming larger.
As for privacy solutions, competition will also cost more privacy budget on utility value, which will relieve the reduction in privacy solutions.
Similar to the comparison result before, \EXPSolutionCMP{} is better than the other two schemes when the worker ratio is larger than 1.5.

\subsubsection{PPCF and Non-PPCF}
We compare our \solutionA{} and the \solutionCMP{} with non-PPCF ones (\solutionA{}-nppcf and \solutionCMP-nppcf).
We fix the task value as 4.5, the worker range distance as 1.4, and the worker ratio as 2.
We divide the privacy budget range into 5 groups shown in Table~\ref{tab_experiment_parameter_settings}.

Figure~\ref{fig:privacy_budget_utility} shows the relation between the average utility and the privacy budget.
We mark the median of each interval as the value of the x-axis.

\begin{figure}[t!]\centering\vspace{-3ex}
	\subfigure{
		\scalebox{0.33}[0.33]{\includegraphics{bar2-eps-converted-to.pdf}}}\hfill\\
	\addtocounter{subfigure}{-1}\vspace{-2.5ex}
	\subfigure[][{\scriptsize Average Distance}]{
		\scalebox{0.25}[0.25]{\includegraphics{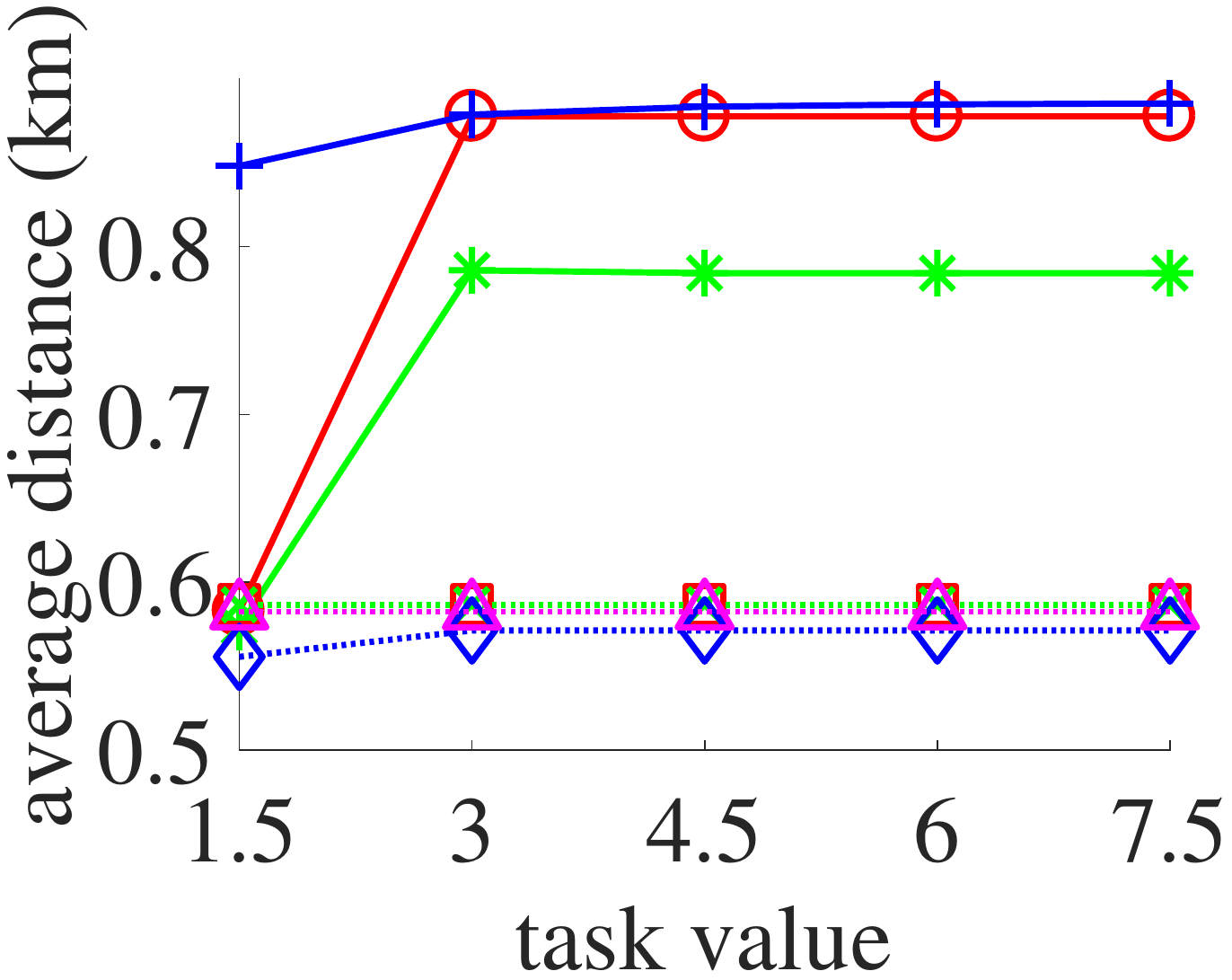}}
		\label{subfig:task_value_distance_normal}}
	\subfigure[][{\scriptsize Relative Deviation of Distance}]{
		\scalebox{0.25}[0.25]{\includegraphics{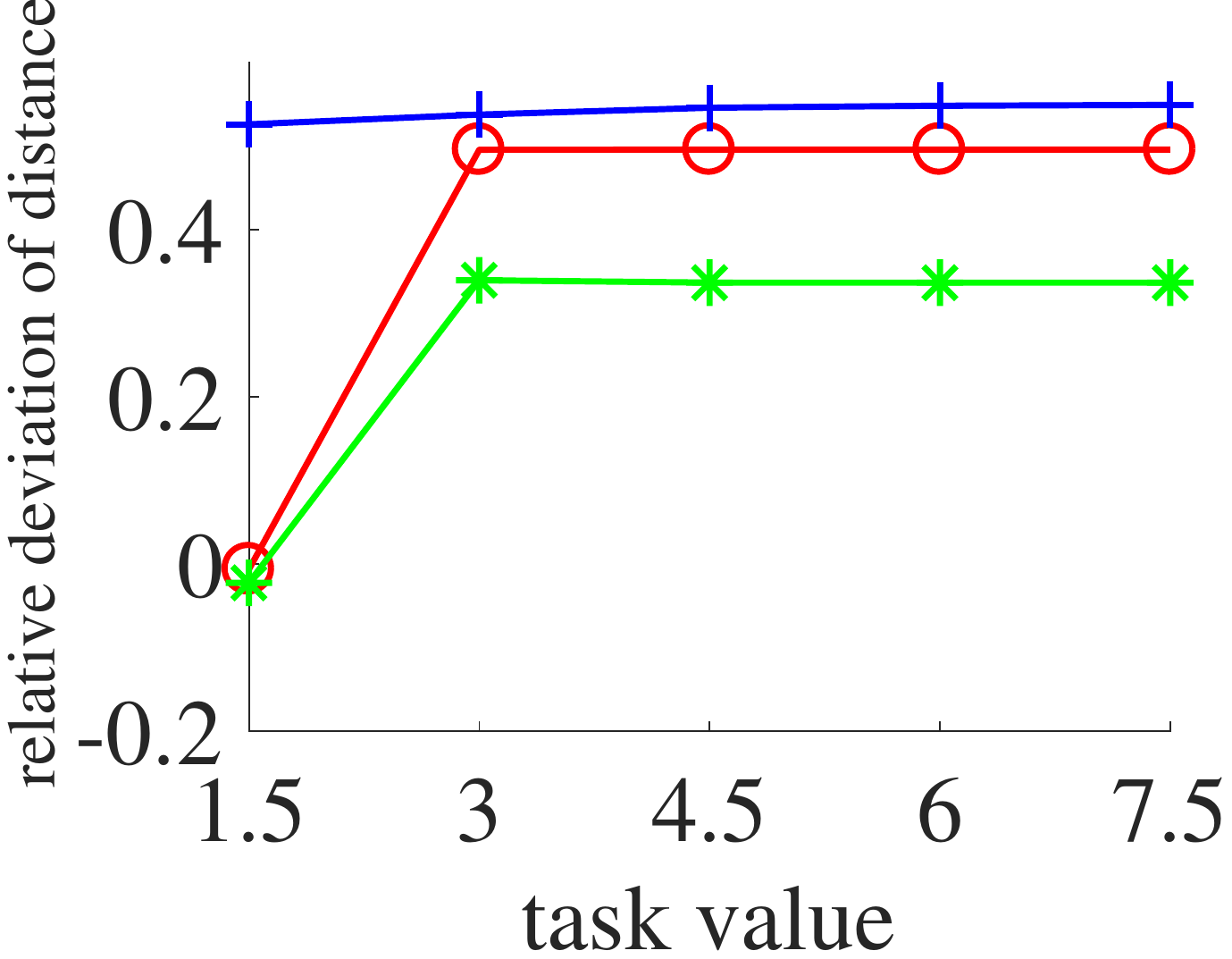}}
		\label{subfig:task_value_distance_deviation_normal}}\figureCaptionMargin
	\caption{\small The impact of the task value on the distance (\emph{normal}).}\vspace{-2ex}
	\label{fig:task_value_distance_normal}
\end{figure}

\begin{figure}[t!]\centering
	\subfigure{
		\scalebox{0.33}[0.33]{\includegraphics{bar2-eps-converted-to.pdf}}}\hfill\\
	\addtocounter{subfigure}{-1}\vspace{-2.5ex}
	\subfigure[][{\scriptsize Average Distance}]{
		\scalebox{0.25}[0.25]{\includegraphics{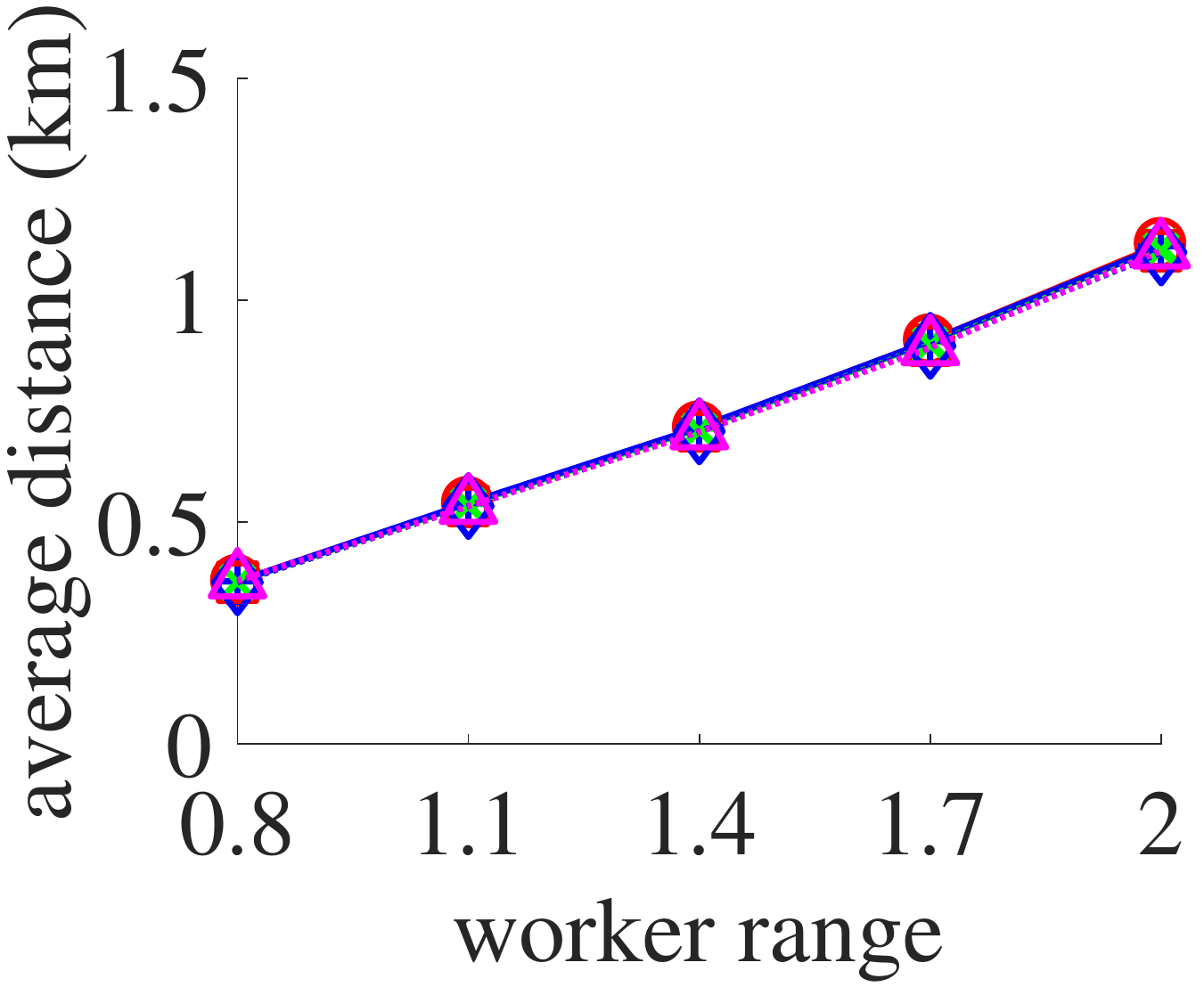}}
		\label{subfig:worker_range_distance_chengdu}}
	\subfigure[][{\scriptsize Relative Deviation of Distance}]{
		\scalebox{0.25}[0.25]{\includegraphics{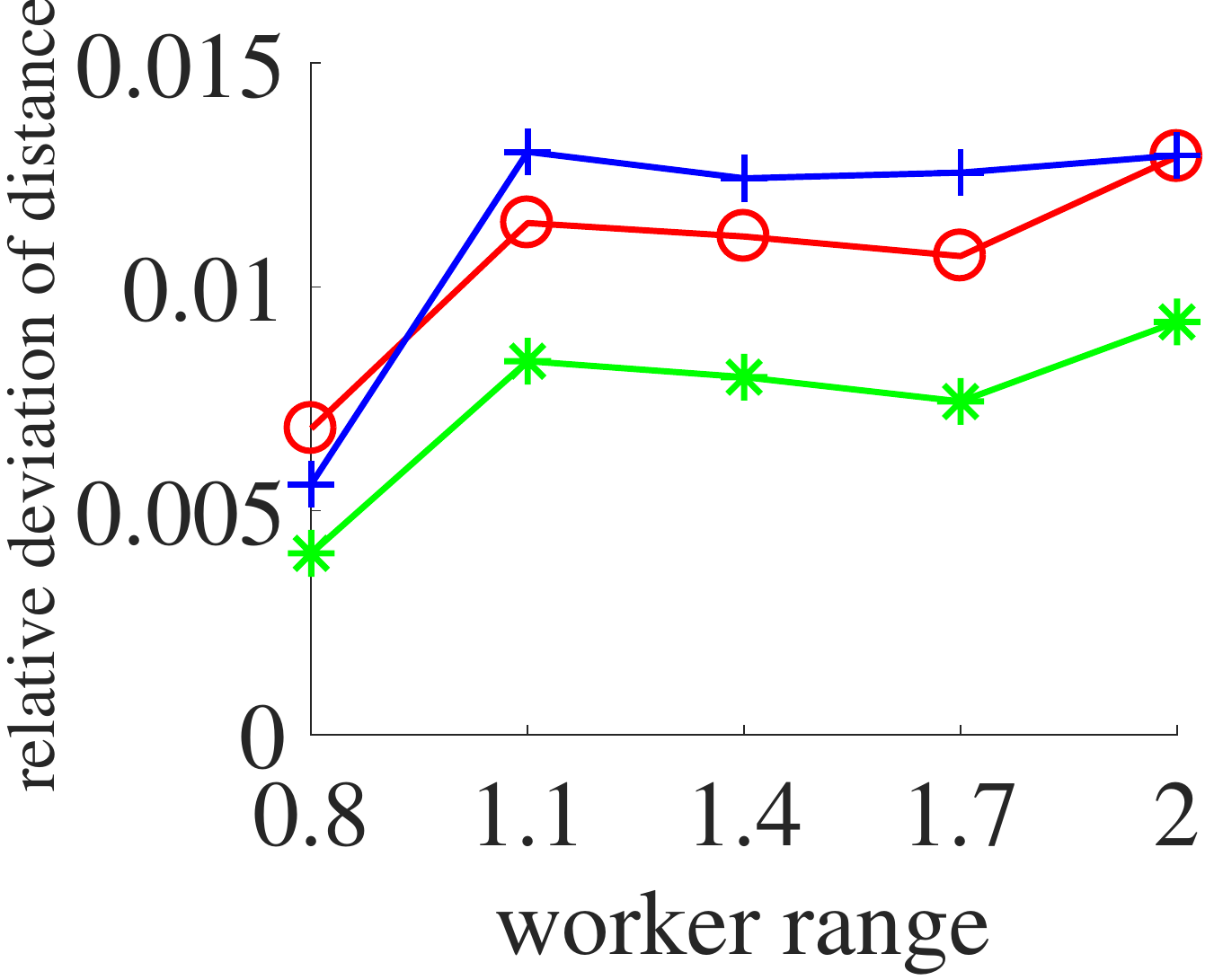}}
		\label{subfig:worker_range_distance_deviation_chengdu}}\figureCaptionMargin
	\caption{\small The impact of the worker range on the distance (\emph{chengdu}).}
	\label{fig:worker_range_distance_chengdu}
\end{figure}

\begin{figure}[t!]\centering 
	\subfigure{
		\scalebox{0.33}[0.33]{\includegraphics{bar2-eps-converted-to.pdf}}}\hfill\\
	\addtocounter{subfigure}{-1}\vspace{-2.5ex}
	\subfigure[][{\scriptsize Average Distance}]{
		\scalebox{0.25}[0.25]{\includegraphics{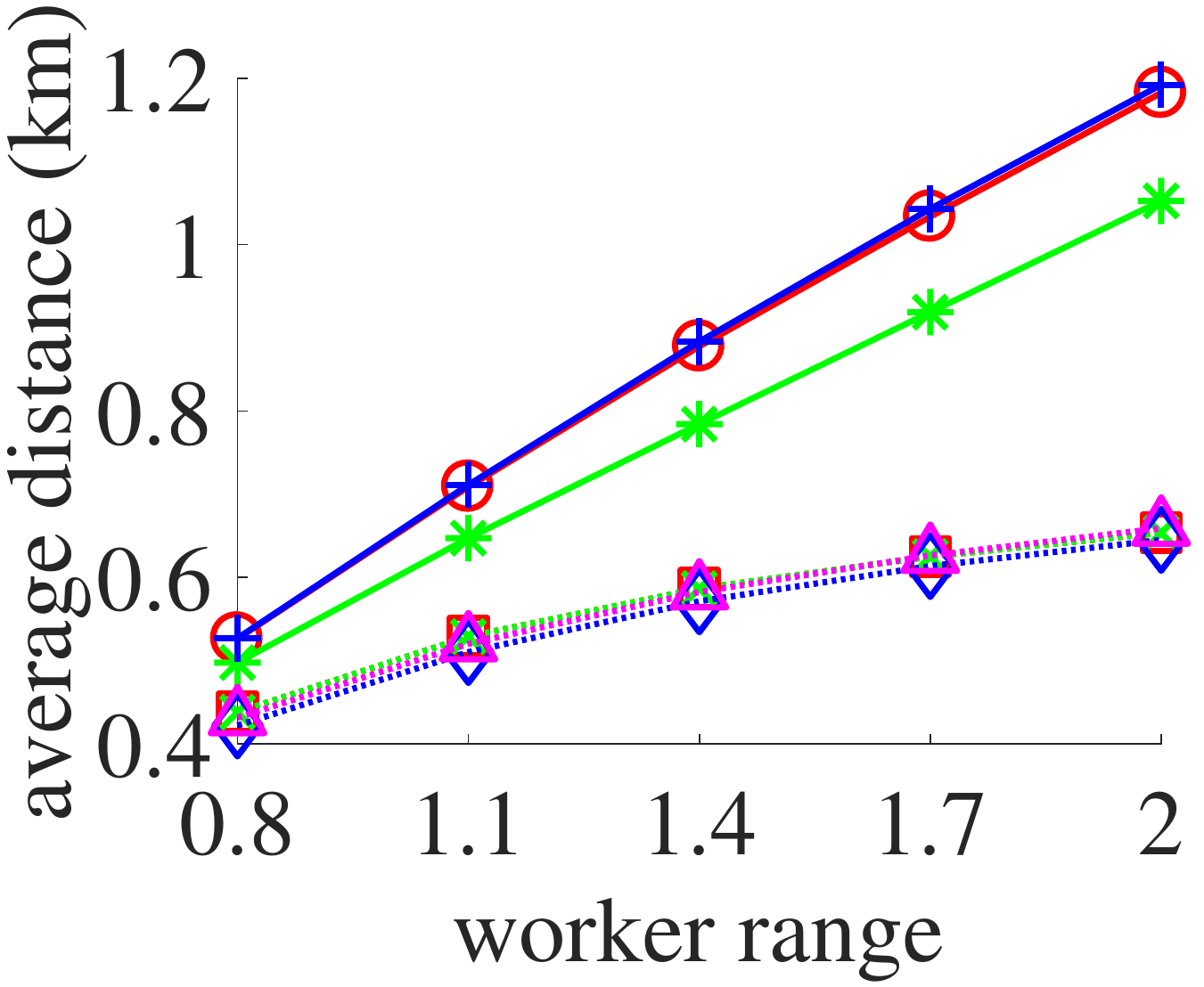}}
		\label{subfig:worker_range_distance_normal}}
	\subfigure[][{\scriptsize Relative Deviation of Distance}]{
		\scalebox{0.25}[0.25]{\includegraphics{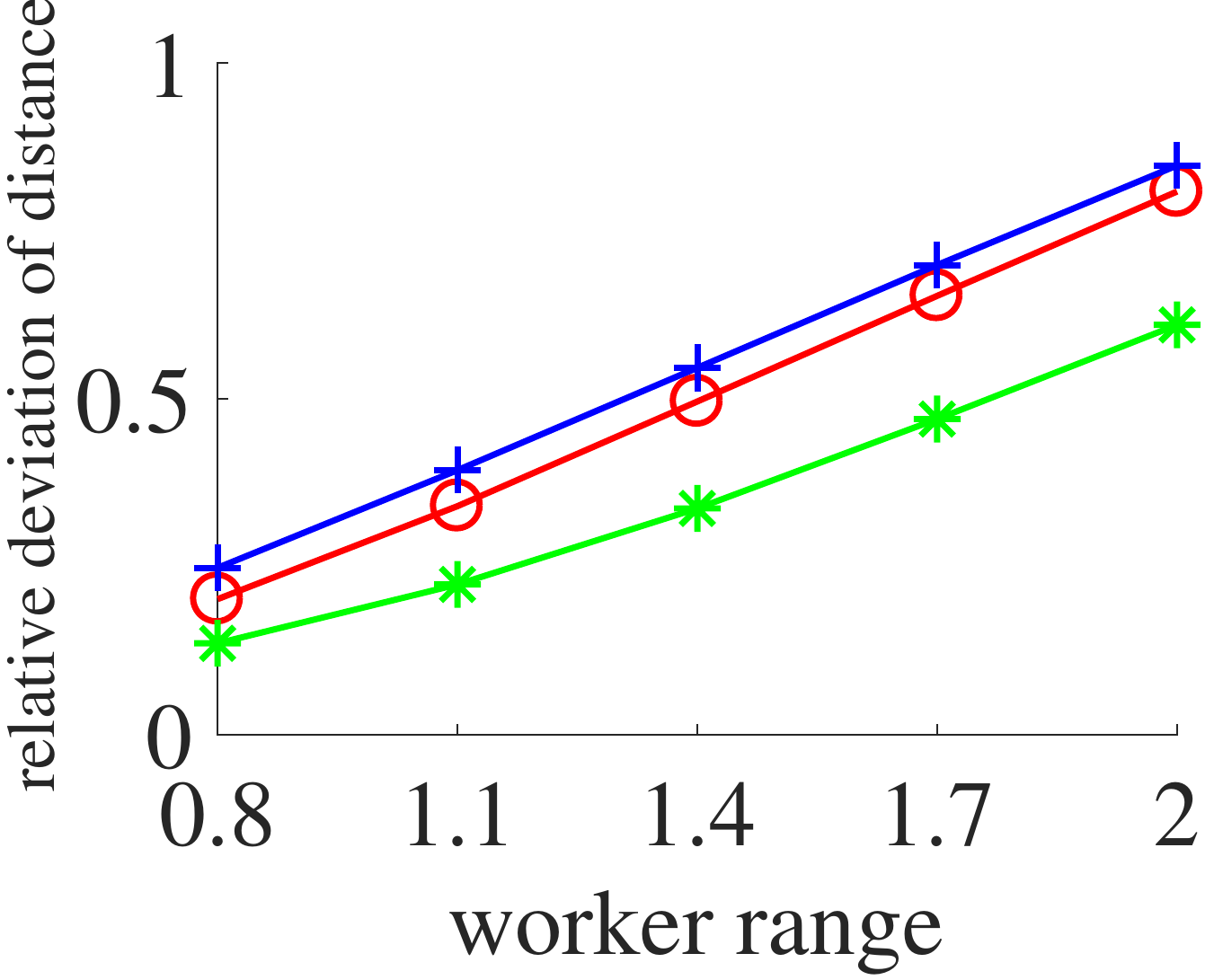}}
		\label{subfig:worker_range_distance_deviation_normal}}\figureCaptionMargin
	\caption{\small The impact of the worker range on the distance (\emph{normal}).}\vspace{-1ex}
	\label{fig:worker_range_distance_normal}
\end{figure}

The solutions with PPCF are better than that without PPCF when the privacy budget is small. It means PPCF is suitable for high-privacy situations and is continuously more effective than that without PPCF.
As the privacy budget increases, the average utility decreases.
That is because large privacy budgets give large average privacy budget cost for workers.
Although high privacy budgets are able to lead to high utility match, it also leads to high privacy budget cost.
Besides, as the privacy budget increases, the difference between PPCF and non-PPCF is eliminated.
That is because the larger the privacy budget, the more accurate the obfuscated distance, and the smaller difference between PPCF and PCF.

\section{Conclusion}\label{Conclusion}
In this paper, we formalize the Privacy-aware Task Assignment (\basicProblem{}) Problem, which assigns a task to a worker to get a high utility value. In order to make use of obfuscated distance published by workers, we propose new notations called \emph{effective obfuscated distance} and \emph{effective privacy budget}. To get a higher utility value, we offer a new comparison function called PPCF and prove that it {achieves} better effectiveness than PCF in both theory and practice. Besides, we propose another game theoretic approach to solve the problem. Extensive experiments have been conducted to show the efficiency and effectiveness of our methods on both real and synthetic data sets.

\begin{figure}[t!]\centering\vspace{-3ex}
	\subfigure{
		\scalebox{0.33}[0.33]{\includegraphics{bar2-eps-converted-to.pdf}}}\hfill\\
	\addtocounter{subfigure}{-1}\vspace{-2.5ex}
	\subfigure[][{\scriptsize Average Distance}]{
		\scalebox{0.25}[0.25]{\includegraphics{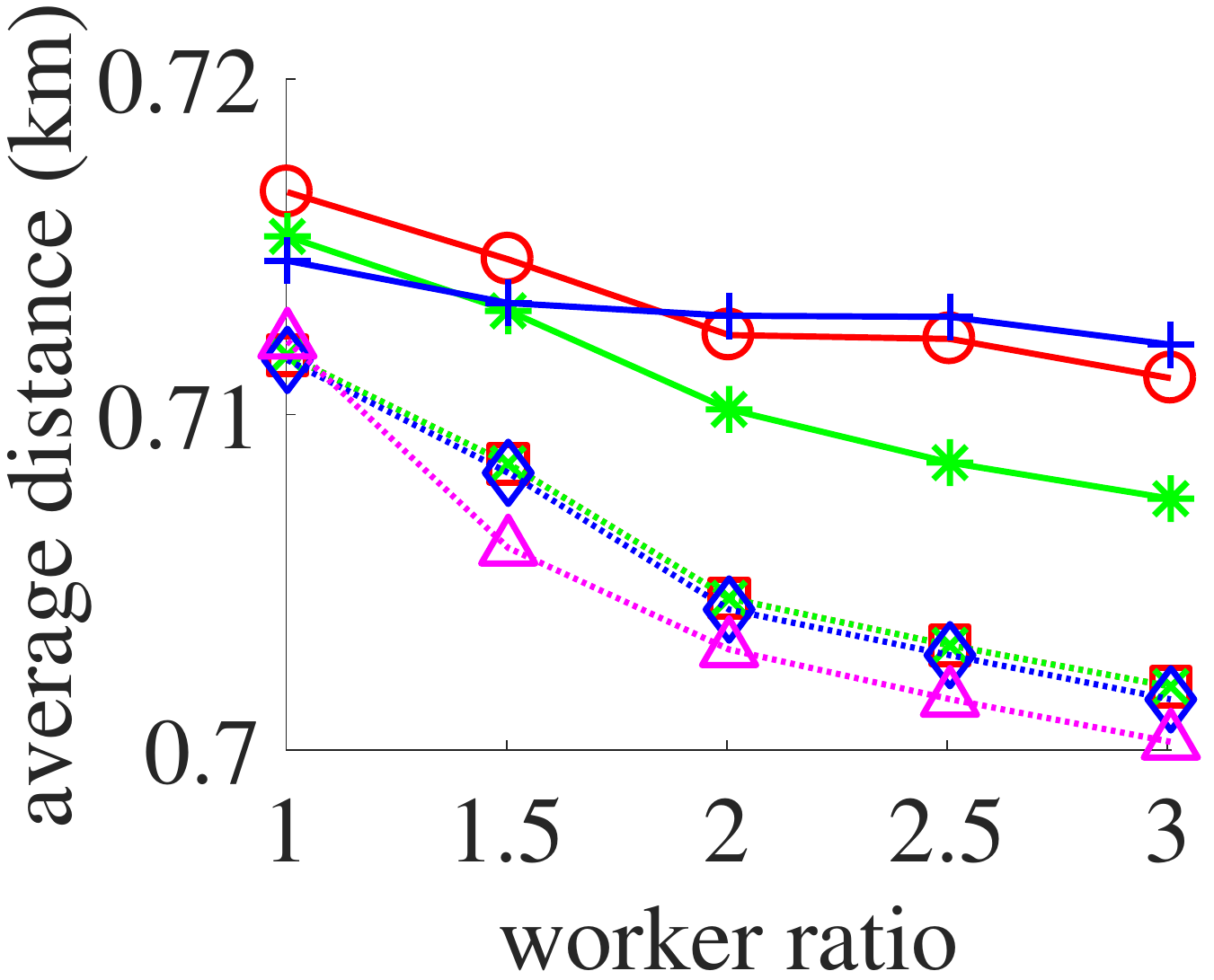}}
		\label{subfig:worker_ratio_distance_chengdu}}
	\subfigure[][{\scriptsize Relative Deviation of Distance}]{
		\scalebox{0.25}[0.25]{\includegraphics{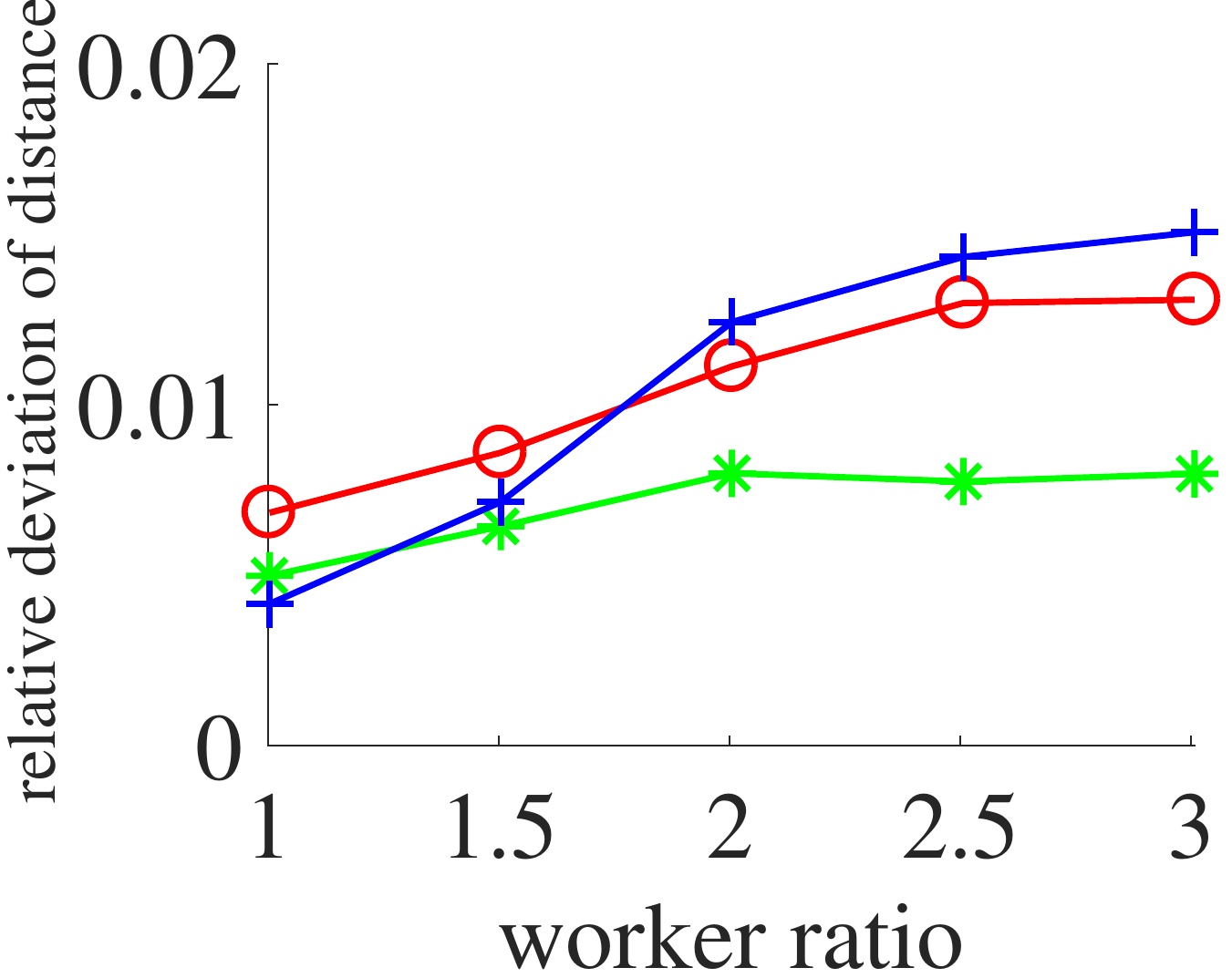}}
		\label{subfig:worker_ratio_distance_deviation_chengdu}}\figureCaptionMargin
	\caption{\small The impact of the worker ratio on the distance (\emph{chengdu}).}\vspace{-2ex}
	\label{fig:worker_ratio_distance_chengdu}
\end{figure}

\begin{figure}[t!]\centering
	\subfigure{
		\scalebox{0.33}[0.33]{\includegraphics{bar2-eps-converted-to.pdf}}}\hfill\\
	\addtocounter{subfigure}{-1}\vspace{-2.5ex}
	\subfigure[][{\scriptsize Average Distance}]{
		\scalebox{0.25}[0.25]{\includegraphics{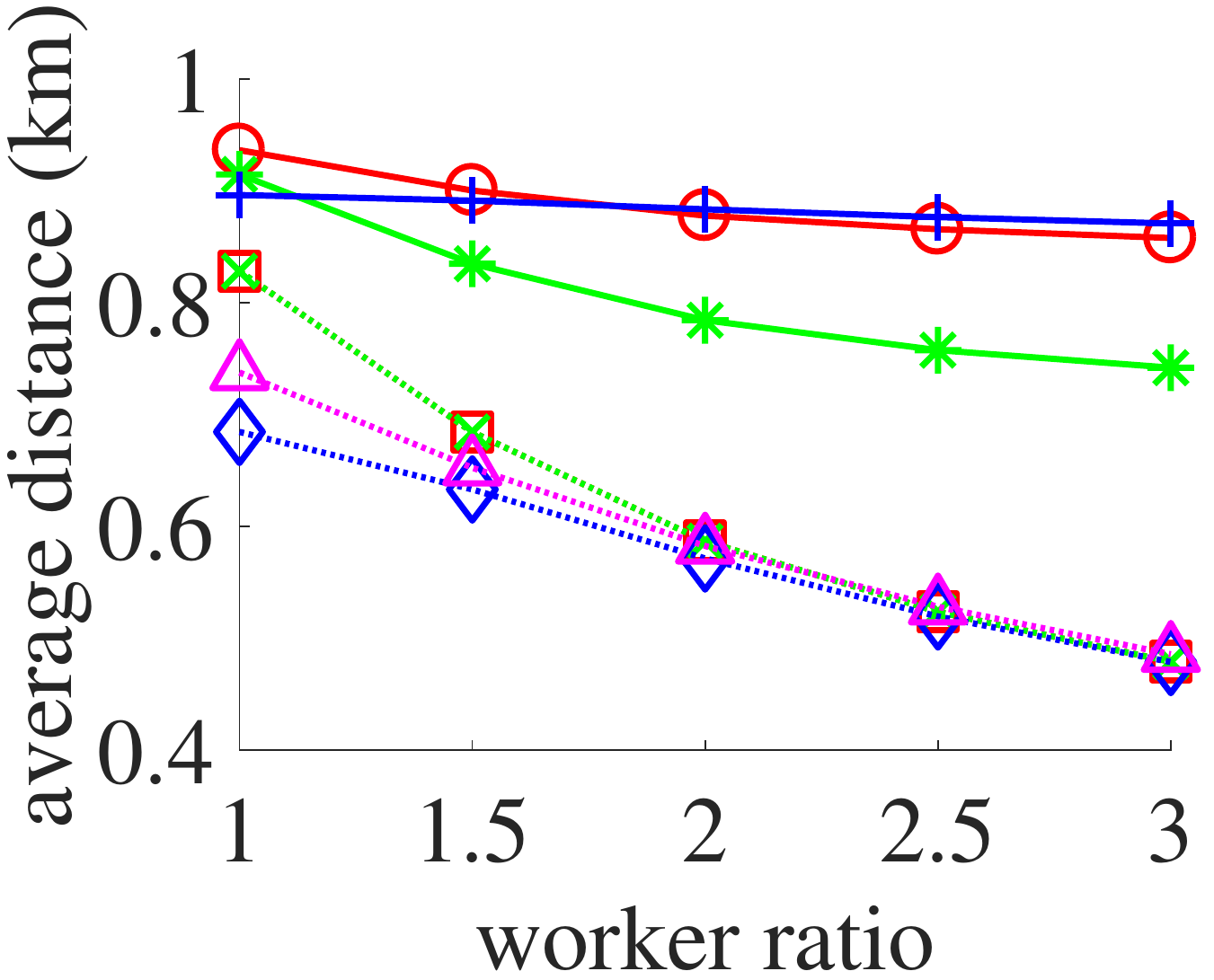}}
		\label{subfig:worker_ratio_distance_normal}}
	\subfigure[][{\scriptsize Relative Deviation of Distance}]{
		\scalebox{0.25}[0.25]{\includegraphics{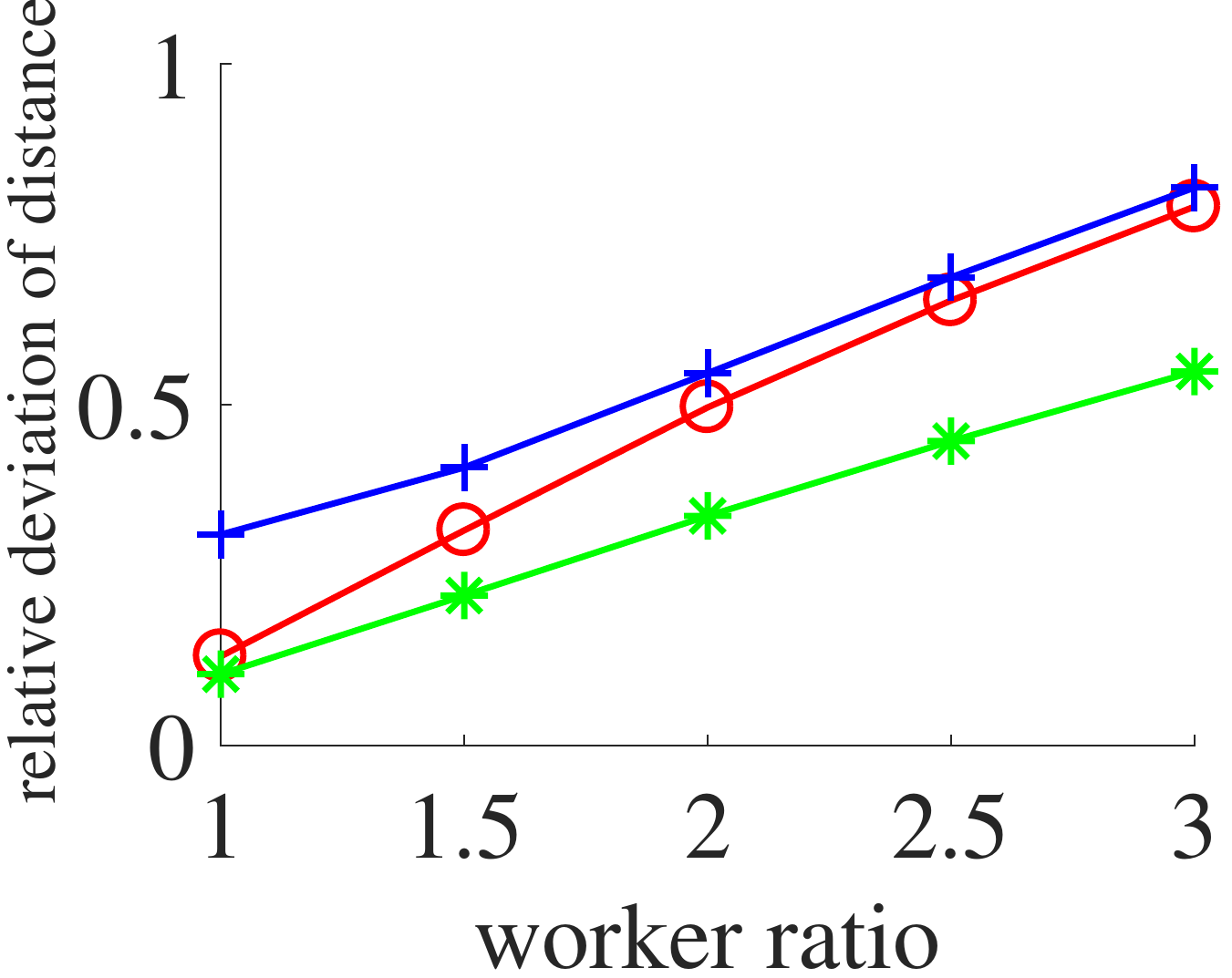}}
		\label{subfig:worker_ratio_distance_deviation_normal}}\figureCaptionMargin
	\caption{\small The impact of the worker ratio on the distance (\emph{normal}).}\vspace{-2ex}
	\label{fig:worker_ratio_distance_normal}
\end{figure}
\begin{figure}[t!]\centering
	\subfigure{
		\scalebox{0.33}[0.33]{\includegraphics{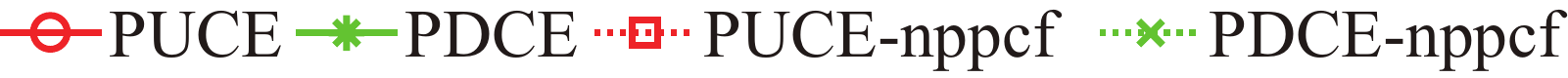}}}\hfill\\
	\addtocounter{subfigure}{-1}\vspace{-2.5ex}
	\subfigure[][{\scriptsize \emph{chengdu}}]{
		\scalebox{0.25}[0.25]{\includegraphics{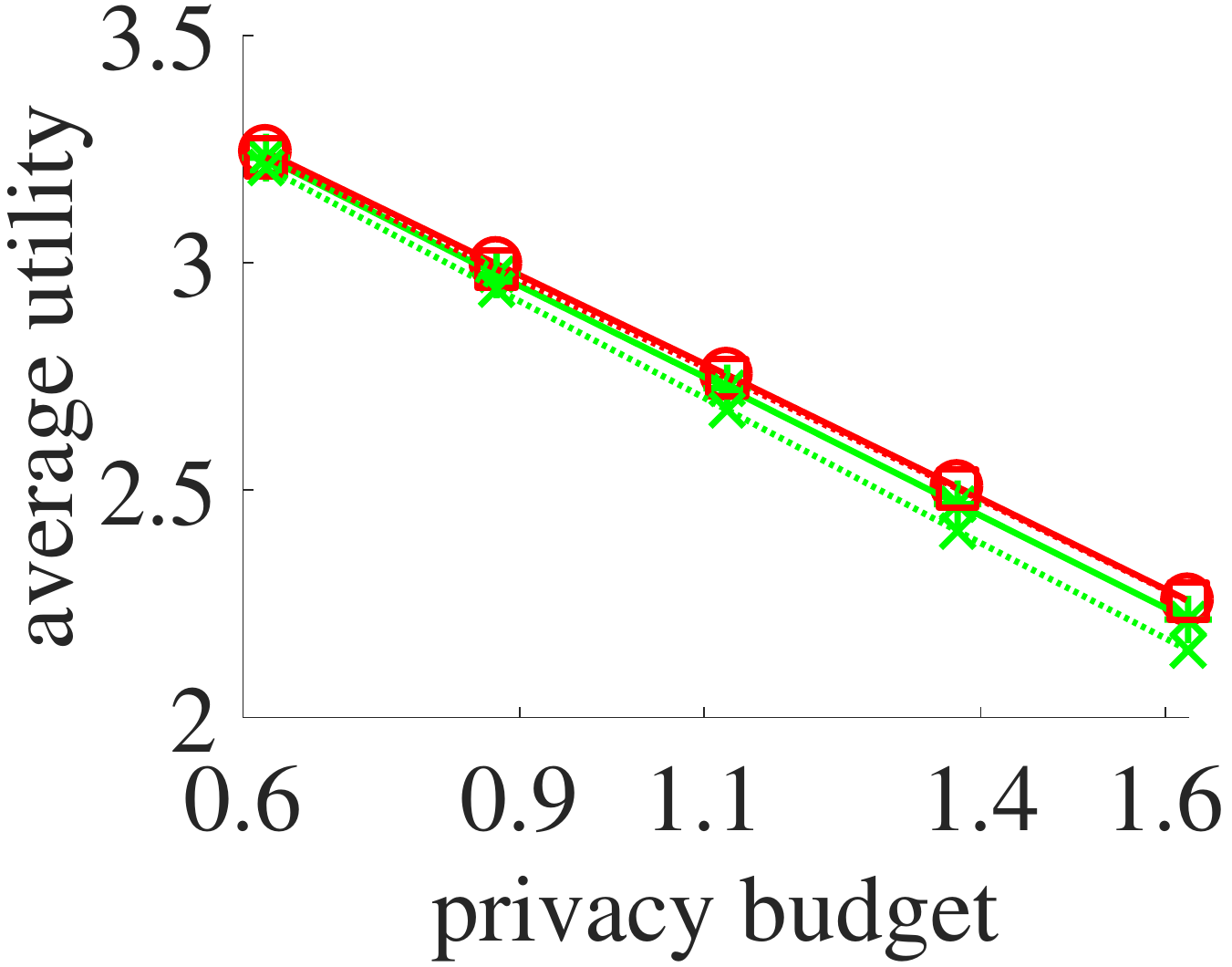}}
		\label{subfig:privacy_budget_utility_chengdu}}
	\subfigure[][{\scriptsize \emph{normal}}]{
		\scalebox{0.25}[0.25]{\includegraphics{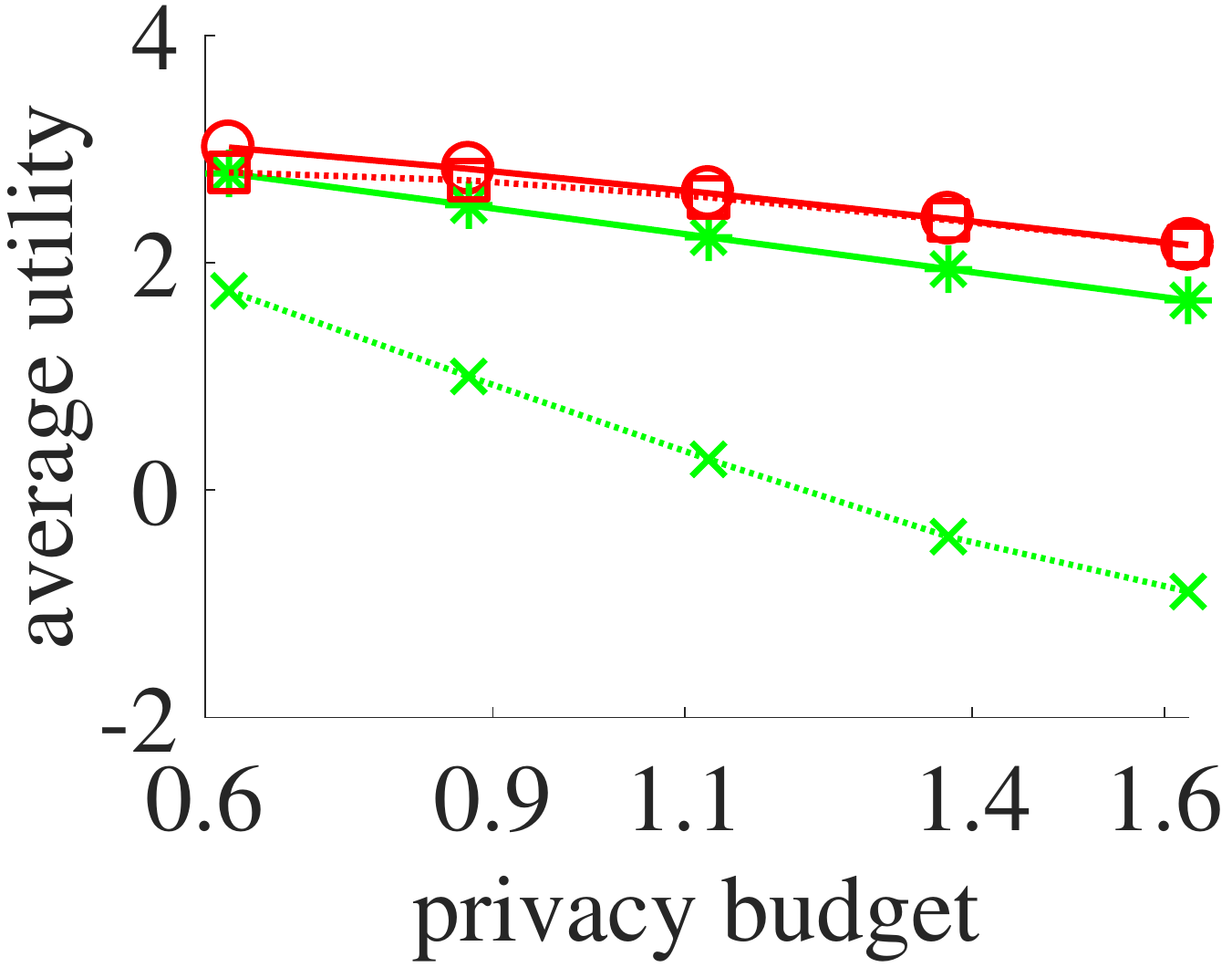}}
		\label{subfig:privacy_budget_utility_normal}}
	\caption{\small The impact of privacy on the utility.}\vspace{-1ex}
	\label{fig:privacy_budget_utility}
\end{figure}

Our \solutionA{} and \solutionB{} only consider the distance privacy of one worker in his service area.
If the service area of a worker is small enough and the quantity of tasks in this area is large enough, attackers can locate the worker's position through trilateration by viewing the entire area as a position.
That is because too much effective obfuscated distance from a worker to many tasks will outline the worker's service area.
Our subsequent work will focus on this problem and consider how to hide correlation privacy caused by the relation between different worker service areas.
Besides, in our goal function, we suppose the task value is only related to the task itself. Our subsequent work will extract the payment from the task value and research on the assumption that: the task value is related to task itself, travel distance and privacy cost.

\section{acknowledgment}
Peng Cheng's work is partially supported by the National NaturalScience Foundation of China under Grant No. 62102149 and Open Foundation of Key Laboratory of Transport Industry of Big Data Application Technologiesfor Comprehensive Transport. Libin Zheng is supported by the National Natural Science Foundation of China No. 62102463 and the Natural Science Foundation of Guangdong Province of China No.2022A1515011135. Wei Xi's work is partially supported by National Key R\&D Program of China Grant No. 2018AAA0101100. Xuemin Lin's work is partially supported by the National Key R\&D Program of China under grant 2018AAA0102502. Wenjie Zhang's work is partially supported by the ARC Future Fellowship FT210100303. Corresponding author: Peng Cheng.

\bibliographystyle{ieeetr}
\bibliography{reference}

\section{Appendix}
\subsection{Proof for Theorem~\ref{thrm:PPCF_PCF}}\label{Appendix_PPCF_PCF}
Before the proof of Theorem~\ref{thrm:PPCF_PCF}, we declare and prove Lemma~\ref{la:PCF} and Lemma~\ref{la:area} as follows.
\begin{lemma}\label{la:PCF}
For any  {\scriptsize$d_x, d_y, \epsilon_x, \epsilon_y$, $\hat{d}_x = d_x + Lap(0,1/\epsilon_x), \hat{d}_y = d_y + Lap(0,1/\epsilon_y)$, we have $PCF(\hat{d}_x,\hat{d}_y,\epsilon_x,\epsilon_y)>\frac{1}{2}\Leftrightarrow \hat{d}_x < \hat{d}_y$}.
\end{lemma}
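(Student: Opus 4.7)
The plan is to unwind the definition of PCF by expressing the hidden true distances in terms of the observed obfuscated values and the Laplace noises, then exploit the symmetry of the noise difference. Starting from $\hat{d}_x = d_x + \eta_x$ and $\hat{d}_y = d_y + \eta_y$ with $\eta_x \sim Lap(0,1/\epsilon_x)$ and $\eta_y \sim Lap(0,1/\epsilon_y)$ independent, I would rewrite
\[
\mathrm{PCF}(\hat{d}_x,\hat{d}_y,\epsilon_x,\epsilon_y) \;=\; \Pr[d_x < d_y] \;=\; \Pr[\eta_y - \eta_x < \hat{d}_y - \hat{d}_x].
\]
This reduces the claim to a statement about the single random variable $Z := \eta_y - \eta_x$ and the deterministic quantity $c := \hat{d}_y - \hat{d}_x$.

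Next I would verify two properties of $Z$: (i) $Z$ is symmetric about $0$, and (ii) its CDF $F_Z$ is strictly increasing on $\mathbb{R}$. Property (i) follows because each Laplace noise is symmetric about $0$, so $-\eta_x \stackrel{d}{=} \eta_x$ and $Z \stackrel{d}{=} -Z$; in particular $F_Z(0)=1/2$. Property (ii) follows because $Z$ admits a density (a convolution of two Laplace densities) that is strictly positive everywhere, making $F_Z$ strictly increasing. Together these give the clean equivalence $F_Z(c) > 1/2 \iff c > 0$.

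Finally I would assemble the chain
\[
\mathrm{PCF}(\hat{d}_x,\hat{d}_y,\epsilon_x,\epsilon_y) > \tfrac{1}{2}
\;\Longleftrightarrow\;
F_Z(\hat{d}_y-\hat{d}_x) > \tfrac{1}{2}
\;\Longleftrightarrow\;
\hat{d}_y - \hat{d}_x > 0
\;\Longleftrightarrow\;
\hat{d}_x < \hat{d}_y,
\]
which is precisely the lemma. The only mildly nontrivial step is justifying the strict monotonicity so that the biconditional is strict and not merely one-sided; this is handled by noting the convolution density is everywhere positive. Everything else is bookkeeping from the definitions, so I do not expect any serious obstacle.
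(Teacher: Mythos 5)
Your proof is correct and follows essentially the same route as the paper's: both reduce the claim to the symmetry about $0$ of the noise difference $\eta_y-\eta_x$, the paper phrasing this as a two-dimensional integral of the joint Laplace density over the half-plane $\{\eta_y-\eta_x<\hat{d}_y-\hat{d}_x\}$ and you phrasing it via the CDF of the one-dimensional convolution. Your version is marginally more careful in that you explicitly justify strict monotonicity of $F_Z$ (everywhere-positive convolution density), a point the paper's proof passes over implicitly.
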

\begin{proof}
Let $\eta_x\sim\epsilon_x, \eta_y\sim\epsilon_y$. Then we have
{\scriptsize$$PCF(\hat{d}_x,\hat{d}_y,\epsilon_x,\epsilon_y) = \iint_{D_R} f(\eta_x, \eta_y)$$}
where $f(\eta_x, \eta_y)=\frac{\epsilon_x\epsilon_y}{4}e^{-\epsilon_x|\eta_x|-\epsilon_y|\eta_y|}$ and $D_R$ is the plane set satisfying $D_R=\{(\eta_x, \eta_y): \eta_y-\eta_x<\hat{d}_y-\hat{d}_x\}$. Note that $f(\eta_x, \eta_y)$ is symmetry about both x-axis and y-axis and $D_R$ is part of plane split by line $l_{\eta}: \eta_y=\eta_x + \hat{d}_y-\hat{d}_x$. Thus, we know that only when $l_{\eta}$ crosses the origin ($\hat{d}_y=\hat{d}_x$), $PCF(\hat{d}_x,\hat{d}_y,\epsilon_x,\epsilon_y)$ equals $\frac{1}{2}$. When $\hat{d}_y-\hat{d}_x < 0$, $PCF(\hat{d}_x,\hat{d}_y,\epsilon_x,\epsilon_y)<\frac{1}{2}$, and $\hat{d}_y-\hat{d}_x > 0$, $PCF(\hat{d}_x,\hat{d}_y,\epsilon_x,\epsilon_y)>\frac{1}{2}$. Therefore, $PCF(\hat{d}_x,\hat{d}_y,\epsilon_x,\epsilon_y)>\frac{1}{2}\Leftrightarrow \hat{d}_x < \hat{d}_y$.
\end{proof}

\begin{lemma}\label{la:area}
For any two continue and differentiable non-negative functions $f,g$ defined in $\mathbb{R}$, if there exists an interval $[a,+\infty)$ satisfying that $\int_{a}^{+\infty}f(x)dx = \int_{a}^{+\infty}g(x)dx$ and there exists a point $x_0\in(a,+\infty)$ satisfying $f(x)\geq g(x)$ for $x\in(a,x_0]$ and $f(x)\leq g(x)$ for $x\in(x_0,+\infty)$, then $\int_{a}^{x}f(x)dx \geq \int_{a}^{x}g(x)dx$ for all $x\in[a,+\infty)$.
\end{lemma}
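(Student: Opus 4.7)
The plan is to study the difference function $h(x) = \int_a^x (f(t)-g(t))\,dt$ and show $h(x) \geq 0$ for every $x \in [a,+\infty)$ by exploiting its monotonicity structure together with the two boundary conditions $h(a)=0$ and $\lim_{x\to\infty}h(x)=0$.

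First I would observe that since $f$ and $g$ are continuous, $h$ is differentiable with $h'(x)=f(x)-g(x)$. By the sign hypothesis, $h'(x)\geq 0$ for $x\in(a,x_0]$ and $h'(x)\leq 0$ for $x\in(x_0,+\infty)$. Hence $h$ is non-decreasing on $[a,x_0]$ and non-increasing on $[x_0,+\infty)$, so $x_0$ is a (global) maximum of $h$ on $[a,+\infty)$. Since $h(a)=0$ and $h$ is non-decreasing on $[a,x_0]$, we immediately obtain $h(x)\geq h(a)=0$ for every $x\in[a,x_0]$.

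Next I would handle the tail $x\in(x_0,+\infty)$. The hypothesis $\int_a^{+\infty}f(x)\,dx=\int_a^{+\infty}g(x)\,dx$ translates to $\lim_{x\to+\infty}h(x)=0$. Since $h$ is non-increasing on $[x_0,+\infty)$, for any $x$ in this range and any $y>x$ we have $h(x)\geq h(y)$; passing to the limit $y\to+\infty$ gives $h(x)\geq 0$. Combining the two cases yields $\int_a^x f \geq \int_a^x g$ for all $x\in[a,+\infty)$, as required.

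The only mild obstacle is the limit argument at $+\infty$: I would note that the integrability hypothesis $\int_a^{+\infty}f,\int_a^{+\infty}g<\infty$ is what legitimately allows the substitution $\lim_{x\to+\infty}h(x)=\int_a^{+\infty}(f-g)=0$; everything else is a direct monotonicity argument. Differentiability of $f,g$ is not actually needed (continuity suffices), so I would simply cite continuity when invoking the fundamental theorem of calculus.
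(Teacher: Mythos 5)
Your proof is correct and follows essentially the same route as the paper's: split at $x_0$, use $f\geq g$ directly on $[a,x_0]$, and use the equality of the total integrals to control the tail on $(x_0,+\infty)$ (your monotone-limit argument for $h$ is just a repackaging of the paper's identity $\int_a^x f-\int_a^x g=\int_x^{+\infty}g-\int_x^{+\infty}f\geq 0$). Your side remark that continuity suffices is accurate but does not change the argument.
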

\begin{proof}
For any $x\in[a,+\infty)$, we can divide it into two cases: (1) $x\in [a,x_0]$; (2) $x\in (x_0,+\infty]$.
If (1) holds, according to \emph{$f(x)\geq g(x)$ for $x\in(a,x_0]$}, we have
{\scriptsize\begin{equation}\label{fxgeqgx_1}
    \begin{aligned}
        \int_{a}^{x}f(x)dx \geq \int_{a}^{x}g(x)dx\;\;\;\textrm{for}\;x\in [a,x_0].
    \end{aligned}
\end{equation}}
If (2) holds, then we have $\int_{x_0}^{+\infty}f(x)dx \leq \int_{x_0}^{+\infty}g(x)dx$. And we can get
{\scriptsize\begin{equation}\notag
    \begin{aligned}
        &\int_{a}^{x}f(x)dx - \int_{a}^{x}g(x)dx \\
        =& \int_{a}^{+\infty}f(x)dx - \int_{x_0}^{+\infty}f(x)dx - (\int_{a}^{+\infty}g(x)dx - \int_{x_0}^{+\infty}g(x)dx)\\
        =& \int_{x_0}^{+\infty}g(x)dx - \int_{x_0}^{+\infty}f(x)dx \geq 0.
    \end{aligned}
\end{equation}}
Therefore, we have
{\scriptsize\begin{equation}\label{fxgeqgx_2}
    \begin{aligned}
        \int_{a}^{x}f(x)dx \geq \int_{a}^{x}g(x)dx\;\;\;\textrm{for}\;x\in (x_0,+\infty).
    \end{aligned}
\end{equation}}
From Equation~\ref{fxgeqgx_1} and \ref{fxgeqgx_2}, we can have $\int_{a}^{x}f(x)dx \geq \int_{a}^{x}g(x)dx$ for $x\in [a,+\infty)$.
\end{proof}

Based on Lemma~\ref{la:PCF} and Lemma~\ref{la:area}, we give the proof of Theorem~\ref{thrm:PPCF_PCF} as follows.
\begin{proof}
From Lemma~\ref{la:PCF}, we have $PCF(\hat{d}_x,\hat{d}_y,\epsilon_x,\epsilon_y)>\frac{1}{2}\Leftrightarrow \hat{d}_x < \hat{d}_y$.
From Equation~\ref{PPCF_equation}, we have $\textrm{Pr}[d_x<d_y]>\frac{1}{2}\Leftrightarrow d_x<\hat{d}_y$. Therefore, we only need to prove $Pr[\hat{d}_x < \hat{d}_y] \leq Pr[d_x<\hat{d}_y]$ for any $d_x, d_y$ satisfying $d_x<d_y$.

According to the definition, we have
{\scriptsize\begin{equation}\notag
    \begin{aligned}
        \textrm{Pr}[\hat{d}_x < \hat{d}_y] &= \textrm{Pr}[d_x + \eta_x < d_y + \eta_y] = \textrm{Pr}[\eta_y > \eta_x + d_x - d_y] \\
                                           &= \int_{-\infty}^{+\infty}\left(\int_{-\infty}^{\eta_y-d_x+d_y}\frac{\epsilon_x\epsilon_y}{4}e^{-(\epsilon_x|\eta_x|+\epsilon_y|\eta_y|)}d\eta_x\right)d\eta_y
    \end{aligned}
\end{equation}}
and
{\scriptsize\begin{equation}\notag
    \begin{aligned}
        \textrm{Pr}[d_x < \hat{d}_y] &= \textrm{Pr}[d_x < d_y + \eta_y] = \textrm{Pr}[\eta_y > d_x - d_y] \\
       &= \int_{d_x-d_y}^{+\infty}\frac{\epsilon_y}{2}e^{-\epsilon_y|\eta_y|}d\eta_y.
    \end{aligned}
\end{equation}}
Let $s = d_y - d_x$.
Let $F: s\to\textrm{Pr}[\hat{d}_x < \hat{d}_y]$ and $G: s\to\textrm{Pr}[d_x < \hat{d}_y]$.
From the definition, we know $s>0$, $\lim\limits_{s\to 0}F(s) = \lim\limits_{s\to 0}G(s) = \frac{1}{2}$ and $\lim\limits_{s\to +\infty}F(s) = \lim\limits_{s\to +\infty}G(s) = 1$. And we have
{\scriptsize\begin{align}
	\frac{\partial F(s)}{\partial s} &= \frac{\epsilon_x\epsilon_y}{4}(\frac{e^{-s\epsilon_x}+e^{-s\epsilon_y}}{\epsilon_x+\epsilon_y} - \frac{e^{-s\epsilon_x}-e^{-s\epsilon_y}}{\epsilon_x-\epsilon_y})\notag\\
	&= \frac{\epsilon_x\epsilon_y}{2}\cdot\frac{e^{-s\epsilon_y}\epsilon_x - e^{-s\epsilon_x}\epsilon_y}{(\epsilon_x+\epsilon_y)(\epsilon_x-\epsilon_y)}>0, \notag\\
	\frac{\partial G(s)}{\partial s} &= \frac{\epsilon_y}{2}e^{-s\epsilon_y}
	> 0, \notag\\
	\frac{\partial F(s)}{\partial s}/\frac{\partial G(s)}{\partial s} &= \frac{\epsilon_x(\epsilon_x-e^{s(\epsilon_y-\epsilon_x)}\epsilon_y)}{(\epsilon_x+\epsilon_y)(\epsilon_x-\epsilon_y)}.\notag
\end{align}}

Let $\frac{\partial F(s)}{\partial s}/\frac{\partial G(s)}{\partial s}\leq 1$. Then we have $s\leq\frac{1}{\epsilon_x-\epsilon_y}\textrm{ln}\frac{\epsilon_x}{\epsilon_y}$. Let $\frac{\partial F(s)}{\partial s}/\frac{\partial G(s)}{\partial s}\geq 1$. Then we have $s\geq\frac{1}{\epsilon_x-\epsilon_y}\textrm{ln}\frac{\epsilon_x}{\epsilon_y}$.
That is to say $\frac{\partial G(s)}{\partial s} \geq \frac{\partial F(s)}{\partial s}$ for $s\in(0,\frac{1}{\epsilon_x-\epsilon_y}\textrm{ln}\frac{\epsilon_x}{\epsilon_y})$ and $\frac{\partial G(s)}{\partial s} \leq \frac{\partial F(s)}{\partial s}$ for $s\in(\frac{1}{\epsilon_x-\epsilon_y}\textrm{ln}\frac{\epsilon_x}{\epsilon_y},+\infty)$. According to Lemma~\ref{la:area}, we have $F(s)\leq G(s)$ for $s\in(0,+\infty)$.
\end{proof}
\subsection{Proof for Theorem~\ref{thrm:Pos_PoA}}\label{proof_converge}
\begin{proof}
Let $\hat{U}(\vectorfont{st})$ be the overall utility of the strategy $\vectorfont{st}$ with  (i.e., $\hat{U}(\vectorfont{st})=\Phi(\vectorfont{st})$).
Besides, we note the global optimal strategy as $\hat{\vectorfont{st}}$, the strategy of achieving best competing utility value as $\vectorfont{st}^{\ast}$ and the worst competing utility value as $\vectorfont{st}^{\sharp}$.
Then we have $\hat{U}(\hat{\vectorfont{st}})=\Phi(\hat{\vectorfont{st}})$, $\hat{U}(\vectorfont{st}^{\ast})=\Phi(\vectorfont{st}^{\ast})$ and $\hat{U}(\vectorfont{st}^{\sharp})=\Phi(\vectorfont{st}^{\sharp})$. Thus,
{\scriptsize\begin{equation}\notag
    \begin{aligned}
        EPoS = \frac{E(\hat{U}(\vectorfont{st}^{\ast}))}{E(OPT)}
             = \frac{E(\hat{U}(\vectorfont{st}^{\ast}))}{E(\hat{U}(\hat{\vectorfont{st}}))}
             \leq 1.
    \end{aligned}
\end{equation}}

If we get the lower bound of $E(\hat{U}(\vectorfont{st}^{\sharp}))$ and upper bound of $E(\hat{U}(\hat{\vectorfont{st}}))$, then we can get the value of EPoA.
As for $E(\hat{U}(\vectorfont{st}^{\sharp}))$, we have
{\scriptsize\begin{equation}\notag
    \begin{aligned}
        E(\hat{U}(\vectorfont{st}^{\sharp})) &\geq \min\limits_{k} \sum_{t_i\in\entity{T}}\sum_{w_j\in\entity{W}}(s_{i,j}^{(k)}\cdot (v_i-f_d(d_{i,j}))-f_p(\vectorfont{b_{i,j}^{(k)}}\cdot\vectorfont{\epsilon_{i,j}})) \\
                                             &\geq \sum_{t_i\in\entity{T}} \min\limits_{R_j\ni t_i, U_{j}^{L}(i)>0} U_{j}^{L}(i) = \sum_{t_i\in\entity{T}} U_{min}^{+}(i)
    \end{aligned}
\end{equation}}
As for $E(\hat{U}(\hat{\vectorfont{st}}))$, we have
{\scriptsize\begin{equation}\notag
    \begin{aligned}
        E(\hat{U}(\hat{\vectorfont{st}})) &\leq OPT(\sum_{t_i\in\entity{T}}\sum_{w_j\in\entity{W}}(s_{i,j}\cdot (v_i-f_d(\tilde{d}_{i,j}))-f_p(\vectorfont{b_{i,j}}\cdot\vectorfont{\epsilon_{i,j}}))) \\
                                          &\leq \sum_{t_i\in\entity{T}} \max\limits_{R_j\ni t_i} U_{j}^{H}(i) = \sum_{t_i\in\entity{T}} U_{max}^{+}(i)
    \end{aligned}
\end{equation}}
Therefore, we have
{\scriptsize\begin{equation}\notag
    \begin{aligned}
        EPoA = \frac{E(\hat{U}(\vectorfont{st}^{\ast}))}{E(OPT)}
             \geq \frac{\sum_{t_i\in\entity{T}} U_{min}^{+}(i)}{\sum_{t_i\in\entity{T}} U_{max}^{+}(i)}
    \end{aligned}
\end{equation}}
\end{proof}

\subsection{Proof for Theorem~\ref{thrm:Privacy_PGT}}\label{proof_privacy_PGT}
\begin{proof}
Let $\algvar{A}_j$ be the mechanism \solutionB{} applying to $w_j$ with query $f$ defined above.
Let $X_j$ be the location of $w_j$.
For query $f(X_j)=[d_{i_1,j},...,d_{i_{|R_j|},j}]$, we extend it to an equivalent query $\hat{f}(X_j)=f(X_j)\cdot \algvar{J}$, where
{\scriptsize\begin{equation}\notag
\algvar{J}=
    \begin{bmatrix}
        CP(\vectorfont{b_{i_1,j}})  &       &       &   \\
        &       CP(\vectorfont{b_{i_2,j}})  &       &   \\
        &       &       \ddots  &   \\
        &       &       &       CP(\vectorfont{b_{i_{|R_j|},j}})
    \end{bmatrix}
\end{equation}}
is a block diagonal matrix. Actually, $\hat{f}(X_j)$ means query $d_{i_u,j}$ for $sum(\vectorfont{b_{i_u,j}})$ times for $u\in[|R_j|]$. We denote the size of $\hat{f}(X_j)$ as $|\hat{f}|$ and the $a$-th element of $\hat{f}(X_j)$ as $\hat{f}(X_j)_a$.

Let $Y_j$ denote the set of all published obfuscated distances of the worker $w_j$ to tasks in $R_j$.
Then we have $Y_j=\hat{f}(X_j)+[\eta_1, \eta_2,...,\eta_{|\hat{f}|}]$, where $\eta_a(1\leq a\leq |\hat{f}|)$ is an i.i.d random variable drawn from $Lap(1/\epsilon_a)$.
Hence we have
{\scriptsize\begin{equation}\notag
    \begin{aligned}
        \frac{\textrm{Pr}[\algvar{A}_j(X_j)=Y_j]}{\textrm{Pr}[\algvar{A}_j(X'_j)=Y_j]}
        &= \prod_{a\in[|\hat{f}|]} (\frac{\textrm{exp}(-\epsilon_a|Y_{j,a}-\hat{f}(X_{j})_a|)}{\textrm{exp}(-\epsilon_a|Y_{j,a}-\hat{f}(X_{j}')_a|)})
           \\
        &=\prod_{t_i\in R_j}\prod_{u\in[sum(\vectorfont{b_{i,j}})]}
           (\frac{\textrm{exp}(-\epsilon_{i,j}^{(u)}|\tilde{d}_{i,j}^{(u)}-d_{i,j}|)}{\textrm{exp}(-\epsilon_{i,j}^{(u)}|\tilde{d}_{i,j}^{(u)}-d'_{i,j}|)}) \\
        &\leq\prod_{t_i\in R_j}\prod_{u\in[sum(\vectorfont{b_{i,j}})]}
           (\textrm{exp}(\epsilon_{i,j}^{(u)}(|d_{i,j}-d'_{i,j}|))) \\
        &=\prod_{t_i\in R_j}
           \textrm{exp}(\vectorfont{b_{i,j}\epsilon_{i,j}}(|d_{i,j}-d'_{i,j}|)) \\
        &\leq\textrm{exp}(\sum_{t_i\in R_j}\vectorfont{b_{i,j}\epsilon_{i,j}}r_j).
    \end{aligned}
\end{equation}}
Because $X_j$ contains only one element, then we have \solutionB{} satisfies $(\sum_{t_i\in R_j}\vectorfont{b_{i,j}\epsilon_{i,j}}r_j)$-local differential privacy for each worker $w_j$.
\end{proof}

\subsection{Experiment Result for the Uniform Data set}
The experiment result of the uniform data set is shown in this section.

The time cost is shown in Figure~\ref{fig:worker_ratio_time_cost_just_for_uniform}.
\begin{figure}[ht!]\centering 
\subfigure{
  \scalebox{0.33}[0.33]{\includegraphics{bar2-eps-converted-to.pdf}}}\hfill\\
\addtocounter{subfigure}{-1}\vspace{-2ex}
	\subfigure[][{\scriptsize \emph{uniform}}]{
		\scalebox{0.25}[0.25]{\includegraphics{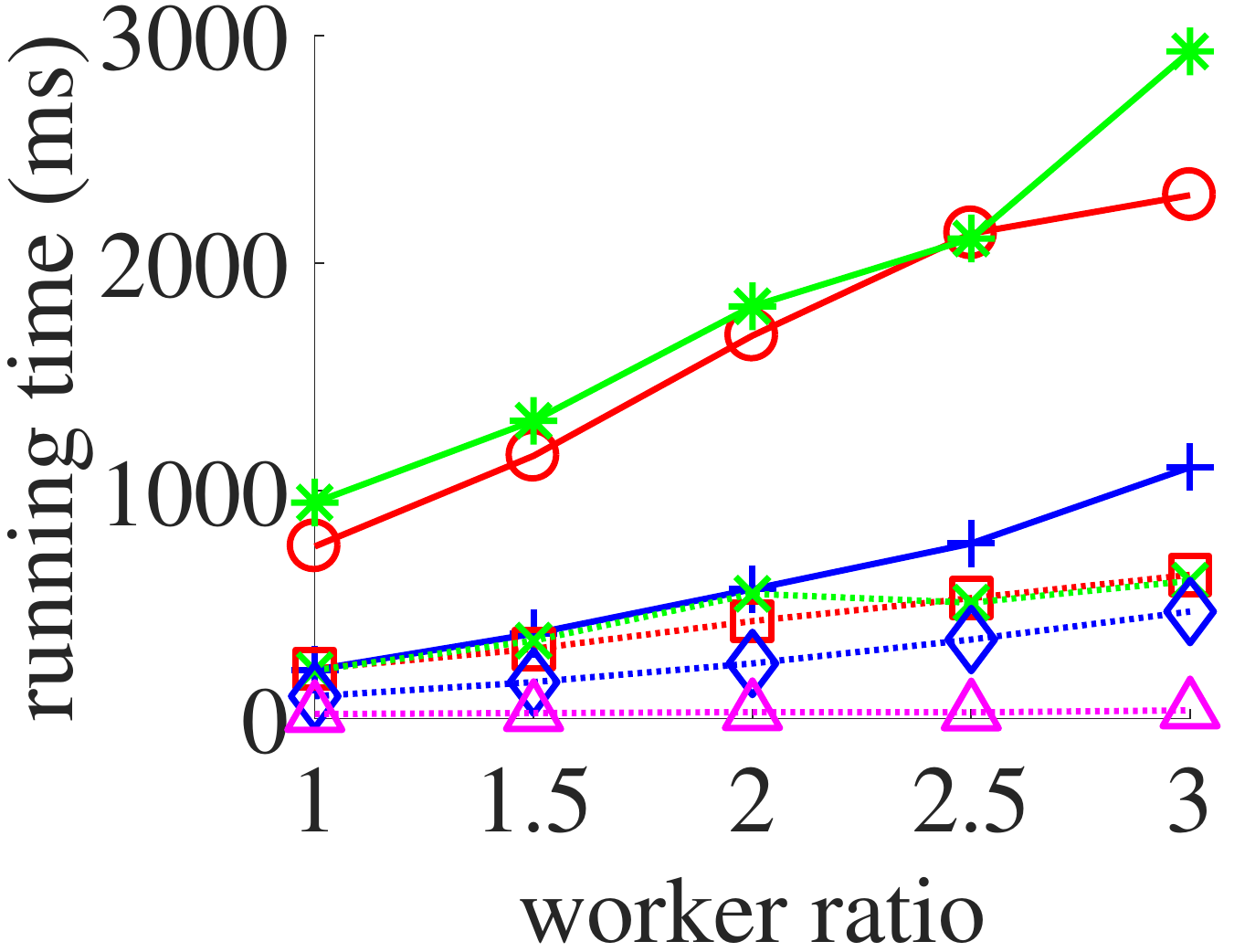}}
		\label{subfig:task_value_time_cost_uniform}}\figureCaptionMargin
	\caption{\small The impact of the worker ratio on the time cost.}
	\label{fig:worker_ratio_time_cost_just_for_uniform}
\end{figure}

The impact of task value on utility is shown in Figure~\ref{fig:task_value_utility_uniform}.
\begin{figure}[ht!]\centering
\subfigure{
  \scalebox{0.33}[0.33]{\includegraphics{bar2-eps-converted-to.pdf}}}\hfill\\
\addtocounter{subfigure}{-1}\vspace{-2ex}
	\subfigure[][{\scriptsize Average Utility}]{
		\scalebox{0.25}[0.25]{\includegraphics{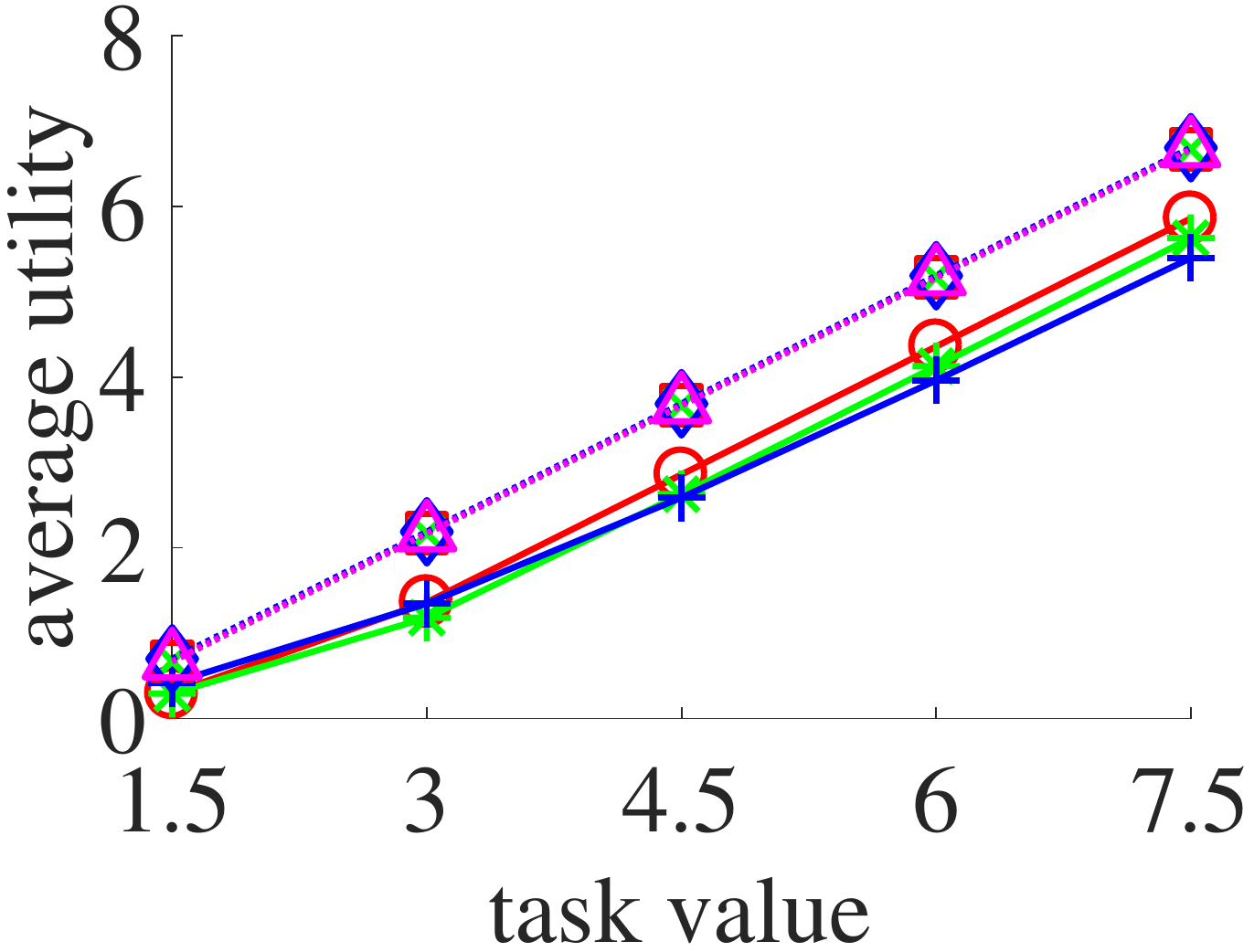}}
		\label{subfig:task_value_utility_uniform}}
	\subfigure[][{\scriptsize Relative Deviation of Utility}]{
		\scalebox{0.25}[0.25]{\includegraphics{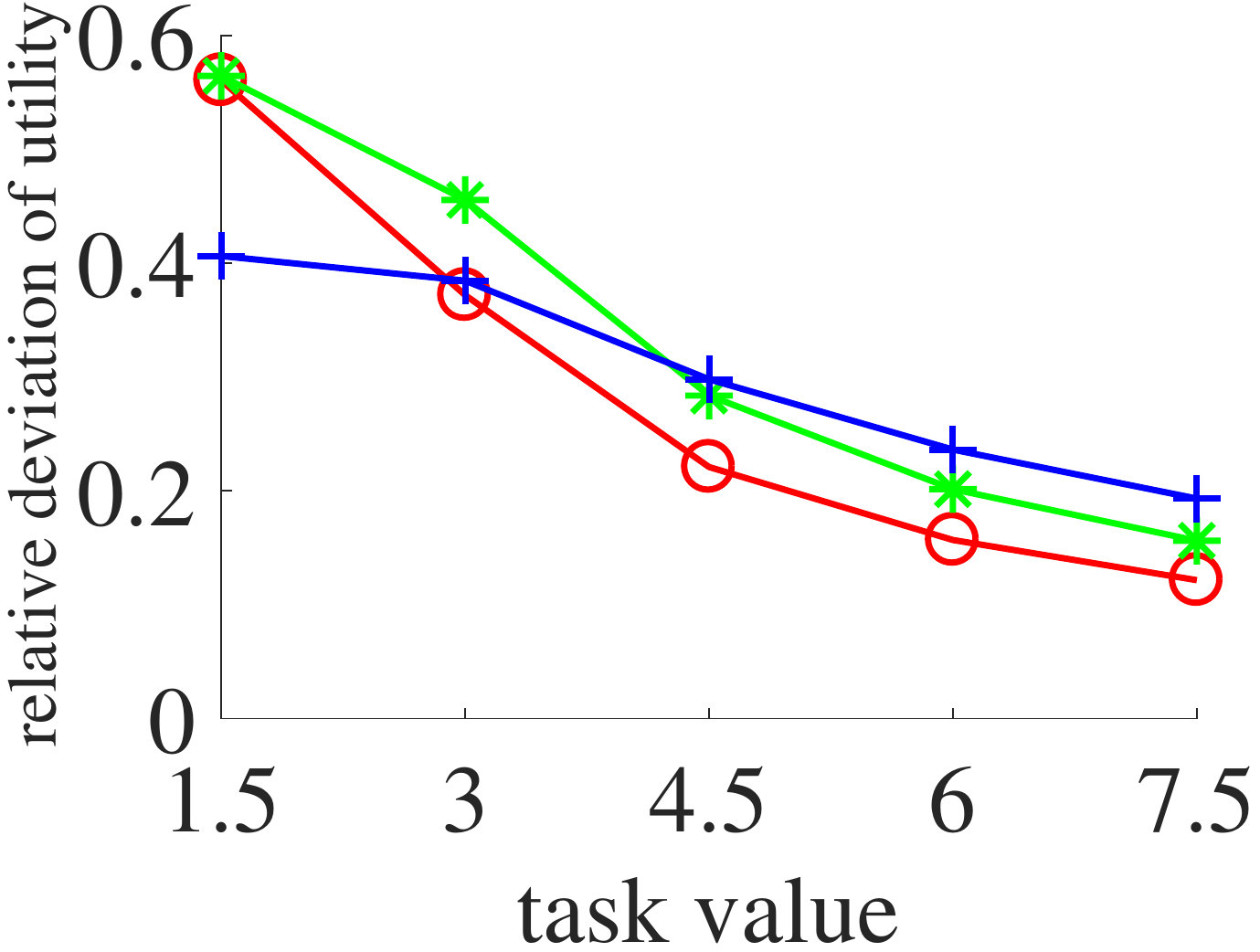}}
		\label{subfig:task_value_utility_deviation_uniform}}\figureCaptionMargin
	\caption{\small The impact of the task value on the utility for \emph{uniform}.}
	\label{fig:task_value_utility_uniform}
\end{figure}

The impact of worker range on utility is shown in Figure~\ref{fig:worker_range_utility_uniform}.

\begin{figure}[ht!]\centering
\subfigure{
  \scalebox{0.33}[0.33]{\includegraphics{bar2-eps-converted-to.pdf}}}\hfill\\
\addtocounter{subfigure}{-1}\vspace{-2ex}
	\subfigure[][{\scriptsize Average Utility}]{
		\scalebox{0.25}[0.25]{\includegraphics{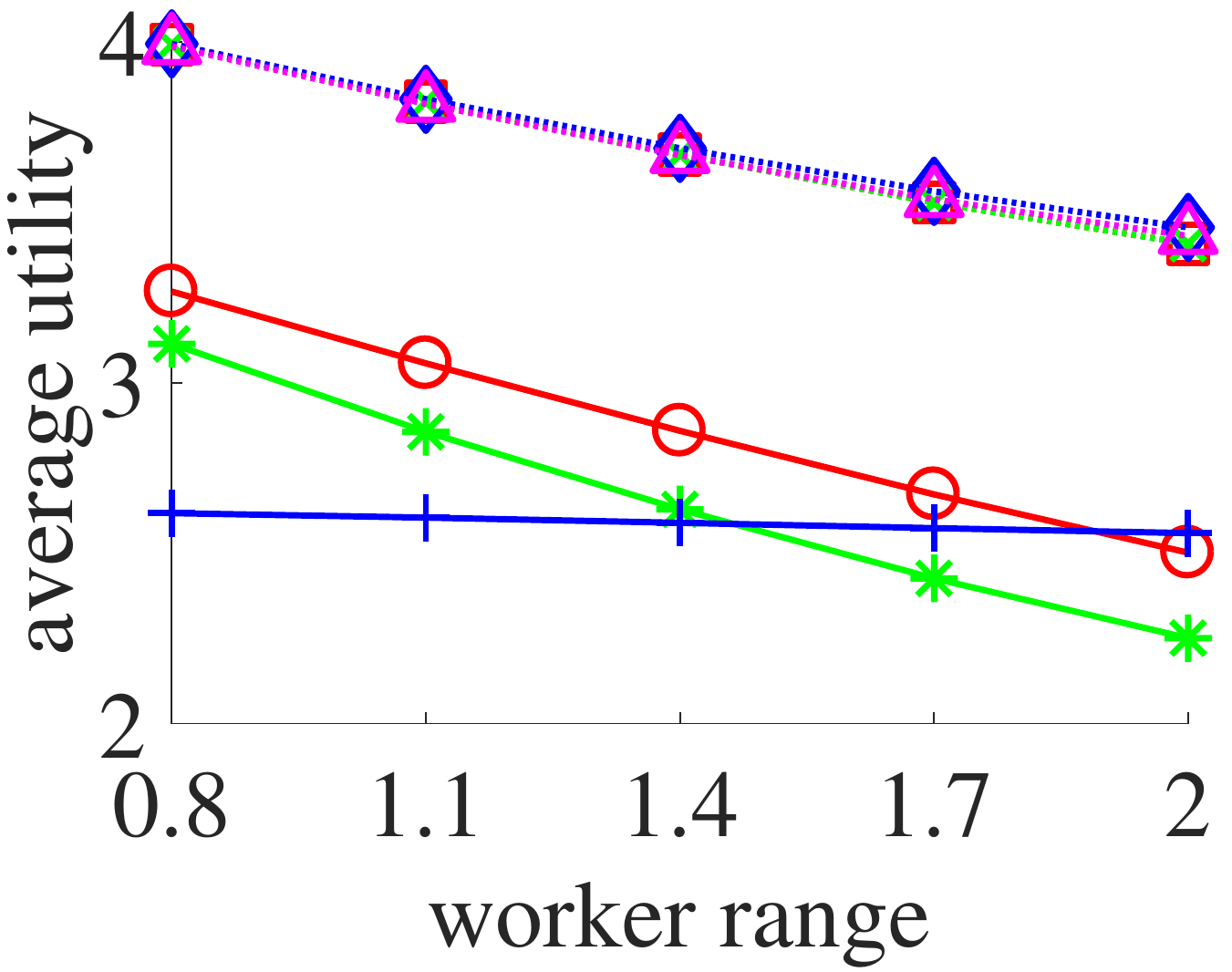}}
		\label{subfig:worker_range_utility_uniform}}
	\subfigure[][{\scriptsize Relative Deviation of Utility}]{
		\scalebox{0.25}[0.25]{\includegraphics{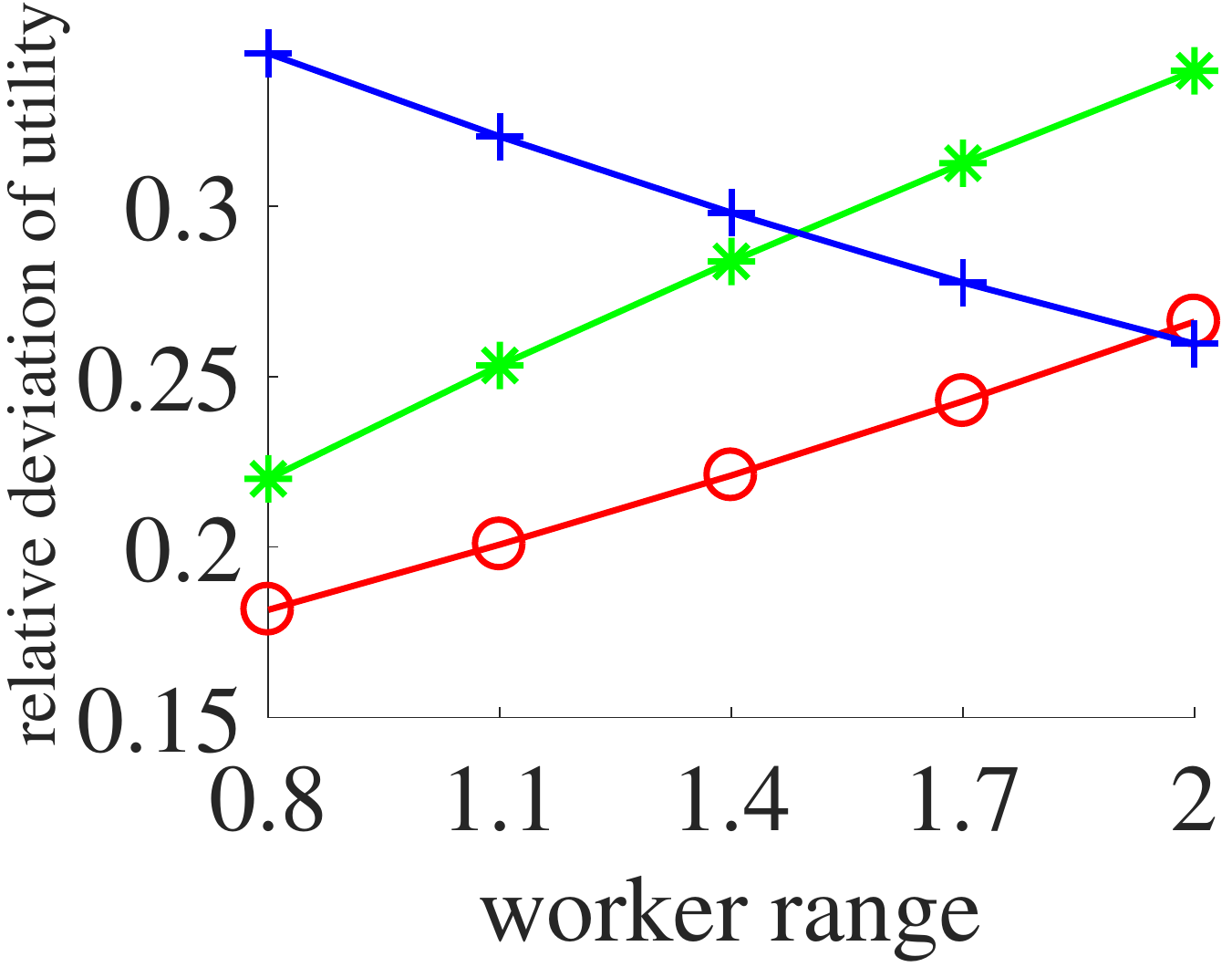}}
		\label{subfig:worker_range_utility_deviation_uniform}}
	\caption{\small The impact of the worker range on the utility for \emph{uniform}.}
	\label{fig:worker_range_utility_uniform}
\end{figure}

The impact of worker ratio on utility is shown in Figure~\ref{fig:worker_ratio_utility_uniform}.

\begin{figure}[ht!]\centering 
\subfigure{
  \scalebox{0.33}[0.33]{\includegraphics{bar2-eps-converted-to.pdf}}}\hfill\\
\addtocounter{subfigure}{-1}\vspace{-2ex}
	\subfigure[][{\scriptsize Average Utility}]{
		\scalebox{0.25}[0.25]{\includegraphics{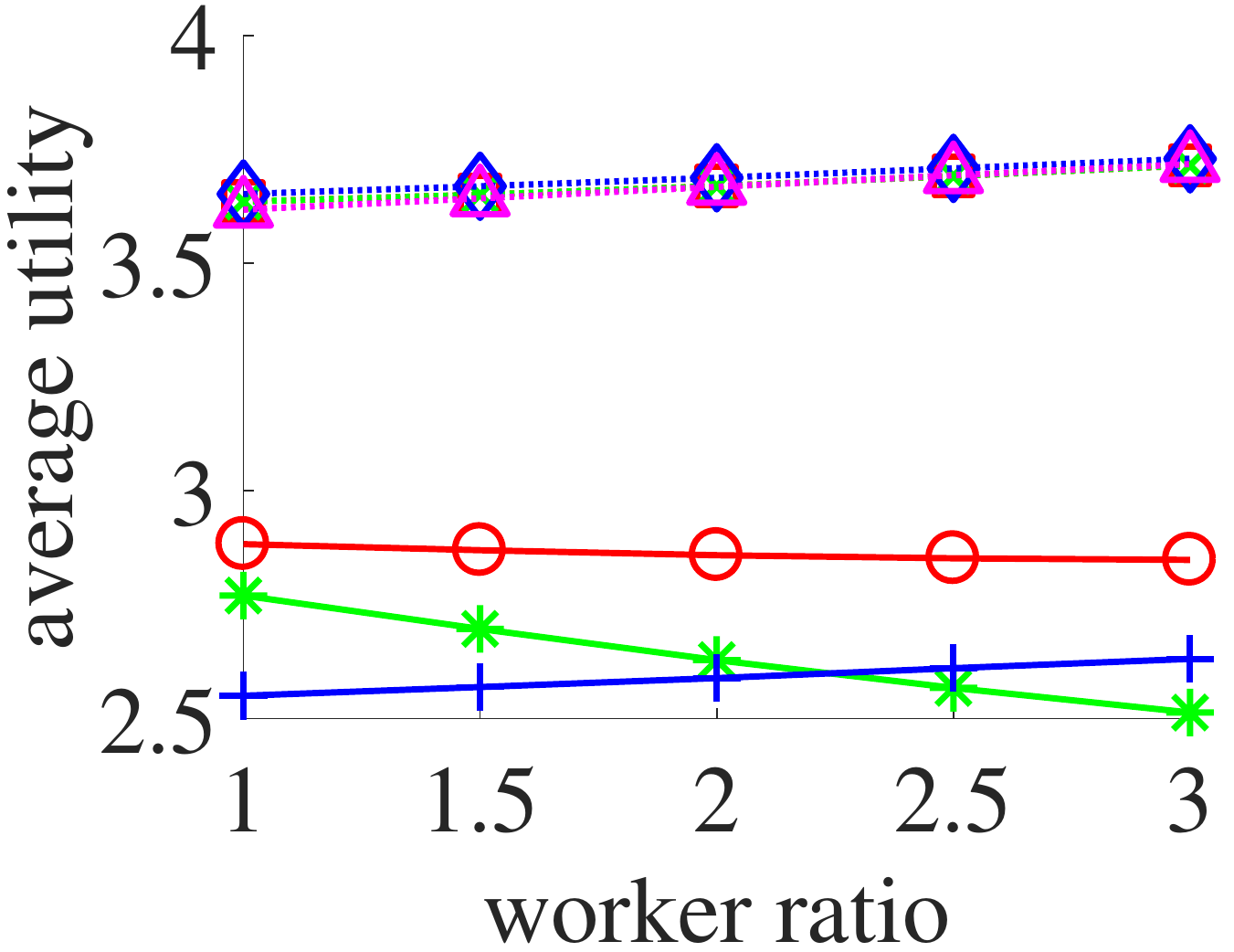}}
		\label{subfig:worker_ratio_utility_uniform}}
	\subfigure[][{\scriptsize Relative Deviation of Utility}]{
		\scalebox{0.25}[0.25]{\includegraphics{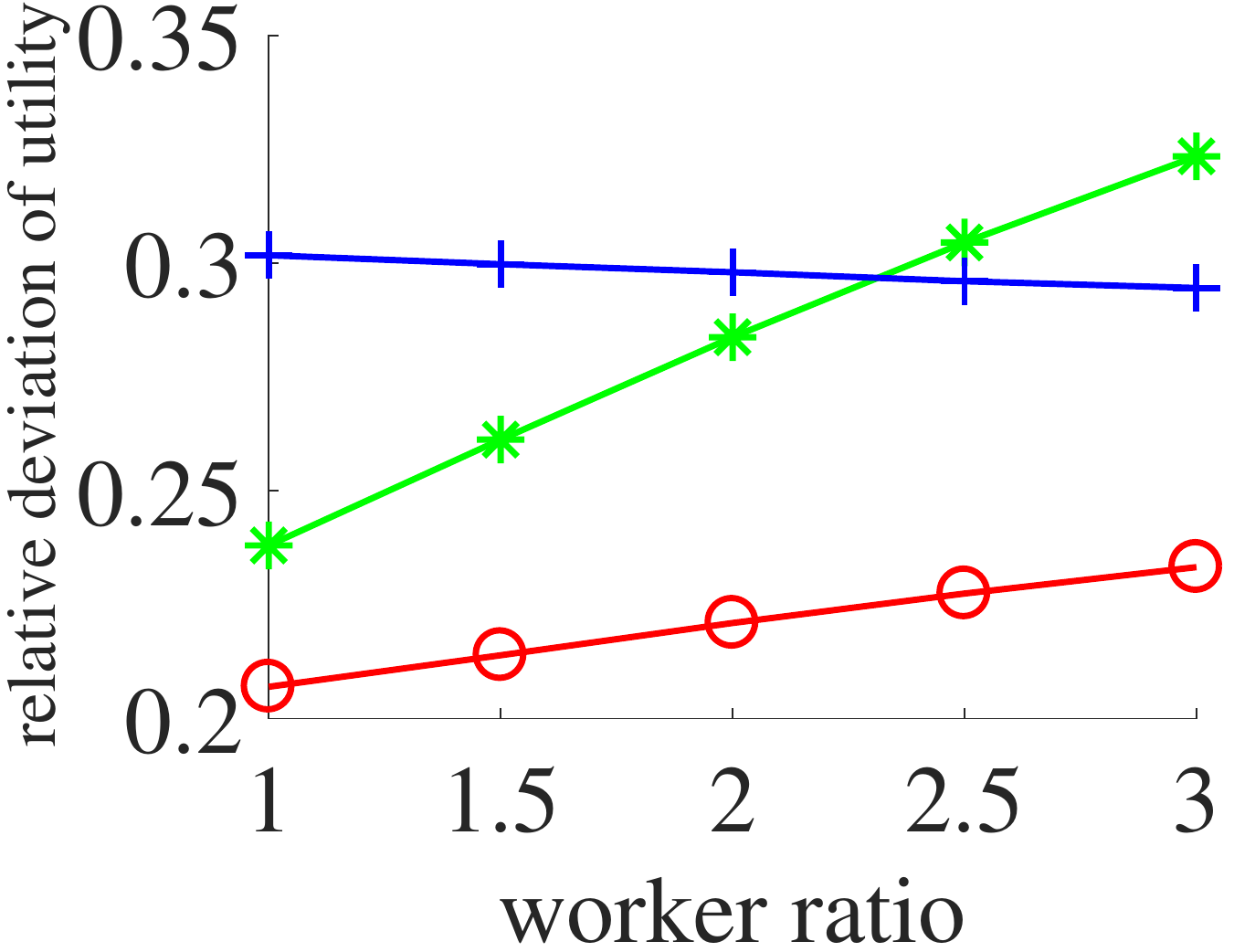}}
		\label{subfig:worker_ratio_utility_deviation_uniform}}\figureCaptionMargin
	\caption{\small The impact of the worker ratio on the utility for \emph{uniform}.}
	\label{fig:worker_ratio_utility_uniform}
\end{figure}

The impact of task value on travel distance is shown in Figure~\ref{fig:task_value_distance_uniform}.
\begin{figure}[ht!]\centering 
\subfigure{
  \scalebox{0.33}[0.33]{\includegraphics{bar2-eps-converted-to.pdf}}}\hfill\\
\addtocounter{subfigure}{-1}\vspace{-2ex}
	\subfigure[][{\scriptsize Average Distance}]{
		\scalebox{0.25}[0.25]{\includegraphics{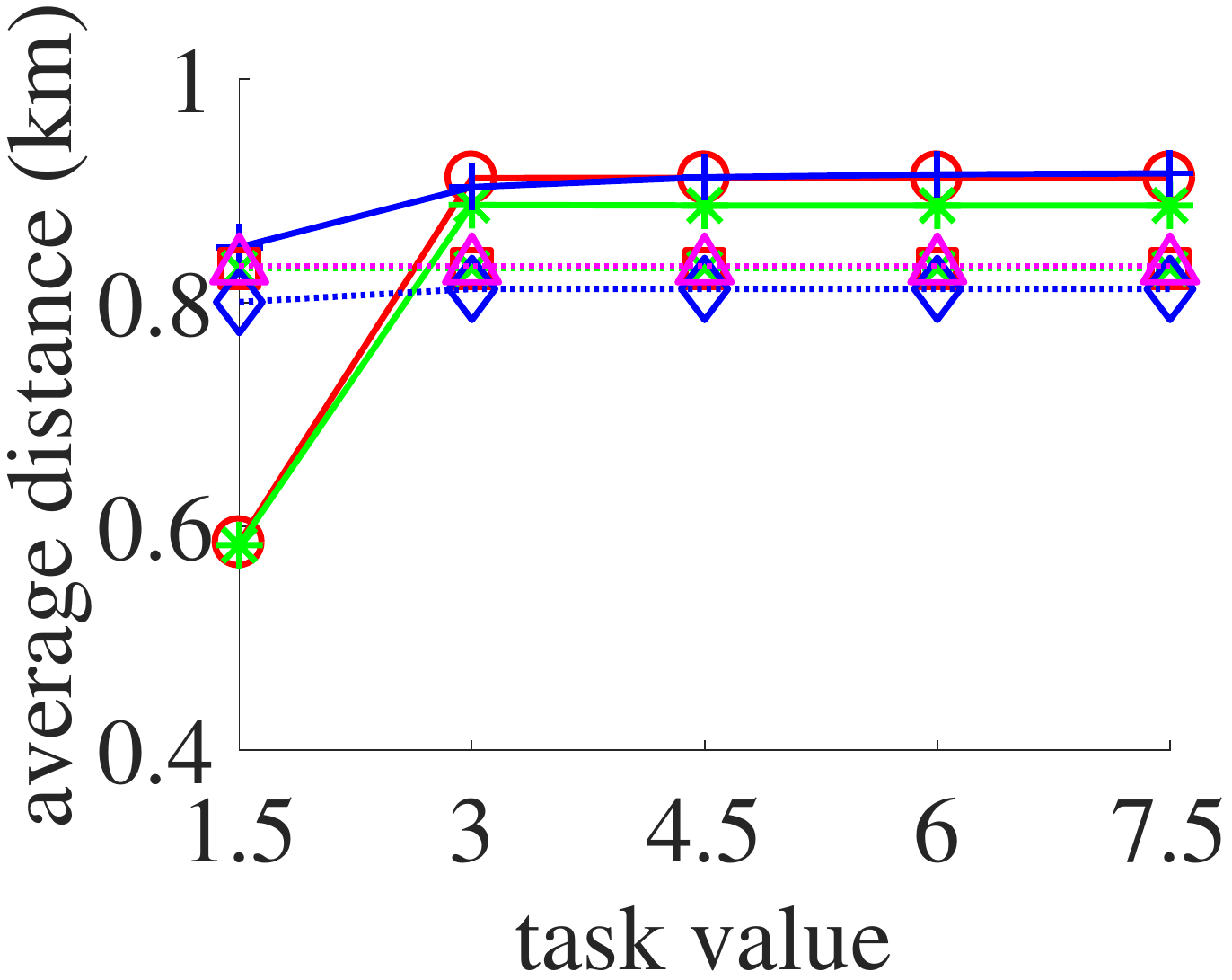}}
		\label{subfig:task_value_distance_uniform}}
	\subfigure[][{\scriptsize Relative Deviation of Distance}]{
		\scalebox{0.25}[0.25]{\includegraphics{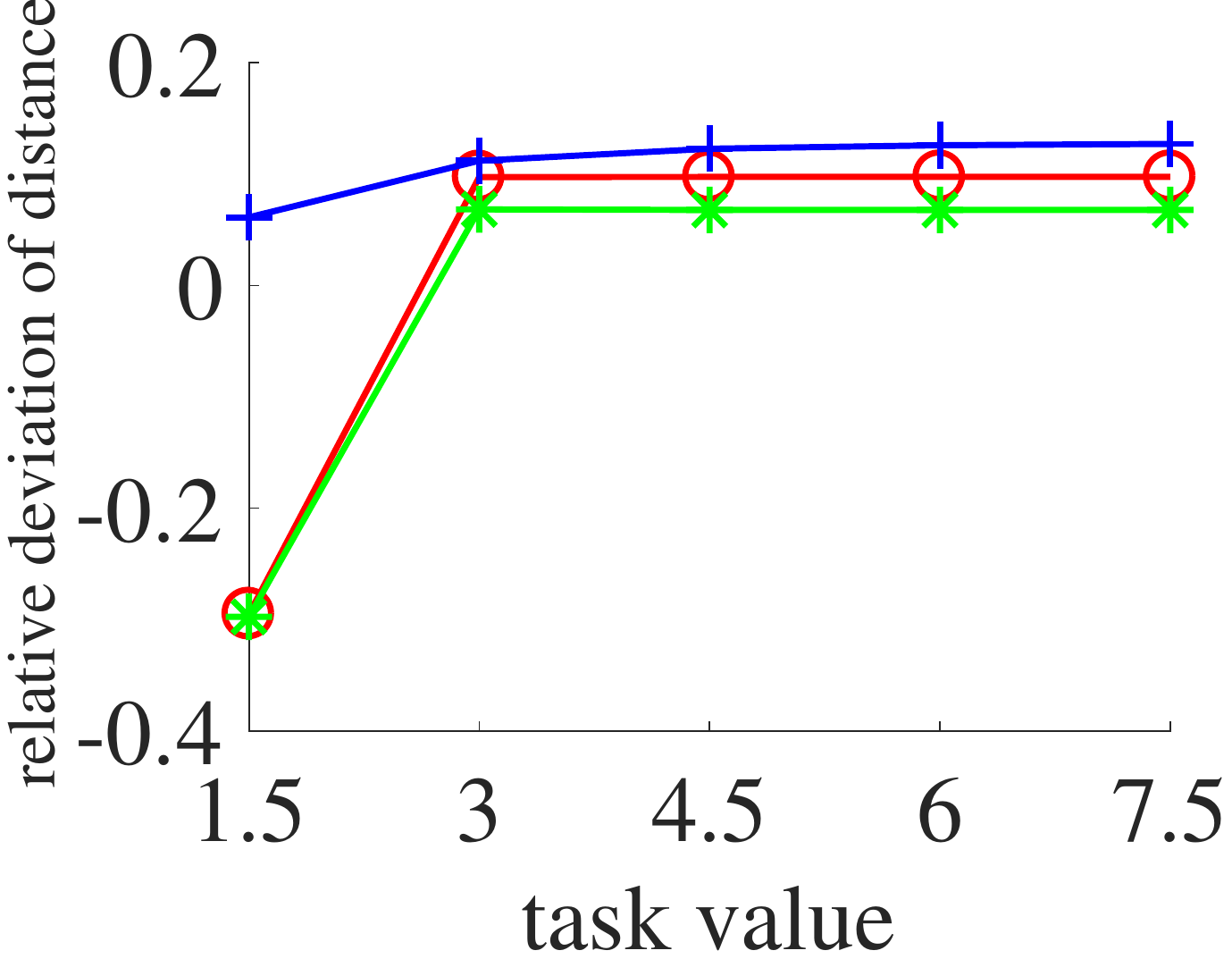}}
		\label{subfig:task_value_distance_deviation_uniform}}\figureCaptionMargin
	\caption{\small The impact of the task value on the distance for \emph{uniform}.}
	\label{fig:task_value_distance_uniform}
\end{figure}

The impact of worker range on travel distance is shown in Figure~\ref{fig:worker_range_distance_uniform}.
\begin{figure}[ht!]\centering 
\subfigure{
  \scalebox{0.33}[0.33]{\includegraphics{bar2-eps-converted-to.pdf}}}\hfill\\
\addtocounter{subfigure}{-1}\vspace{-2ex}
	\subfigure[][{\scriptsize Average Distance}]{
		\scalebox{0.25}[0.25]{\includegraphics{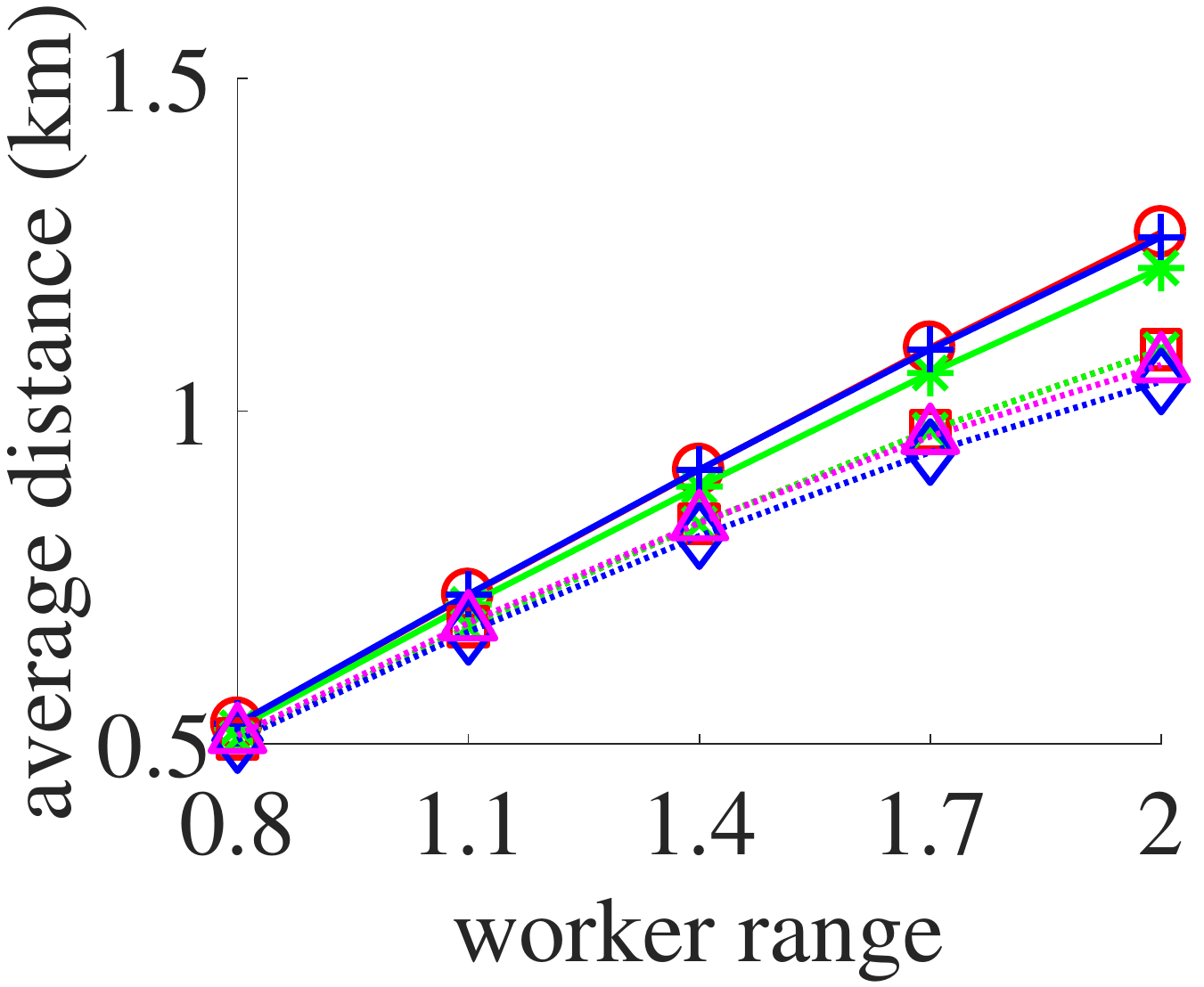}}
		\label{subfig:worker_range_distance_uniform}}
	\subfigure[][{\scriptsize Relative Deviation of Distance}]{
		\scalebox{0.25}[0.25]{\includegraphics{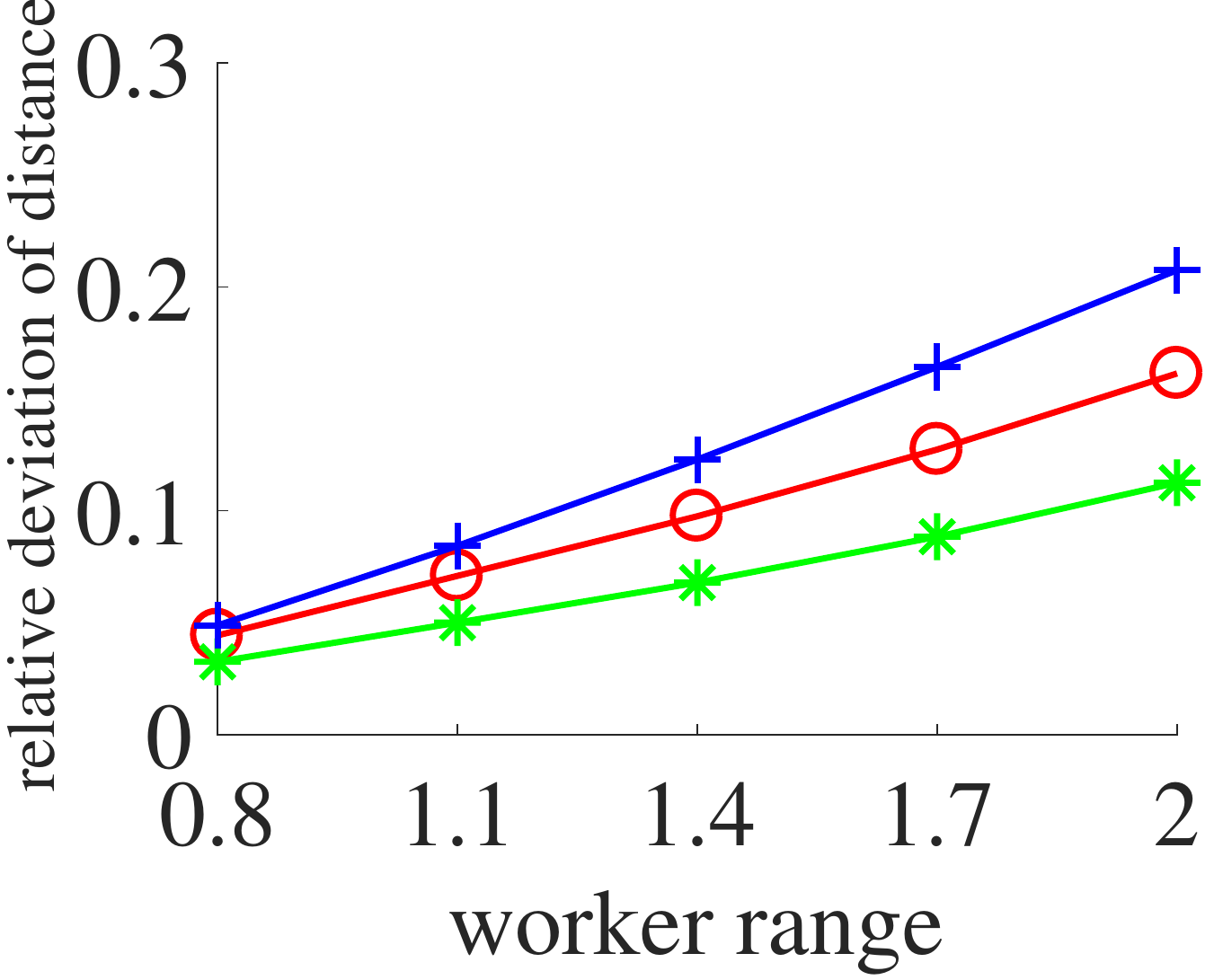}}
		\label{subfig:worker_range_distance_deviation_uniform}}\figureCaptionMargin
	\caption{\small The impact of the worker ratio on the distance for \emph{uniform}.}
	\label{fig:worker_range_distance_uniform}
\end{figure}

The impact of worker ratio on travel distance is shown in Figure~\ref{fig:worker_ratio_distance_uniform}.
\begin{figure}[ht!]\centering
\subfigure{
  \scalebox{0.33}[0.33]{\includegraphics{bar2-eps-converted-to.pdf}}}\hfill\\
\addtocounter{subfigure}{-1}\vspace{-2ex}
	\subfigure[][{\scriptsize Average Distance}]{
		\scalebox{0.25}[0.25]{\includegraphics{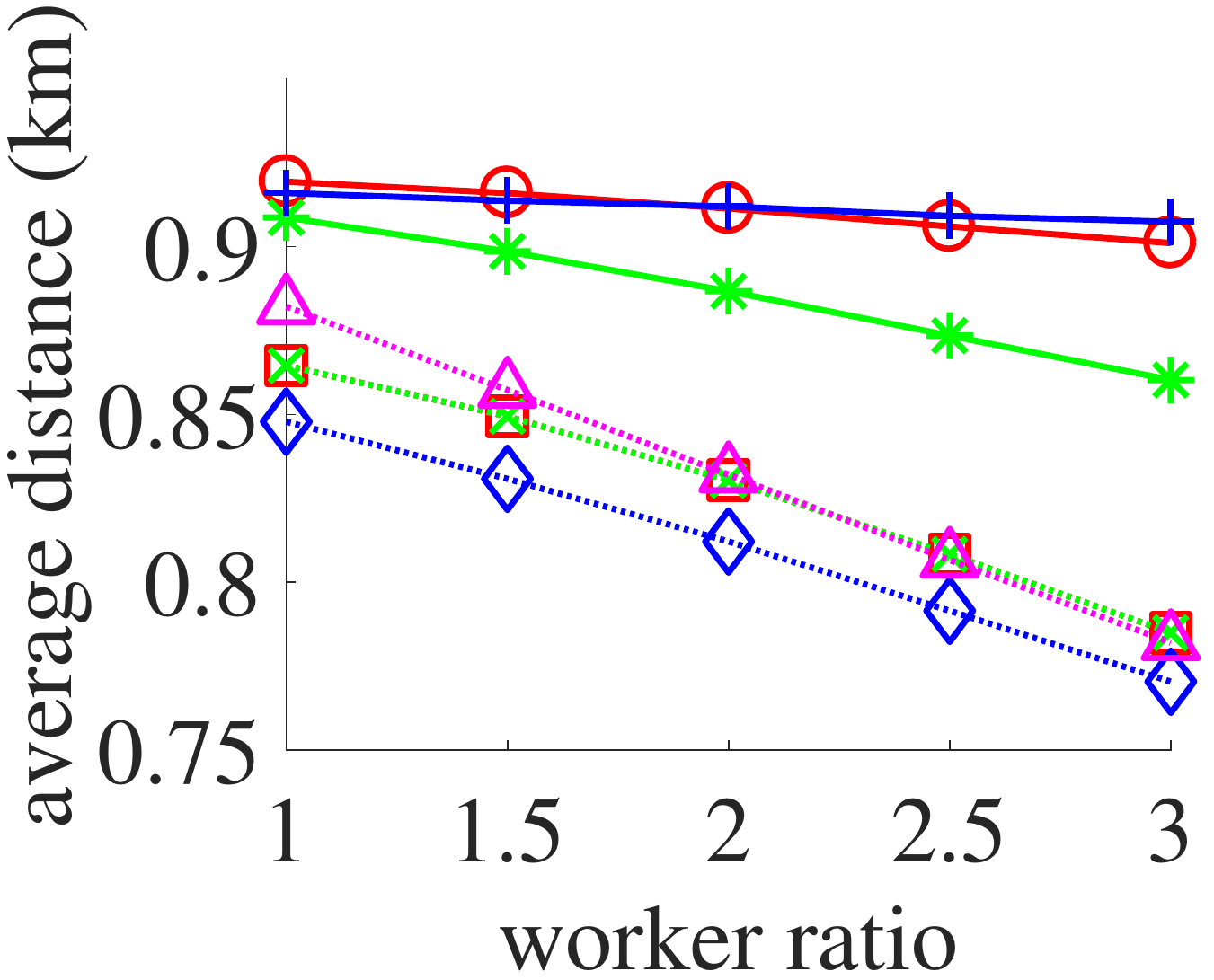}}
		\label{subfig:worker_ratio_distance_uniform}}
	\subfigure[][{\scriptsize Relative Deviation of Distance}]{
        \scalebox{0.25}[0.25]{\includegraphics{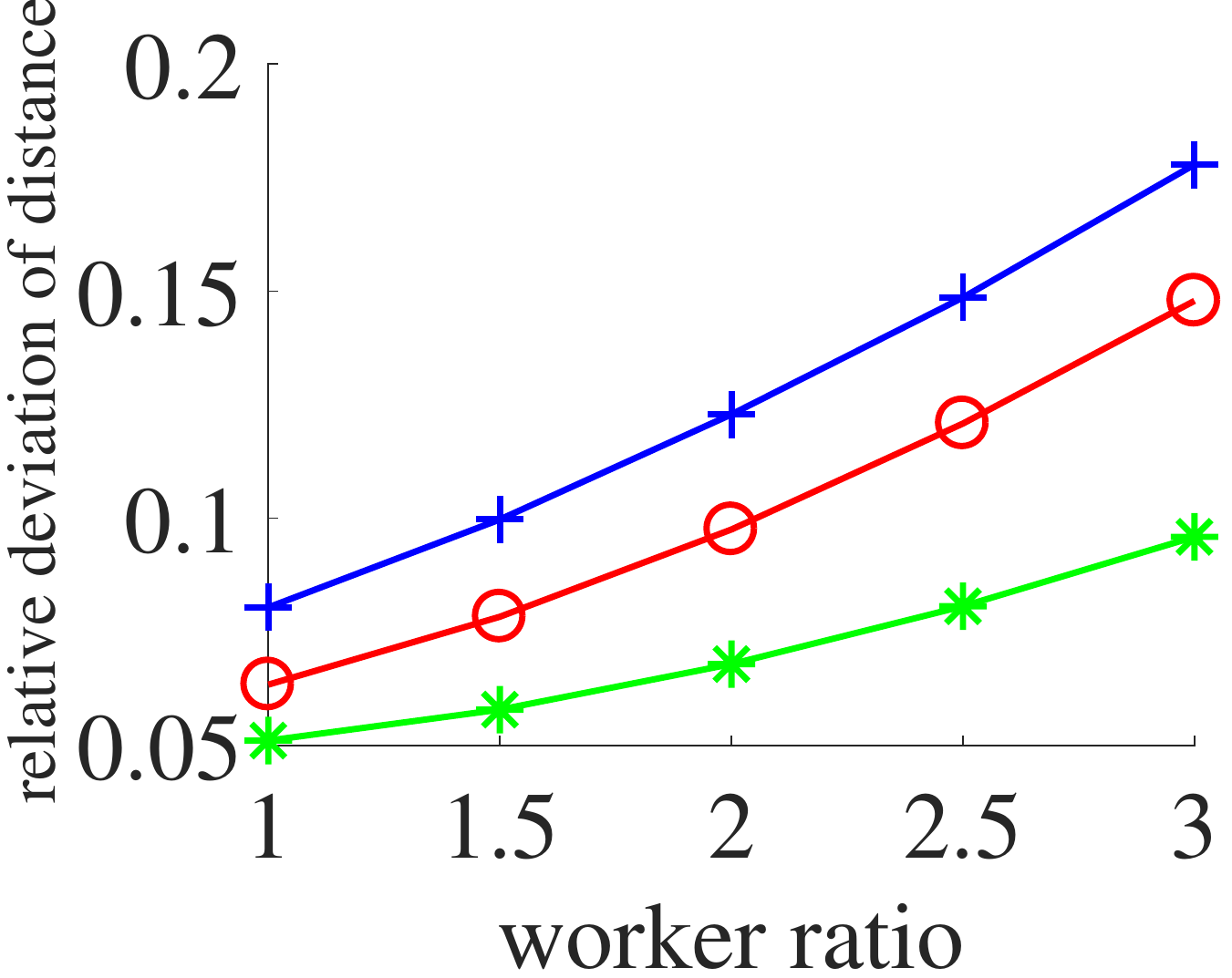}}
		\label{subfig:worker_ratio_distance_deviation_uniform}}\figureCaptionMargin
	\caption{\small The impact of the worker ratio on the distance for \emph{uniform}.}
	\label{fig:worker_ratio_distance_uniform}
\end{figure}

The impact of worker ratio on PPCF and non-PPCF is shown in Figure~\ref{fig:privacy_budget_utility_just_for_uniform}.
\begin{figure}[ht!]\centering 
\subfigure{
  \scalebox{0.33}[0.33]{\includegraphics{bar3-eps-converted-to.pdf}}}\hfill\\
\addtocounter{subfigure}{-1}\vspace{-2ex}
	\subfigure[][{\scriptsize Average Distance}]{
		\scalebox{0.25}[0.25]{\includegraphics{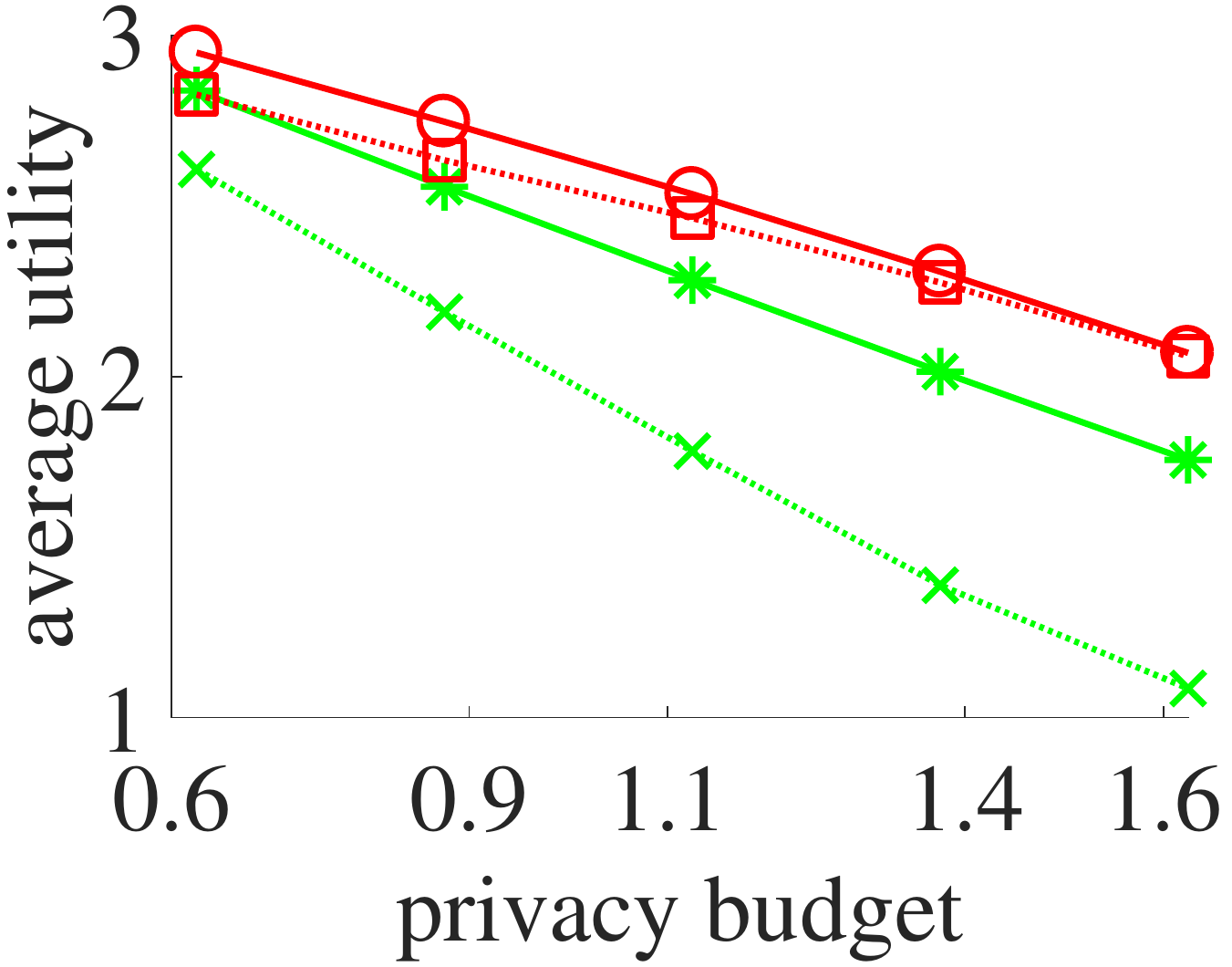}}
		\label{subfig:privacy_budget_utility_uniform}}
	\caption{\small The impact of privacy on the utility.}
	\label{fig:privacy_budget_utility_just_for_uniform}
\end{figure}

\subsection{Additional Definitions}
	\begin{definition}[One-to-one Match]\label{O2OM} Let {$G=(U,E,V)$} be a bipartite graph, and {$M\subseteq E$} be a match in {$G$}.
		{$M$} is called a one-to-one match if for any two different edges {$e_{u,v},e_{u',v'}\in M$},  {$e_{u,v}\cap e_{u',v'}=\phi$}.
	\end{definition}

	Let $S_{G}(M)=[s_{u,v}]_{u\in U, v\in V}$ denote a state matrix of $M$. Here, $s_{u,v}=1$ when $e_{u,v}\in M$, $s_{u,v}=1$; otherwise, $s_{u,v}=0$.

	We utilize Differential Privacy \cite{DBLP:journals/fttcs/DworkR14} to disturb the raw data and measure the privacy cost of workers' proposals for tasks.

\begin{definition}[Differential Privacy \cite{DBLP:journals/fttcs/DworkR14}, DP]\label{def_DP} A randomized algorithm {$\algvar{A}$} with domain {$\constvar{N}^{|\algvar{X}|}$} is {$(\epsilon, \delta)$}-differential private if for all {$\algvar{S}\subseteq \textrm{Range}(\algvar{A})$} and for all {$x,y\in\constvar{N}^{|\algvar{X}|}$}
	such that {$||x-y||_1\leq 1$}:
	{\scriptsize$$\textrm{Pr}[\algvar{A}(x)\in\algvar{S}]\leq\textrm{exp}(\epsilon)\textrm{Pr}[\algvar{A}(y)\in\algvar{S}]+\delta,$$}
where $||\cdot||_1$ is the $\ell_1$ norm of an vector, $\textrm{Pr}[\cdot]$ denotes the probability of an event.
	Especially, when {$\delta=0$}, {$\algvar{A}$} is {$\epsilon$}-differential private.
\end{definition}

	When $x$ and $y$ consists of single element, $\algvar{S}$ is also called a local randomizer, which provides \emph{local differential privacy} (LDP) guarantees \cite{DBLP:conf/sigmod/Liew0TK0Y22}.

	The Laplace mechanism \cite{DBLP:journals/fttcs/DworkR14} is the most well know perturbation methods for numeric values that satisfy the definition of differential privacy.
	Given a function $f$ outputting a numeric value vector $f(\cdot)$, the laplace mechanism is able to transform $f$ into a differentially private algorithm by adding random noise to each entry of $f(\cdot)$.
	The random noise is sampled from laplace distribution.

	The scale of the random noise is relevant to the $\ell_1$-sensitivity of $f$ as well as a predetermined privacy budget $\epsilon$.
	The $\ell_1$-sensitivity of $f$ is defined as the maximum possible difference between any two vector $x$ and $y$ with their $\ell_1$ distance as 1:

	\begin{definition}[{$\ell_1$}-sensitivity \cite{DBLP:journals/fttcs/DworkR14}] The {$\ell_1$}-sensitivity of a function  {$f:$} {$\constvar{N}^{|\algvar{X}|}\to\constvar{R}^{k}$} is:\vspace{-1ex}
		{\small$$\Delta f = \textrm{max}_{x,y\in\constvar{N}^{|\algvar{X}|},||x-y||_1=1}||f(x)-f(y)||_1.$$}\end{definition}

Thus, the laplace mechanism is stated as follows:
\begin{definition}[The Laplace Mechanism \cite{DBLP:journals/fttcs/DworkR14}] Given any function {$f:$} {$\constvar{N}^{|\algvar{X}|}\to\constvar{R}^{k}$}, the Laplace mechanism is defined as:\vspace{-2ex}
	{\small$$\algvar{A}_L(x,f(\cdot),\epsilon)=f(x)+(Y_1,...,Y_k)$$}
	where {$Y_i$} ($i\in[k]$) is an i.i.d. random variable drawn from  {$Lap(\Delta f/\epsilon)$}.
\end{definition}

			In order to compare two disturbed value conveniently, we introduce the Probability Compare Function \cite{DBLP:journals/tmc/WangHLWWYQ19} in Definition~\ref{def_PCF}.

\end{document}